\newtheorem{theorem}{Theorem}
\newtheorem{lemma}{Lemma}
\newtheorem{proposition}{Proposition}
\newtheorem{remark}{Remark}
\newtheorem{assumption}{Assumption}
\crefname{theorem}{Theorem}{Theorems}
\Crefname{theorem}{Theorem}{Theorems}
\crefname{definition}{Definition}{Definitions}
\Crefname{definition}{Definition}{Definitions}
\crefname{lemma}{Lemma}{Lemmas}
\Crefname{lemma}{Lemma}{Lemmas}
\crefname{corollary}{Corollary}{Corollaries}
\Crefname{corollary}{Corollary}{Corollaries}
\crefname{proposition}{Proposition}{Propositions}
\Crefname{proposition}{Proposition}{Propositions}
\crefname{remark}{Remark}{Remarks}
\Crefname{remark}{Remark}{Remarks}
\crefname{problem}{Problem}{Problems}
\Crefname{problem}{Problem}{Problems}
\crefname{assumption}{Assumption}{Assumptions}
\Crefname{assumption}{Assumption}{Assumptions}
\crefname{algocf}{Algorithm}{Algorithms}
\Crefname{algocf}{Algorithm}{Algorithms}
\crefname{algocfline}{Algorithm}{Algorithms}
\Crefname{algocfline}{Algorithm}{Algorithms}
\begin{document}

\title{Extended Parameter Shift Rules with Minimal Derivative Variance for Parameterized Quantum Circuits}


\author{Zhijian Lai}
\email{lai_zhijian@pku.edu.cn}
\homepage{https://galvinlai.github.io}
\affiliation{Beijing International Center for Mathematical Research, Peking University, Beijing, China}

\author{Jiang Hu}
\email{jianghu@tsinghua.edu.cn}
\homepage{https://hujiangpku.github.io}
\affiliation{Yau Mathematical Sciences Center, Tsinghua University, Beijing, China}

\author{Dong An}
\email{dongan@pku.edu.cn}
\homepage{https://dong-an.github.io}
\affiliation{Beijing International Center for Mathematical Research, Peking University, Beijing, China}

\author{Zaiwen Wen}
\email{wenzw@pku.edu.cn}
\homepage{http://faculty.bicmr.pku.edu.cn/~wenzw}
\affiliation{Beijing International Center for Mathematical Research, Peking University, Beijing, China}

\date{\today}

\begin{abstract}
Parameter shift rules (PSRs) are useful methods for computing arbitrary-order derivatives of the cost function in parameterized quantum circuits. The basic idea of PSRs is to evaluate the cost function at different parameter shifts, then use specific coefficients to combine them linearly to obtain the exact derivatives. In this work, we propose an extended parameter shift rule (EPSR) which generalizes a broad range of existing PSRs and has the following two advantages. First, EPSR offers an infinite number of possible parameter shifts, allowing the selection of the optimal parameter shifts to minimize the final derivative variance and thereby obtaining the more accurate derivative estimates with limited quantum resources. Second, EPSR extends the scope of the PSRs in the sense that EPSR can handle arbitrary Hermitian operator $H$ in gate $U(x) = \exp (iHx)$ in the parameterized quantum circuits, while existing PSRs are valid only for simple Hermitian generators $H$ such as simple Pauli words. Additionally, we show that the widely used ``general PSR'', introduced by Wierichs et al. (2022), is a special case of our EPSR, and we prove that it yields globally optimal shifts for minimizing the derivative variance under the weighted-shot scheme. Finally, through numerical simulations, we demonstrate the effectiveness of EPSR and show that the usage of the optimal parameter shifts indeed leads to more accurate derivative estimates. 
\end{abstract}


\maketitle

\section{Introduction}

Parameterized quantum circuits (PQCs) serve as the foundational components of many hybrid quantum-classical approaches, particularly within variational quantum algorithms (VQAs) \cite{cerezo2021variational, moll2018quantum} and quantum machine learning (QML) \cite{biamonte2017quantum,cerezo2022challenges}. In the VQA framework, the variational quantum eigensolver (VQE) \cite{peruzzo2014variational, kandala2017hardware,grimsley2019adaptive,Shang2023Schrodinger,Yuan2019theory,McArdle2019variational,Endo2020variational} not only efficiently estimates the ground-state energies of molecular systems, but also simulates quantum dynamics, while the quantum approximate optimization algorithm (QAOA) \cite{farhi2014quantum,zhou2020quantum,blekos2024review} offers distinct advantages for tackling combinatorial optimization problems. At the same time, QML uses PQCs for a variety of tasks, including classification, regression, and generative modeling \cite{schuld2019quantum,schuld2020circuit, perdomo2018opportunities,killoran2019continuous,benedetti2019parameterized}. Quantum neural networks (QNNs) \cite{perez2020data,yu2022power,havlivcek2019supervised,abbas2021power}, a prominent subclass of QML models, embed classical data into quantum states or gate spaces and use trainable quantum circuits to learn target functions.

In all these approaches, a quantum circuit is expressed in terms of tunable classical parameters. The circuit is executed on quantum hardware, its outputs are measured, and the results are used to compute a cost function that quantifies performance. A classical optimization algorithm then adjusts the parameters over successive iterations to reduce the cost and enhance the outcome.
Specifically, in this study, we focus on a $q$-qubit quantum system with $N: =2^q$. Without loss of generality, the process of determining the parameters in PQCs essentially translates into solving the following unconstrained optimization problem:
\begin{equation}\label{eq:cost}
    \min _ {\boldsymbol{\theta} \in \mathbb{R}^m} f (\boldsymbol{\theta})=\langle 0|U (\boldsymbol{\theta})^{\dagger} M U (\boldsymbol{\theta})| 0\rangle. 
\end{equation}
Here, $U (\boldsymbol{\theta}) \in \mathbb{C}^{N \times N}$ denotes a PQC that relies on a set of classical parameters $\boldsymbol{\theta}=\left[ \theta_1, \ldots, \theta_m\right]^{T} \in \mathbb{R}^m$, $\ket{0}$ denotes the $q$-qubit state with all the qubits initialized at $0$, and $M \in \mathbb{C}^{N \times N}$ is a Hermitian operator. 
From quantum perspective, $f (\boldsymbol{\theta})$ is the expectation value of the observable $M$, measured with respect to the output state obtained by applying $U (\boldsymbol{\theta})$ onto $\ket{0}$. 
Consistent with numerous studies \cite{sweke2020stochastic, wierichs2022general, mari2021estimating, ding2024random}, we adopt the typical PQC structure as
\begin{equation}\label{eq-Utheta}
    U (\boldsymbol{\theta})=V_m U_m\left (\theta_m\right) \cdots V_1 U_1\left (\theta_1\right),
\end{equation}
where $V_j$'s are composed of fixed quantum gates independently of $\boldsymbol{\theta}$, and $U_j\left (\theta_j\right)$'s are rotation-like gates, defined by
\begin{equation}\label{eq-eiHtheta}
    U_j\left (\theta_j\right)=e^{i H_j \theta_j }, \quad j = 1, \ldots, m,
\end{equation}
for certain Hermitian generators $H_j \in \mathbb{C}^{N \times N}$. 
It should be noted that each $U_j$ is a single-parameter unitary and fully encapsulates the dependence on the univariate parameter $\theta_j \in \mathbb{R}$.

In the context of PQCs, classical optimization techniques play a crucial role.
Broadly, they can be categorized into three main classes: derivative-free methods, gradient-based methods, and structure optimization methods.
Derivative-free methods, such as COBYLA \cite{powell1994direct}, Nelder–Mead \cite{nelder1965simplex}, Powell \cite{powell1964efficient}, and SPSA \cite{spall2000adaptive}, update parameters through direct search, avoiding the need for explicit gradient information.
In contrast, gradient-based methods have attracted increasing attention owing to their advantages, including convergence guarantees and a rich set of well-developed algorithms, such as BFGS \cite{nocedal2006numerical}, L-BFGS \cite{byrd1995limited}, ADAM \cite{kingma2014adam}, AMSGrad \cite{reddi2019convergence}, quantum natural gradient \cite{stokes2020quantum}, and RCD \cite{ding2024random}.
Structure optimization methods \cite{ostaszewski2021structure,nakanishi2020sequential,jager2024fast,lai2025optimal} reconstruct the cost function with respect to single parameter $\theta_j$ via interpolation and subsequently perform a global argmin update on a classical computer.
In gradient-based methods, one typically needs to evaluate the full gradient $\nabla f(\boldsymbol{\theta})$ at each iteration, or at least certain partial derivatives $\partial_j f(\boldsymbol{\theta})$. Here, our interest lies in how such evaluations can be carried out on quantum hardware.

Due to the intrinsic characteristics of quantum circuits, the core technique in neural network training, namely automatic differentiation \cite{baydin2018automatic} and in particular the backpropagation algorithm \cite{rumelhart1986learning}, is difficult to implement on real quantum hardware.
Thus, various methods have been developed for gradient computation in PQCs.
The simplest is the finite difference (FD) method, but FD suffers from numerical instability and inaccuracy.
The Hadamard test (HT) method \cite{guerreschi2017practical,li2024efficient} enables gradient estimation by means of indirect measurements implemented via the Hadamard test; however, it requires additional ancillary qubits and frequent insertion of specific multi-qubit controlled gates into the original circuit, imposing greater demands on hardware implementation.
The parameter shift rule (PSR) \cite{crooks2019gradients,mari2021estimating,kyriienko2021generalized,wierichs2022general,hai2023lagrange,hoch2025variational} offers an efficient approach for computing analytical gradients without the need for ancillary qubits or circuit modifications.
PSR enables exact estimation of derivatives of arbitrary order by evaluating the cost function at multiple finitely shifted parameter configurations and forming an appropriate linear combination of the results.
Although formally similar to the FD method, PSR is exact rather than approximate.
Consequently, it has become the default choice for gradient evaluation in a broad range of PQCs, providing a solid foundation for gradient-based optimization methods.

However, the most frequently cited limitation of PSRs is that they can only be applied to circuits with very specific structures, e.g., where gate generator $H_j $ is simply a Pauli word, or $H_j = \sum \pm P_k$ for some commuting Pauli words $P_k$. This limitation forces practitioners to resort to methods such as FD or HT when dealing with complex circuits. The main contribution of this work is to extend PSR to the broadest possible class of circuits, requiring only that $H_j$ be a Hermitian operator, as is the case for the FD and HT methods.
To summarize, our work makes the following contributions:
\begin{enumerate}

  \item \textbf{Introduction of EPSR.} We propose a novel extension of the standard PSRs, referred to as the extended parameter shift rule (EPSR) and formally stated in \cref{thm:main-psr}, which enables the computation of arbitrary-order derivatives of the cost function in PQCs. Compared with existing PSRs, our EPSR offers the following advantages:
  \begin{enumerate}
      \item EPSR generalizes the many existing PSRs and can handle any Hermitian operator $H_j$, thereby broadening the applicability of PSR in practical quantum applications. In \cref{sec-examples}, we will show in detail that several important PSRs arise as special cases of our EPSR.
    \item The existing PSRs only provide a specific set of parameter shift values and therefore do not consider the question of their optimality. In contrast, EPSR offers an infinite family of possible shift values and determines the corresponding linear combination coefficients, enabling the selection of shifts that minimize derivative variance so as to mitigate statistical noise in quantum measurements. Thus, it allows for more accurate derivative estimates under a fixed total number of shots.
\end{enumerate}
  \item \textbf{Global optimality of equidistant PSR for derivative variance minimization.} 
The so-called ``general PSR'', hereafter referred to as the ``equidistant PSR'' to highlight its essential characteristics, adopts shift values $x_i = \frac{\pi}{2 r} + (i-1) \frac{\pi}{r}$ ($i=1,\dots,2r$) as proposed in \cite{wierichs2022general} (also see \cref{eq:equidistant-shifts-1st,eq-general-eq-psr-1st}). This widely used type of PSR will later be shown to be a specific instance of our EPSR.
From the perspective of derivative variance minimization, we offer a new theoretical interpretation of the equidistant PSR.
Specifically, we rigorously prove that the equidistant PSR under the weighted-shot scheme is the globally optimal parameter shift for minimizing derivative variance. Here, the ``weighted-shot scheme'' allocates the number of shots in proportion to the magnitudes of the linear combination coefficients associated with different shifts, thereby achieving the best allocation under a fixed total number of shots. 
In contrast, most users typically use a constant number of shots for different shifts in their code implementations. We show that, in this case, the equidistant PSR does not achieve the minimum derivative variance, thereby revealing a common and easily made mistake. 

\end{enumerate}

It is worth noting that the above considerations regarding the minimization of derivative variance (i.e., the analysis of optimal shifts) are based on the constant-variance assumption, meaning that the variance of a single measurement is assumed to be independent of the shift value. 
This assumption is commonly adopted in the literature \cite{mari2021estimating,wierichs2022general,markovich2024parameter} to simplify the problem, despite not holding in practice.

We provide both empirical and theoretical support for this constant-variance assumption.
In \cref{sec-experiments}, using the Qiskit simulator, we conduct numerical experiments to validate the conclusions derived under this assumption. The results show that, although the assumption is not strictly satisfied, it remains reasonable in practice, and that employing the optimal parameter shifts in EPSR indeed yields more accurate derivative estimates.
In \cref{app-error-bounds}, we rigorously characterize the discrepancy between the variance minimized under the constant-variance assumption and the true minimal variance when this assumption is removed. The resulting bounds offer qualified theoretical justification for adopting the constant-variance assumption within our framework.

\subsection{Organization}
The rest of this paper is organized as follows. 
In \cref{sec:EPSR}, we present the EPSR as the main result of this work. We then demonstrate that EPSR encompasses a wide range of existing PSR approaches, particularly the equidistant PSR \cite{wierichs2022general}.
In \cref{sec:Derivations}, we provide the derivation of the EPSR.
In \cref{sec:optimal}, we discuss the optimal shifts (termed ``nodes'' in our context) that minimize the derivatives variance. 
In \cref{sec-experiments}, we present numerical experiments applying EPSR to the XXZ model with HVA circuit to validate the conclusions drawn in earlier sections.
Finally, we conclude the paper in \cref{sec-discussion} with a summary and potential directions for future work.

\section{Extended parameter shift rules}\label{sec:EPSR}

In this section, we will propose the extended parameter shift rules. 
We start with an important characteristic of the cost function in \cref{eq:cost} that it can be represented as a trigonometric polynomial, which is crucial for the development of EPSR. 

\subsection{Trigonometric representation}

Consider a parameter vector $\boldsymbol{\theta} \in \mathbb{R}^m$, where all components are fixed except for $\theta_j$ ($j = 1, \dots, m$). When computing the partial derivative for single variable $\theta_j$, operations unrelated to $\theta_j$ can be incorporated into the input state and the observable. This results in the following univariate cost function:
\begin{equation}\label{eq-f-thetaj}
    \theta_j \mapsto f\left (\theta_j\right)
    =\langle\psi|U_j (\theta_j)^{\dagger} C U_j (\theta_j)| \psi\rangle,
\end{equation}
where $|\psi\rangle: = V_{j-1} U_{j-1}\left (\theta_{j-1}\right) \cdots V_1 U_1\left (\theta_1\right) |0\rangle$ represents the state prepared by the subcircuit preceding $U_j\left (\theta_j\right)$, and $C: =V_j^{\dagger} \cdots U_m\left (\theta_m\right)^{\dagger} V_m^{\dagger} M V_m U_m\left (\theta_m\right) \cdots V_j$ is a temporary observable obtained by conjugating $M$ with the subcircuit following $U_j\left(\theta_j\right)$. To simplify the description, we will use letter $x$ to represent $\theta_j$, omitting the subscript $j$ everywhere. 
Thus, our task reduces to computing the derivative of the following univariate function:
\begin{equation}\label{eq-cost-fun}
    f (x) = \left\langle \psi \middle| U (x)^{\dagger} C U (x) \middle| \psi \right\rangle,
\end{equation}
where $C$ is a given observable, $|\psi\rangle$ denotes a fixed quantum state, and $U (x)= e^{i H x}$. 

The eigenvalues of the generator $H$ in \cref{eq-cost-fun} are denoted by $\{\lambda_l \}_{l=1}^{N}$. 
According to \cite{wierichs2022general}, this cost function can be written as a trigonometric polynomial or finite Fourier series (real version):
\begin{equation}\label{eq:trig}
    f (x) = a_0 + \sum_{k=1}^{r} \left[ a_k \cos (\Omega_k x) + b_k \sin (\Omega_k x) \right],
\end{equation}
where $a_0, a_k$ and $b_k$ are some real coefficients, the \textit{frequencies set} $\{\Omega_k\}_{k=1}^{r}$ consists of all positive differences between eigenvalues of $H$, defined as
\begin{equation}\label{eq-Omega-set}
   \operatorname{Frq}^{+}(H) := \left\{\Omega_k \right\}_{k=1}^{r}= \left\{ | \lambda_j - \lambda_l |: \lambda_j \neq \lambda_l, \; \forall j, l = 1, \ldots, N \right\},
\end{equation}
and $r := |\left\{ \Omega_k \right\}|$. Note that this frequency set is completely determined by generator $H$.

As emphasized in \cite{nemkov2023fourier,fontana2022efficient,okumura2023fourier,stkechly2023connecting}, the PQC cost function in \cref{eq:cost} can be interpreted as a truncated multivariate Fourier series. In its complex representation, it takes the form
\begin{equation}
    f(\boldsymbol{\theta}) = \sum_{\boldsymbol{\omega} \in \mathbb{R}^m,\, \omega_j \in \operatorname{Frq}(H_j)} c_{\boldsymbol{\omega}} e^{i \boldsymbol{\omega} \cdot \boldsymbol{\theta}} 
\end{equation}
where $c_{\boldsymbol{\omega}} \in \mathbb{C}$ are complex coefficients, $\boldsymbol{\omega} \cdot \boldsymbol{\theta} :=\sum_{j=1}^m \omega_j \theta_j$, and $\operatorname{Frq}(H) := \left\{  \lambda_j - \lambda_l : \lambda_j ,\lambda_l \text{ are eigenvalues of } H\right\}$. This fact is also highly relevant to theoretical studies on the expressive power of quantum neural networks (QNNs) \cite{schuld2021effect,yu2024non}.
By focusing on a single parameter and applying Euler formula, this complex representation reduces to a real expression as shown in \cref{eq:trig}. The appendices of \cite{wierichs2022general,lai2025optimal} have the detailed proofs for \cref{eq:trig}.
Incidentally, structured optimization algorithms \cite{ostaszewski2021structure,nakanishi2020sequential,jager2024fast,lai2025optimal} for minimizing $f(\boldsymbol{\theta})$ originate from this trigonometric polynomial structure.

Although our discussion concerns a general $H_j$ in \cref{eq-eiHtheta}, there is a particularly important class with extremely wide applications, namely those whose frequencies are equally spaced. Accordingly, we state the following assumption on the frequency set, to be used when required.

\begin{assumption}[Equidistant frequencies]\label{assm-equi}
For every $H_j$ in \cref{eq-eiHtheta}, we assume the frequencies $\{\Omega_{k}^{j}\}_{k  =1}^{r_j}$ are equidistant, i.e., $\Omega_{k}^{j}=k  \Omega^{j}$ $(k=1,\ldots,r_j)$ for some constant $\Omega^{j}$. Without loss of generality, we further restrict the frequencies to integer values, i.e., $\Omega_k^j=k$ $(k= 1, \ldots, r_j)$, since for $\Omega^j \neq 1$ we can rescale the function argument to achieve $\Omega_k^j=k$. Once the rescaled function is obtained, the original function can be readily recovered.
\end{assumption}

As a widely used variant of PSRs, the equidistant PSR \cite{wierichs2022general} falls entirely within the scope of \cref{assm-equi}; hence, any discussion of the equidistant PSR is made under this assumption.
To reconstruct the even and odd components of the cost function in \cref{eq:trig}, \cite{wierichs2022general} employed Dirichlet kernel interpolation, from which the equidistant PSR was derived.
Building on similar insights, we introduce the EPSR, which is likewise based on interpolation applied to \cref{eq:trig}.
However, the EPSR enjoys a significantly broader range of applicability, as it does not generally require \cref{assm-equi}. 
The detailed connections between the two will be presented in \cref{sec-examples}.

\begin{remark}\label{rem-single-frq}
When we focus on a single parameter $\theta_j$ (or write $x$) and thus omit the subscript or superscript $j$, \cref{assm-equi} just states that $\Omega_k=k$ ($k=1,\ldots,r$) in \cref{eq:trig}.
The simplest equidistant case occurs when $H_j$ is a single Pauli word, in which case the eigenvalues are $\pm 1$, and hence the frequency set is $\{2\}$.
Note, however, that in much of the literature the gate is defined with a prefactor $1 / 2$ in $e^{i H_j \theta_j / 2}$, which results in an actual equidistant frequency set $\{1\}$ in our context.
\end{remark}

\subsection{Main theorem: Extended parameter shift rules}

In the following, $f^{(d)}$ denotes the $d$-th order derivative of some univariate real-valued function $f$. The main result of this paper is presented in the following theorem. The derivations of \cref{thm:main-psr} will be given in \cref{sec:Derivations}. 

\begin{theorem}[Extended parameter shift rules (EPSR)]\label{thm:main-psr}
Let $d \geq 1$ be an integer.
Consider the cost function $f$ defined in \cref{eq-cost-fun}, equivalently represented in \cref{eq:trig}, where the frequency set $\{\Omega_k\}_{k=1}^r$ is specified in \cref{eq-Omega-set} and derived from the Hermitian generator $H.$
Let $\bar{x} \in \mathbb{R}$ be a fixed parameter at which the derivatives will be evaluated. 

1. For odd $d$, we have
\begin{equation}\label{eq-odd}
    f^{ (d)} (\bar{x}) =\frac{1}{2} \sum_{i=1}^{r} b_i (\mathbf{x}) \Bigl [f (\bar{x} + x_i) - f (\bar{x} - x_i)\Bigr],
\end{equation}
where shifts (or, nodes)\footnote{As shown later, the vector $\mathbf{x}$ represents (interpolation) nodes, commonly referred to as ``shifts'' in the PSR literature. We use the terms interchangeably.} vector $\mathbf{x} =[x_1, \dots, x_r]^T \in \mathbb{R}^r$ can be chosen arbitrarily, provided that the $r \times r$ matrix
\begin{equation}\label{eq-Ao}
    A_o (\mathbf{x}): = \begin{bmatrix}
    \sin (\Omega_1 x_1) & \sin (\Omega_2 x_1) & \cdots & \sin (\Omega_r x_1) \\
    \sin (\Omega_1 x_2) & \sin (\Omega_2 x_2) & \cdots & \sin (\Omega_r x_2) \\
    \vdots & \vdots & \ddots & \vdots \\
    \sin (\Omega_1 x_r) & \sin (\Omega_2 x_r) & \cdots & \sin (\Omega_r x_r)
    \end{bmatrix}
\end{equation}
is nonsingular. Here, the coefficient vector $b (\mathbf{x}) \in \mathbb{R}^r$ is determined by solving the linear equation
\begin{equation}
    b (\mathbf{x}) = \Bigl[A_o (\mathbf{x})^T\Bigr]^{-1} p^{ (d)},
\end{equation}
with $p^{ (d)}: = (-1)^{\frac{d-1}{2}}\left[\Omega_1^d, \Omega_2^d, \cdots, \Omega_r^d\right]^T \in \mathbb{R}^r$.

2. For even $d$, we have
\begin{equation}\label{eq-even}
    f^{ (d)} (\bar{x}) = \frac{1}{2}\sum_{i=0}^{r} b_i (\mathbf{x}) \Bigl [f (\bar{x} + x_i) + f (\bar{x} - x_i)\Bigr],
\end{equation}
where shifts vector $\mathbf{x} =[x_0, x_1, \dots, x_r]^T \in \mathbb{R}^{r+1}$ can be chosen arbitrarily, provided that the $(r+1) \times (r+1)$ matrix
\begin{equation}\label{eq-Ae}
    A_e (\mathbf{x}): = \begin{bmatrix}
    1 & \cos (\Omega_1 x_0) & \cos (\Omega_2 x_0) & \cdots & \cos (\Omega_r x_0) \\
    1 & \cos (\Omega_1 x_1) & \cos (\Omega_2 x_1) & \cdots & \cos (\Omega_r x_1) \\
    \vdots & \vdots & \vdots & \ddots & \vdots \\
    1 & \cos (\Omega_1 x_r) & \cos (\Omega_2 x_r) & \cdots & \cos (\Omega_r x_r)
    \end{bmatrix}
\end{equation}
is nonsingular. Here, the coefficient vector $b (\mathbf{x}) \in \mathbb{R}^{r+1}$ is determined by solving the linear equation
\begin{equation}
    b (\mathbf{x}) = \Bigl[A_e (\mathbf{x})^T\Bigr]^{-1} q^{ (d)},
\end{equation}
with $q^{ (d)}: = (-1)^{\frac{d}{2}}\left[\delta_{d, 0}, \Omega_1^d, \Omega_2^d, \cdots, \Omega_r^d\right]^T \in \mathbb{R}^{r+1}$. Here, Kronecker delta $\delta_{d, 0}$ equals 1 if $d$ is 0, and 0 otherwise.
\end{theorem}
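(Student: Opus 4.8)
The plan is to verify the two formulas by direct substitution, reducing each to a linear-algebraic identity that the proposed coefficient vector is designed to satisfy. The starting point is the trigonometric representation in \cref{eq:trig}, from which I would first compute $f^{(d)}$ in closed form. Differentiating $\cos(\Omega_k x)$ and $\sin(\Omega_k x)$ repeatedly and tracking the parity of $d$, one obtains for odd $d$ (write $d = 2s+1$)
\begin{equation}
f^{(d)}(\bar x) = (-1)^{s} \sum_{k=1}^{r} \Omega_k^{d}\bigl[-a_k \sin(\Omega_k \bar x) + b_k \cos(\Omega_k \bar x)\bigr],
\end{equation}
and for even $d = 2s$,
\begin{equation}
f^{(d)}(\bar x) = (-1)^{s} \sum_{k=1}^{r} \Omega_k^{d}\bigl[a_k \cos(\Omega_k \bar x) + b_k \sin(\Omega_k \bar x)\bigr],
\end{equation}
where the constant $a_0$ survives only when $d = 0$. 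These identities fix the target that the right-hand sides of \cref{eq-odd,eq-even} must reproduce.

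Next I would expand the shifted evaluations. For the odd case, applying the sum-to-product identities for $\cos(A\pm B)$ and $\sin(A\pm B)$ to the antisymmetric combination gives
\begin{equation}
f(\bar x + x_i) - f(\bar x - x_i) = 2\sum_{k=1}^{r} \sin(\Omega_k x_i)\bigl[-a_k \sin(\Omega_k \bar x) + b_k \cos(\Omega_k \bar x)\bigr],
\end{equation}
so that the constant and every cosine-in-$x_i$ term cancels, leaving only the factor $\sin(\Omega_k x_i)$. For the even case, the symmetric combination $f(\bar x + x_i) + f(\bar x - x_i)$ produces instead the factor $\cos(\Omega_k x_i)$ together with a surviving constant $2a_0$. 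Substituting these into \cref{eq-odd,eq-even} and interchanging the order of the $i$- and $k$-summations, the $\tfrac12$ prefactor cancels the factor of $2$, and the right-hand side becomes a linear combination of exactly the same basis functions $\{-a_k\sin(\Omega_k\bar x)+b_k\cos(\Omega_k\bar x)\}_k$ (odd case) or $\{1\}\cup\{a_k\cos(\Omega_k\bar x)+b_k\sin(\Omega_k\bar x)\}_k$ (even case) that appear in $f^{(d)}$, with each basis function weighted by $\sum_i b_i(\mathbf{x})\sin(\Omega_k x_i)$ (respectively $\sum_i b_i(\mathbf{x})\cos(\Omega_k x_i)$, and $\sum_i b_i(\mathbf{x})$ for the constant).

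The proof then closes by matching these weights against the coefficients $(-1)^s\Omega_k^d$ (and $(-1)^s\delta_{d,0}$ for the constant) coming from $f^{(d)}$. Reading the matching conditions row by row is precisely the linear system $A_o(\mathbf{x})^T b = p^{(d)}$ in the odd case and $A_e(\mathbf{x})^T b = q^{(d)}$ in the even case, whose coefficient matrices are exactly the stated sampling matrices. Since these matrices are assumed nonsingular, each system has a unique solution, namely the proposed $b(\mathbf{x})$, and substituting it back reproduces $f^{(d)}(\bar x)$ identically. I expect the only delicate points to be bookkeeping rather than conceptual: keeping the parity-dependent signs $(-1)^{(d-1)/2}$ and $(-1)^{d/2}$ consistent between the derivative formula and the definitions of $p^{(d)}, q^{(d)}$, and correctly handling the $d=0$ constant-term contribution in the even case, which is the reason the extra node $x_0$ and the all-ones column in $A_e(\mathbf{x})$ are needed.
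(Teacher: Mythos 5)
Your proposal is correct, but it takes a genuinely different route from the paper's own derivation. The paper proceeds constructively: it splits $f$ into odd and even parts, recovers the Fourier coefficient vector $z_o=[b_1,\ldots,b_r]^T$ (resp.\ $z_e$) from samples of $f_o$ (resp.\ $f_e$) at the nodes by inverting the interpolation system $A_o z_o = y_o$, differentiates the reconstruction to get $f^{(d)}(0)=\langle p^{(d)}, A_o^{-1}y_o\rangle=\langle (A_o^T)^{-1}p^{(d)}, y_o\rangle$, and finally transports the result from $\bar{x}=0$ to arbitrary $\bar{x}$ via a shift lemma (the shifted function $h(x):=f(\bar{x}+x)$ is again a trigonometric polynomial with the same frequency set). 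You instead verify the rule by direct substitution: you differentiate \cref{eq:trig} in closed form, expand $f(\bar{x}+x_i)\mp f(\bar{x}-x_i)$ with sum-to-product identities, and observe that equality of the two sides follows once $b$ satisfies $A_o(\mathbf{x})^T b = p^{(d)}$ (resp.\ $A_e(\mathbf{x})^T b = q^{(d)}$), which $b(\mathbf{x})$ does by construction. The two arguments are linear-algebraically dual: your coefficient-matching conditions are exactly the transposed system that the paper reaches through the adjoint identity above, and your sum-to-product step silently performs both the odd/even decomposition and the shift-covariance argument in one stroke, since $f(\bar{x}+x_i)-f(\bar{x}-x_i)=2h_o(x_i)$ for $h(x)=f(\bar{x}+x)$. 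What the paper's route buys is motivation — it shows how the rule is discovered (interpolate, then differentiate the interpolant) and makes transparent why the same function evaluations can be reused for every order $d$; what your route buys is brevity — no shift lemma, no detour through $\bar{x}=0$, and the nonsingularity hypothesis enters only where needed, namely to guarantee the matching system has a solution. When writing it up, make one point explicit that you only allude to: for even $d\ge 2$ the first row of $A_e(\mathbf{x})^T b = q^{(d)}$ forces $\sum_i b_i = 0$, which is precisely what annihilates the $2a_0$ term surviving in the symmetric combination; this is the only place the all-ones column of $A_e(\mathbf{x})$ genuinely matters.
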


\subsubsection{Illustrative examples of \cref{thm:main-psr}}\label{sec-examples}

To better understand \cref{thm:main-psr}, we show that it covers a broad class of existing approaches on PSR \cite{li2017hybrid,mitarai2018quantum,schuld2019evaluating,anselmetti2021local,wierichs2022general}. 

\paragraph{Example I: frequency \{1\}.}
When order $d = 1$ and the generator $H$ has the singleton frequency $\{1\}$ (i.e., $H$ has only two distinct eigenvalues, see \cref{rem-single-frq}), we have $r= 1$, $A_o = \sin(x_1)$, $p^{(1)} = 1$, and $b = 1 / \sin(x_1)$. In this case, \cref{eq-odd} reduces to
\begin{equation}
    f'(\bar{x}) = \frac{1}{2 \sin(x_1)} \left[ f(\bar{x} + x_1) - f(\bar{x} - x_1) \right],
\end{equation}
where $x_1 \in (0, \pi)$. This is exactly the early version of PSR presented in \cite{li2017hybrid,mitarai2018quantum,schuld2019evaluating}.

\paragraph{Example II: frequency \{1,2\}.}
When order $d=1$ and the generator $H$ has the eigenvalues $\{-1,0,1\}$, resulting in $r=2$ and frequencies $\Omega_1=1,$ $\Omega_2=2$, we have $p^{(1)}= [1,2]^{T}$ and
\begin{equation}
    A_o(\mathbf{x}) = \begin{bmatrix}
    \sin \left(x_1\right) & \sin \left(2 x_1\right) \\
    \sin \left(x_2\right) & \sin \left(2 x_2\right)
    \end{bmatrix}.
\end{equation}
If we choose $\mathbf{x}=[\frac{\pi}{4}, \frac{3\pi}{4}]^{T}$, then $b = [b_1, b_2]^{T} = \left(A_o(\mathbf{x})^{T}\right)^{-1} p^{(1)}= \left[ (1+\sqrt{2})/\sqrt{2} , (1-\sqrt{2})/\sqrt{2} \right]^{T}.$ Thus, \cref{eq-odd} becomes
\begin{align}
    f^{\prime}(\bar{x}) 
    = \frac{1}{2} b_1 \left[f\left(\bar{x}+x_1\right) - f\left(\bar{x}-x_1\right)\right] + \frac{1}{2}  b_2 \left[f\left(\bar{x}+x_2\right) - f\left(\bar{x}-x_2\right)\right].
\end{align}
This special result is exactly the PSR studied in \cite{anselmetti2021local}.

\paragraph{Example III: frequency \{1,2, \ldots, r\}.}
For order $d = 1,2$, the equidistant PSR \cite{wierichs2022general} was analyzed under \cref{assm-equi} where $\Omega_k = k$ for $k = 1, \ldots, r$.
For $d=1$, equidistant PSR introduced the equidistant PSR nodes
\begin{equation}\label{eq:equidistant-shifts-1st}
    x_i = \frac{\pi}{2 r} + (i-1) \frac{\pi}{r}, \qquad i = 1, \ldots, 2r,
\end{equation}
to obtain
\begin{equation}\label{eq-general-eq-psr-1st}
f^{\prime}(\bar{x})=\sum_{i=1}^{2 r} c_{i} \, f\left(\bar{x}+x_i\right), \quad \text{ with } \quad  c_i = \frac{(-1)^{i-1}}{4 r \sin ^2\left(\frac{1}{2} x_i \right)}.
\end{equation}
For $d=2$, equidistant PSR \cite{wierichs2022general} also introduced the equidistant PSR nodes
\begin{equation}\label{eq:equidistant-shifts-2rd}
    x_i = \frac{i\pi}{r}, \qquad i = 0,1, \ldots, 2r-1,
\end{equation}
to obtain second order derivative
\begin{equation}\label{eq-general-eq-psr-2rd}
    f^{\prime\prime}(\bar{x})
    =
    -\frac{2 r^2+1}{6}f(\bar{x}) + \sum_{i=1}^{2 r-1} c_{i} \, f\left(\bar{x}+x_i\right), \quad \text{ with } \quad  c_i = \frac{(-1)^{i-1}}{2 r \sin ^2\left(\frac{1}{2} x_i \right)}.
\end{equation}
Indeed, when we apply odd argument in \cref{thm:main-psr} by setting the $i$-th component of shift $\mathbf{x}\in \mathbb{R}^r$ as $x_i$ for $i=1,\ldots,r$ in  \cref{eq:equidistant-shifts-1st}, we can compute the corresponding coefficients $b^{\prime}:=\frac{1}{2}\left[b(\mathbf{x})^T; -b(\mathbf{x})^T\right]^T$ in \cref{eq-odd}, which are exactly the coefficients $c_i$ in \cref{eq-general-eq-psr-1st}. Similarly, when we apply even argument in \cref{thm:main-psr} by setting the $i$-th component of $\mathbf{x}\in \mathbb{R}^{r+1}$ as $x_i$ for $i=0,1,\ldots,r$ in  \cref{eq:equidistant-shifts-2rd}, we also have the same coefficients $c_i$ in \cref{eq-general-eq-psr-2rd}. 
The rigorous proof can be found in \cref{app-equ-are-special}.
In other words, \cref{eq-general-eq-psr-1st,eq-general-eq-psr-2rd} provide the analytical forms of the coefficients $b$ under the specific cases of choosing $\mathbf{x}$.
Therefore, the equidistant PSRs \cite{wierichs2022general} are instances of \cref{thm:main-psr}.

\begin{remark}
In \cref{subsec:not-optimal} later, we will show that when a weighted-shot scheme is adopted, equidistant PSR are globally optimal to obtain the minimal variance of estimated derivatives. Such optimality is not discussed in the original literature.
\end{remark}

\subsubsection{More discussion of \cref{thm:main-psr}}

Let us return to the statement of \cref{thm:main-psr}.
For general values of $\Omega_k$, it is challenging to characterize the conditions under which a given choice of $\mathbf{x}$ renders the matrix $A_o(\mathbf{x})$ or $A_e (\mathbf{x})$ nonsingular. However, in the equidistant frequencies case (\cref{assm-equi}), which commonly arises when $H$ is simply a Pauli word, or $H= \sum \pm P_k$ for commuting Pauli words $P_k$, we summarize the precise conditions as following theorem.
The proof can be found in \cref{app-thm-cond}.

\begin{theorem}\label{thm-cond-nonsingular}
Let \cref{assm-equi} hold, i.e., $\Omega_k = k$. Then,
\begin{enumerate}
    \item the matrix $A_o (\mathbf{x})$ with $\mathbf{x}\in \mathbb{R}^r$ is nonsingular, if and only if
\begin{itemize}
    \item all $\sin x_i \neq 0 \Leftrightarrow x_i \notin \pi \mathbb{Z}$ for all $i$;
    \item all $\cos x_i$ are pairwise distinct $\Leftrightarrow$ $x_i \not \equiv \pm x_j \pmod{2\pi}$ for all $i \neq j$.
\end{itemize}
\item the matrix $A_e (\mathbf{x})$ with $\mathbf{x}\in \mathbb{R}^{r+1}$ is nonsingular, if and only if
\begin{itemize}
    \item all $\cos x_i$ are pairwise distinct $\Leftrightarrow$ $x_i \not \equiv \pm x_j \pmod{2\pi}$ for all $i \neq j$.
\end{itemize}
\end{enumerate}
\end{theorem}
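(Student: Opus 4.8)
The plan is to reduce both determinants to Vandermonde determinants via the Chebyshev-polynomial representation of their trigonometric entries. Under \cref{assm-equi} we have $\Omega_k = k$, so the $(i,k)$-entries of $A_o$ and $A_e$ are $\sin(k x_i)$ and $\cos(k x_i)$ respectively. The starting observation is the standard identity $\cos(k\theta) = T_k(\cos\theta)$, where $T_k$ is the degree-$k$ Chebyshev polynomial of the first kind (with $T_0 \equiv 1$ and leading coefficient $2^{k-1}$ for $k \ge 1$), together with $\sin(k\theta) = \sin\theta\, U_{k-1}(\cos\theta)$, where $U_{k-1}$ is the degree-$(k-1)$ Chebyshev polynomial of the second kind (with leading coefficient $2^{k-1}$). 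Writing $c_i := \cos x_i$, these identities convert each matrix into a polynomial-evaluation matrix in the $c_i$.

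For the even case, $A_e(\mathbf{x})$ has $(i,k)$-entry $T_k(c_i)$ for $i,k=0,\dots,r$ (the leading all-ones column being $T_0(c_i)=1$). Since $T_k$ has degree exactly $k$, rewriting each column in the monomial basis factors $A_e$ as a standard Vandermonde matrix $[c_i^{\,k}]$ times a triangular change-of-basis matrix whose diagonal carries the leading coefficients $2^{k-1}$. Hence I expect $\det A_e(\mathbf{x}) = 2^{\,r(r-1)/2}\prod_{0\le i<j\le r}(c_j-c_i)$, which vanishes precisely when two of the values $c_i=\cos x_i$ coincide; this immediately yields the criterion of part 2.

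For the odd case, $A_o(\mathbf{x})$ has $(i,k)$-entry $\sin x_i\, U_{k-1}(c_i)$. Factoring $\sin x_i$ out of the $i$-th row gives $\det A_o(\mathbf{x}) = \bigl(\prod_{i=1}^r \sin x_i\bigr)\det[U_{k-1}(c_i)]_{i,k=1}^r$. By the same degree/leading-coefficient structure, the remaining determinant reduces to a Vandermonde in the $c_i$, producing $\det A_o(\mathbf{x}) = 2^{\,r(r-1)/2}\bigl(\prod_{i=1}^r \sin x_i\bigr)\prod_{1\le i<j\le r}(c_j-c_i)$. This is nonzero exactly when every $\sin x_i \neq 0$ and the $c_i$ are pairwise distinct, establishing part 1.

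The remaining step, and the only place calling for care rather than routine computation, is to translate the two algebraic conditions into the congruence statements of the theorem. I would check directly, from periodicity and parity, that $\sin x_i = 0 \iff x_i \in \pi\mathbb{Z}$ and that $\cos x_i = \cos x_j \iff x_i \equiv \pm x_j \pmod{2\pi}$. I do not anticipate a genuine obstacle here: the substantive content is the Chebyshev-to-Vandermonde reduction, and the chief thing to get right is the bookkeeping of leading coefficients, so that the triangular factor is confirmed to be invertible and therefore contributes nothing to the nonsingularity conditions beyond the Vandermonde product itself.
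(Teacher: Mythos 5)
Your proposal is correct and follows essentially the same route as the paper's proof: the identities $\sin(kx)=\sin x\,U_{k-1}(\cos x)$ and $\cos(kx)=T_k(\cos x)$, the factorization into a Vandermonde matrix in $\cos x_i$ times an invertible triangular change-of-basis matrix (with the row-scaling by $\sin x_i$ in the odd case), and the resulting determinant formulas $\det A_o = 2^{r(r-1)/2}\bigl(\prod_i \sin x_i\bigr)\prod_{i<j}(\cos x_j-\cos x_i)$ and $\det A_e = 2^{r(r-1)/2}\prod_{i<j}(\cos x_j-\cos x_i)$ all match the paper's argument. The concluding translation of these algebraic conditions into the stated congruences is the same routine step the paper also reads off directly from the factorization.
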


\cref{thm:main-psr} states that evaluating an odd-order derivative requires at most $2r$ function evaluations of $f$, while evaluating an even-order requires at most $2r+2$ evaluations. 
However, for even-order, this can be reduced to $2r+1$ by setting $x_0 = 0$ in \cref{eq-even}, and $A_e$ may still be nonsingular. 
(In contrast, for odd-order, choosing any $x_i = 0$ would make the $i$-th row of $A_o$ in \cref{eq-Ao} all-zero, rendering $A_o$ singular.)
Furthermore, if \cref{assm-equi} holds, then $f$ becomes $2\pi$-periodic, implying $f(\bar{x}+\pi) = f(\bar{x}-\pi)$. In this case, for even-order, setting the last node $x_r = \pi$ allows an additional reduction of function evaluations in \cref{eq-even}, ultimately requiring only $2r$ calls to $f$.  
(Again, for odd-order, choosing any $x_i = \pi$ would also make $A_o$ singular, see \cref{thm-cond-nonsingular}.)
Since equidistant frequencies are the vary common in practice, this means that in many cases, regardless of the order of the derivative, only $2r$ function evaluations are needed.

We observe that to obtain the coefficient vector $b(\mathbf{x})$, we need to solve a linear equation $A_o (\mathbf{x})^T b (\mathbf{x})= p^{ (d)}$ (or $A_e (\mathbf{x})^T b (\mathbf{x})= q^{ (d)}$), where the order $d$ only affects the constant vector $p^{ (d)}$ (or $q^{ (d)}$) on the right-hand side. Therefore, once the inverse of $A_o (\mathbf{x})$ (or $A_e (\mathbf{x})$) is computed, we can calculate the PSR for any odd (or even) order. Since the shift $\mathbf{x}$ for function evaluation does not need to change, the function evaluation values $f\left (\bar{x} \pm x_i \right)$ can be reused. Accordingly, in theory, once $f^{ (1)} (\bar{x})$ has been computed, any higher-order odd-order derivatives can be obtained without further evaluations of $f$; the only step needed is to recompute the coefficients $b(\mathbf{x})$. 
Furthermore, notice that the linear equation $A_o (\mathbf{x})^T b (\mathbf{x})= p^{ (d)}$ (or $A_e (\mathbf{x})^T b (\mathbf{x})= q^{ (d)}$) does not depend on which point the derivatives are evaluated at (i.e., the value of $\bar{x}$). When we apply our EPSR in training of PQCs, we only need to choose the nodes and solve the linear equation once before we perform training, and the nodes and coefficients can remain unchanged during the entire iterations.

On the other hand, \cref{thm:main-psr} tells us that the shift $\mathbf{x}$ can be chosen freely as long as $A_o(\mathbf{x})^{-1}$ (or $A_e(\mathbf{x})^{-1}$) exists. 
However, in practice, the function evaluations for $f$ are only obtained on a quantum device through a finite number of measurements, followed by averaging, which inherently introduces noise into the true function values. In \cref{sec:optimal}, we will discuss how to optimally choose the shift $\mathbf{x}$ to ensure that the PSR performs robustly against noise. 
This represents a key distinction between our work and existing studies on the PSRs: prior works either focus on specific choices of shift $\mathbf{x}$ and their corresponding coefficients $b$, or do not address how to select $\mathbf{x}$ to minimize the final variance of the derivative estimate. We will explore these issues in detail in the following sections.

\subsection{Comparison with a similar study}\label{sec-compare}

We now compare \cref{thm:main-psr} with the related results in \cite{markovich2024parameter}, which recently analyzed the same cost function \cref{eq-cost-fun} and obtained a theorem of a similar form. Suppose that in \cref{eq-cost-fun}, the $N \times N$ generator $H$ has $N$ eigenvalues $\lambda_k$, ordered as $\lambda_1 \leq \cdots \leq \lambda_N$.  \cite{markovich2024parameter} proposed the following PSR: for any $t \in \mathbb{R},$
\begin{equation}\label{eq-mark-psr}
    f^{\prime}(t)=\sum_{i=1}^m b_i(\vec{\phi}) f\left(t+\phi_i\right),
\end{equation}
where the shift vector $\vec{\phi} \in \mathbb{R}^m$ is freely chosen with $m :=N(N-1)+1$, and the coefficient vector $b(\vec{\phi})$ is the solution to the linear equation $E(\vec{\phi}) b = \vec{\mu}$. Those notations are provided as follows: the difference between any two eigenvalues is defined as 
$\mu_{(k, l)} := \lambda_l - \lambda_k, \forall k, l=1,\ldots,N.$
Note that $\mu_{(k, l)}=-\mu_{(l, k)}$ and $\mu_{(k, k)}=0$. Then, let $\vec{\mu}$ denote the $m \times 1$ vector with elements $i \mu_{(k, l)}$, i.e.,
\begin{equation}
    \vec{\mu} :=\left(\begin{array}{c}
    0 \\
    \phantom{+}i \mu_{(1,2)} \\
    -i \mu_{(1,2)} \\
    \vdots \\
    \phantom{+}i \mu_{(N-1,N)} \\
    -i \mu_{(N-1,N)}
    \end{array}\right).
\end{equation}
Here, the pair $(k, l)$ is re-indexed from 1 to $m$, and the number $m$ accounts for all $\mu_{(k, l)}$ where $k \neq l$ for $k=1, \ldots, N$, plus only one instance of $\mu_{(k, k)}=0$. The $m \times m$ matrix $E(\vec{\phi})$ is given by
\begin{align}
    E(\vec{\phi}) 
    :=  
     \left[\begin{array}{cccc}
    1 & 1 & \ldots & 1 \\
    e^{i \mu_{(1,2)} \phi_1} & e^{i \mu_{(1,2)} \phi_2} & \ldots & e^{i \mu_{(1,2)} \phi_m} \\
    e^{-i \mu_{(1,2)} \phi_1} & e^{-i \mu_{(1,2)} \phi_2} & \ldots & e^{-i \mu_{(1,2)} \phi_m} \\
    \vdots & \vdots & \ddots & \vdots \\
    e^{i \mu_{(N-1, N)} \phi_1} & e^{i \mu_{(N-1, N)} \phi_2} & \ldots & e^{i \mu_{(N-1, N)} \phi_m} \\
    e^{-i \mu_{(N-1, N)} \phi_1} & e^{-i \mu_{(N-1, N)} \phi_2} & \ldots & e^{-i \mu_{(N-1, N)} \phi_m}
    \end{array}\right]. \label{eq-Ephi}
\end{align}
The form of \cref{eq-mark-psr} shares some similarities with ours in \cref{eq-odd}. We both need to solve a linear equation that depends on shift vector $\phi$ (i.e., $\mathbf{x}$ in our notations). But the design of two equations is very different. Unfortunately, compared to \cref{thm:main-psr}, the following configurations of eigenvalues of generator $H$ can cause the equation $E(\vec{\phi}) b(\vec{\phi}) = \vec{\mu}$ to be ill-posed, making the PSR in \cref{eq-mark-psr} invalid.
\begin{enumerate}
    \item When the eigenvalues of $H$ are equidistant. Here, ``equidistant'' means that the difference between every two consecutive eigenvalues (arranged in non-decreasing order) remains constant, denoted by $\Delta$. For example, if $\mu_{(1,2)}= \mu_{(2,3)}=\Delta$, then there are two identical rows in \cref{eq-Ephi} and $E(\vec{\phi})$ becomes singular. Although \cite{markovich2024parameter} acknowledged this issue and modified $E(\vec{\phi})$ and $\mu$ accordingly, the resulting conclusion is that when $H$ has equidistant eigenvalues, only one particular equidistant shift $\phi$ ensures the validity of \cref{eq-mark-psr}. As a comparison, regardless of whether the eigenvalues of $H$ are equidistant, no additional modifications are required in our \cref{thm:main-psr}, and the shift vector can always be freely chosen.
    \item When any two differences $\mu_{(k, l)}$ are equal (the eigenvalues themselves do not necessarily have to be equidistant).  For example, if $\mu_{(1,2)} = \mu_{(N-1,N)}$, then again two rows in \cref{eq-Ephi} are identical so $E(\vec{\phi})$ becomes singular. Hence, applying \cref{eq-mark-psr} requires that all differences between every pair of eigenvalues of $H$ must be distinct. 
    \item When $H$ has degenerate eigenvalues. Then some $\mu_{(k, l)}$ with $k \neq l$ will be zero, which would cause $E(\vec{\phi})$ to be singular. 
\end{enumerate}

We can see that \cref{thm:main-psr} not only handles the aforementioned three situations of eigenvalue distributions, but also has the following computational cost advantages\footnote{In the following comparison, assume that we compute the first-order derivative and $H$ has $n$ eigenvalues such that their distribution makes $E(\vec{\phi})$ nonsingular.}:

\begin{enumerate}
    \item Our method requires fewer function evaluations compared to \cref{eq-mark-psr}. While \cref{eq-mark-psr} requires a fixed $m = N(N-1) + 1$ function evaluations, in our method, $r$ in \cref{eq-Omega-set} is at most $N(N-1)/2$, resulting in a total of $2r = N(N-1)$ function evaluations. Thus, the upper bound on evaluations in our method is one less than that of \cref{eq-mark-psr}.
    \item The size of the linear system to be solved in \cref{thm:main-psr} is significantly smaller than that in \cref{eq-mark-psr}. The system in \cref{eq-Ephi} has size $m = N(N-1) + 1$ and consists of complex elements, whereas in our method, the size of \cref{eq-Ao-full} is $r$, with $r$ being at most $N(N-1)/2$. Consequently, our system is at least half the size of \cref{eq-mark-psr} and moreover it involves only real entries.
\end{enumerate}

In summary, although both \cref{thm:main-psr} and the result in \cite{markovich2024parameter} can handle cases where the eigenvalues of $H$ are not equally spaced, our \cref{thm:main-psr} is more robust, capable of addressing a wider range of scenarios, and incurs relatively lower computational costs.

\section{Mathematical derivation of the EPSR} \label{sec:Derivations}

In this section, we derive our EPSR stated in \cref{thm:main-psr}.
The derivation here is natural and intuitive, relying on no advanced mathematical tools.
We begin by verifying \cref{thm:main-psr} at $\bar{x} = 0$.
To this end, we decompose the cost function $f$ into its odd and even components, and express their $d$-th derivatives using an inner product representation.
This analysis shows that the odd-order derivatives of $f$ at $\bar{x} = 0$ depend solely on the odd component, whereas the even-order derivatives depend solely on the even component.
Next, we employ an interpolation method to recover the coefficients, carefully choosing the interpolation nodes and constructing an appropriate interpolation matrix.
These derivations show that the derivatives of $f$ at zero can be expressed as a weighted sum of its function values.
Finally, by shifting the evaluation point, we extend this result to compute derivatives at any desired $\bar{x}$, not just at zero.

Let us first discuss some auxiliary results. Given a smooth function $f: \mathbb{R} \to \mathbb{R}$, we define its odd and even parts as
\begin{equation}
    f_o (x): =\frac{1}{2}[f (x)-f (-x)], \quad 
    f_e (x): =\frac{1}{2}[f (x)+f (-x)].
\end{equation}
Clearly, $f (x)=f_o (x)+f_e (x)$. Indeed, for any $d \geq 0$, their $d$-th order derivatives have the same results, $f^{ (d)} (x)=f_o^{ (d)} (x)+f_e^{ (d)} (x)$. 
In what follows, we are especially interested in the case of $x=0$, i.e., $f^{ (d)} (0)=f_o^{ (d)} (0)+f_e^{ (d)} (0).$
The following lemma is straightforward. 
\begin{lemma}\label{lem:shifted-f}
Let $f: \mathbb{R} \to \mathbb{R}$ be a smooth function, and define its shifted version $h(x) := f(x + \bar{x})$ for some fixed $\bar{x} \in \mathbb{R}$. Then for any $d$-th derivative, we have $h^{(d)}(0) = f^{(d)}(\bar{x}).$
\end{lemma}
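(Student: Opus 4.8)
The plan is to exploit the fact that $h$ is nothing more than a rigid translation of $f$, so that differentiation commutes with the shift up to a harmless relabeling of the argument. The crucial observation is that the inner map $x \mapsto x + \bar{x}$ has derivative identically equal to $1$, which means the chain rule produces no extra multiplicative factors; this is precisely what makes the relationship between $h^{(d)}$ and $f^{(d)}$ an exact equality rather than merely one up to a constant.

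First I would prove by induction on $d \geq 0$ the stronger pointwise identity $h^{(d)}(x) = f^{(d)}(x + \bar{x})$ for all $x \in \mathbb{R}$. The base case $d = 0$ is just the defining relation $h(x) = f(x + \bar{x})$. For the inductive step, assuming $h^{(d)}(x) = f^{(d)}(x + \bar{x})$, I differentiate both sides in $x$; by the chain rule the right-hand side becomes $f^{(d+1)}(x + \bar{x}) \cdot \tfrac{d}{dx}(x + \bar{x}) = f^{(d+1)}(x + \bar{x})$, which establishes the identity at order $d+1$. The smoothness of $f$ guarantees that all the derivatives invoked here exist, so the induction is valid for every $d$.

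Finally I would specialize the pointwise identity to $x = 0$, obtaining $h^{(d)}(0) = f^{(d)}(0 + \bar{x}) = f^{(d)}(\bar{x})$, which is exactly the asserted conclusion. There is essentially no obstacle to overcome: the only point deserving even a moment's attention is confirming that $\tfrac{d}{dx}(x + \bar{x}) = 1$ so that no chain-rule factors accumulate across the iterated differentiation, and this is immediate since $\bar{x}$ is a fixed constant independent of $x$.
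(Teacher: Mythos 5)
Your proof is correct: the induction on $d$ with the chain rule (noting the inner derivative of $x \mapsto x + \bar{x}$ is $1$) is exactly the standard argument. The paper itself offers no explicit proof, simply declaring the lemma straightforward, and your argument is precisely the intended one.
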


\subsection{Inner product representations}

Consider the cost function $f$ in \cref{eq:trig}. Then, its odd part and even part are given by
\begin{equation}\label{eq-odd-part}
    f_o(x)= \sum_{ k =1}^{r} b_{k} \sin \left (\Omega_{k} x\right),
\end{equation}
and
\begin{equation}\label{eq-even-part}
    f_e(x)=a_0+\sum_{ k =1}^{r} a_{k} \cos \left (\Omega_{k} x\right).
\end{equation}
We first rewrite these functions using the notation of the Frobenius inner product $\langle A, B\rangle =\operatorname{tr}(A^T B)$ for any two matrices or vectors $A$ and $B$ of the same sizes. Later, we often use the properties $\langle C A, B\rangle =\left\langle A, C^{T} B\right\rangle, \langle AD, B\rangle =\langle A, BD^{T} \rangle$ with matrices $A,B,C,D$ of compatible sizes.

\subsubsection{Reformulation of the odd part}

Consider the odd part \cref{eq-odd-part}. For any nonnegative integer $d$, using inner product natation, we have
\begin{equation}\label{eq-reform-odd}
    f_o^{ (d)}(x)= \sum_{ k =1}^{r} b_{k} \sin ^{ (d)}\left (\Omega_{k} x\right)
    =\langle p^{ (d)} (x), z_o\rangle,
\end{equation}
where we define the odd coefficient vectors
\begin{equation}
    z_o: =\left[b_1, \cdots, b_r\right]^{T} \in \mathbb{R}^r
\end{equation}
and $p^{ (d)} (x): =[\sin ^{ (d)} (\Omega_{1} x), \ldots, \sin ^{ (d)} (\Omega_{r} x)]^{T} \in \mathbb{R}^r.$ If we expand $p^{ (d)} (x)$ specifically, then
\begin{align}
    p^{ (d)} (x) =
    \begin{cases}
    +[\Omega_1^d \sin (\Omega_1 x), \ldots, \Omega_r^d \sin (\Omega_r x)]^{T}, & \text{if } d = 0 \pmod{4}, \\
    +[\Omega_1^d \cos (\Omega_1 x), \ldots, \Omega_r^d \cos (\Omega_r x)]^{T}, & \text{if } d = 1 \pmod{4}, \\
    -[\Omega_1^d \sin (\Omega_1 x), \ldots, \Omega_r^d \sin (\Omega_r x)]^{T}, & \text{if } d = 2 \pmod{4}, \\
    -[\Omega_1^d \cos (\Omega_1 x), \ldots, \Omega_r^d \cos (\Omega_r x)]^{T}, & \text{if } d = 3 \pmod{4}.
    \end{cases}
\end{align}
This expansion leverages the periodicity of the higher-order derivatives of the sine function.
When $d$ is even, then $p^{(d)}(0)$ becomes zero vector, which implies that $f_o^{(d)}(0) = 0$ by \cref{eq-reform-odd}. When $d$ is odd, we define
\begin{equation}\label{eq-pd0}
    p^{(d)} \equiv p^{(d)}(0) = (-1)^{\frac{d-1}{2}} [\Omega_1^d, \Omega_2^d, \cdots, \Omega_r^d]^T,
\end{equation}
where the factor $(-1)^{\frac{d-1}{2}}$ equals $1$ if $d = 1 \pmod{4}$ and $-1$ if $d = 3 \pmod{4}$.

\subsubsection{Reformulation of the even part}

We next consider the even part  \cref{eq-even-part}. For any nonnegative integer $d$, similarly, we have
\begin{equation}\label{eq-reform-even}
    f_e^{ (d)}(x)
    = a_0 + \sum_{ k =1}^{r} a_{k} \cos ^{ (d)}\left (\Omega_{k} x\right)
    =\langle q^{ (d)} (x), z_e \rangle,
\end{equation}
where we define the even coefficient vectors
\begin{equation}
    z_e: =\left[a_0, a_1, \cdots, a_r\right]^{T} \in \mathbb{R}^{r+1}
\end{equation}
and $q^{ (d)} (x): =[\delta_{d, 0}, \cos ^{ (d)} (\Omega_{1} x), \ldots, \cos ^{ (d)} (\Omega_{r} x)]^{T} \in \mathbb{R}^{r+1}.$ Here, Kronecker delta $\delta_{d, 0}$ equals 1 if $d=0$, and 0 otherwise. Similarly, expanding $q^{ (d)} (x)$ gives
\begin{align}
   q^{ (d)} (x) = 
   \begin{cases}
    +[\delta_{d, 0}, \Omega_1^d \cos (\Omega_1 x), \ldots, \Omega_r^d \cos (\Omega_r x)]^{T}, & \text{if } d = 0 \pmod{4}, \\
    -[0 \quad, \Omega_1^d \sin (\Omega_1 x), \ldots, \Omega_r^d \sin (\Omega_r x)]^{T}, & \text{if } d = 1 \pmod{4}, \\
    -[0\quad, \Omega_1^d \cos (\Omega_1 x), \ldots, \Omega_r^d \cos (\Omega_r x)]^{T}, &\text{if } d = 2 \pmod{4}, \\
    +[0\quad, \Omega_1^d \sin (\Omega_1 x), \ldots, \Omega_r^d \sin (\Omega_r x)]^{T}, & \text{if } d = 3 \pmod{4}.
    \end{cases}
\end{align}
When $d$ is odd, then $q^{(d)}(0)$ becomes zero vector, thus $f_e^{(d)}(0) = 0$ by \cref{eq-reform-even}. When $d$ is even, we define
\begin{equation}\label{eq-qd0}
    q^{(d)} \equiv q^{(d)}(0) = (-1)^{\frac{d}{2}} [ \delta_{d,0}, \Omega_1^d, \Omega_2^d, \cdots, \Omega_r^d ]^T.
\end{equation}
where the factor $(-1)^{\frac{d}{2}}$ equals $1$ if $d = 0 \pmod{4}$ and $-1$ if $d = 2 \pmod{4}$.

Finally, based on above discussion and $f^{ (d)} (0)=f_o^{ (d)} (0)+f_e^{ (d)} (0)$, we have
\begin{equation}\label{eq-odd-even-d-0}
    f^{(d)}(0)=\begin{cases}
    f_o^{(d)}(0) & \text { if } d \text { is odd, } \\ f_e^{(d)}(0) & \text { if } d \text { is even. }
    \end{cases}
\end{equation}
Consequently, at $\bar{x}=0$, the odd-order derivatives of $f$ are determined entirely by its odd component, while the even-order derivatives are determined solely by its even component.

\subsection{Interpolations and derivatives}

Note that once the coefficient vectors $z_o$ in \cref{eq-reform-odd} and $z_e$ in \cref{eq-reform-even} are determined, the $d$-th derivative of odd and even parts can be computed directly. To this end, we next recover coefficients $z_o$ and $z_e$ by using interpolation methods. 

\subsubsection{Computation of the odd order derivatives}

We again first consider the odd part \cref{eq-odd-part}. We are free to choose $r$ interpolation nodes $\mathbf{x}=[x_1, \cdots, x_r]^{T} \in \mathbb{R}^r$ as long as the  $r \times r$ interpolation matrix
\begin{equation}\label{eq-Ao-full}
    A_o \equiv A_o(\mathbf{x}):=\left[\begin{array}{c}
    p^{(0)}\left(x_1\right)^{T} \\
    p^{(0)}\left(x_2\right)^{T} \\
    \vdots \\
    p^{(0)}\left(x_r\right)^{T}
    \end{array}\right]=\left[\begin{array}{cccc}
    \sin \left(\Omega_1 x_1\right) & \sin \left(\Omega_2x_1\right) & \cdots & \sin \left(\Omega_r x_1\right) \\
    \sin \left(\Omega_1 x_2\right) & \sin \left(\Omega_2 x_2\right) & \cdots & \sin \left(\Omega_r x_2\right) \\
    \vdots & \vdots & \ddots & \vdots \\
    \sin \left(\Omega_1 x_r\right) & \sin \left(\Omega_2 x_r\right) & \cdots & \sin \left(\Omega_r x_r\right)
    \end{array}\right]
\end{equation}
is nonsingular. Define data vector
\begin{equation}
    y_o \equiv y_o (\mathbf{x}): =\left[f_o\left (x_1\right), \cdots, f_o\left (x_r\right)\right]^{T} \in \mathbb{R}^r.
\end{equation}
Then, by definitions of odd part \cref{eq-odd-part}, we have $A_o z_o =y_o$ and thus $z_o =A_o ^{-1} y_o$. Thus, by \cref{eq-reform-odd},
\begin{align}
    f_o ^{(d)}(x)
    =\langle p^{(d)}(x), A_o ^{-1} y_o \rangle
   =\langle(A_o ^{T})^{-1} p^{(d)}(x), y_o \rangle.
\end{align}
In particular, we focus on case that $d$ is odd and $x=0$. By \cref{eq-odd-even-d-0,eq-pd0}, we have
\begin{equation}
    f ^{(d)}(0)
    =f_o ^{(d)}(0)
    =\left\langle b(\mathbf{x}), y_o(\mathbf{x}) \right\rangle,
\end{equation}
with $b(\mathbf{x}):=[A_o(\mathbf{x})^T]^{-1} p^{(d)} \in \mathbb{R}^r$. Hence,
\begin{align}
    f^{(d)}(0) 
    =\sum_{i=1}^r b_i(\mathbf{x}) f_o\left(x_i\right)
    = \frac{1}{2} \sum_{i=1}^r b_i(\mathbf{x})\left(f\left(x_i\right)-f\left(-x_i\right)\right). \label{eq-557}
\end{align}

Now, we have completed the odd argument in \cref{thm:main-psr} for $\bar{x} = 0$. Notice that the coefficient vector $b (\mathbf{x})$ above is generated by solving the linear system
\begin{equation}
    A_o (\mathbf{x})^T b (\mathbf{x}) = p^{ (d)},
\end{equation}
where $A_o (\mathbf{x})$ and $p^{ (d)}$ only depend on the desired order $d$ and the frequency set $\{\Omega_k\}$, but not on the coefficients $a_k$ and $b_k$ in $f$. Consequently, the EPSR of \cref{eq-557} is independent of the coefficients $a_k$ and $b_k$.

To obtain $f^{(d)}\left(x\right)$ for any interested point $\bar{x} \in \mathbb{R}$, consider the function $h(x):=f\left(\bar{x}+x\right)$. By \cref{lem:shifted-f}, one has
\begin{align}
    f^{(d)}\left(\bar{x}\right) =h^{(d)} (0)=\frac{1}{2} \sum_{i=1}^r b_i(\mathbf{x})\left(h\left(x_i\right)-h\left(-x_i\right)\right)
 =\frac{1}{2} \sum_{i=1}^r b_i(\mathbf{x}) \Bigl( f\left(\bar{x} + x_i\right)-f\left(\bar{x} - x_i\right) \Bigr),
\end{align}
where second equation holds because, upon expanding $h (x)$, we observe that $h$ is also a trigonometric polynomial like \cref{eq:trig}, sharing the same frequency set $\{\Omega_k\}$ as $f$; the only difference lies in the coefficients $a_k$ and $b_k$. As a result, the EPSR of \cref{eq-557} can be directly applied to $h$. Now, we have showed the odd argument in \cref{thm:main-psr}.

\subsubsection{Computation of the even order derivatives}

We next consider the even part  \cref{eq-even-part}. We are free to choose $r+1$ interpolation nodes $\mathbf{x}=\left[x_0, x_1, \cdots, x_r\right]^T \in \mathbb{R}^{r+1}$ as long as the $(r+1) \times (r+1)$ interpolation matrix
\begin{equation}\label{eq-Ae-full}
    A_e \equiv A_e(\mathbf{x}):=\left[\begin{array}{c}
    q^{(0)}\left(x_0\right)^{T} \\
    q^{(0)}\left(x_1\right)^{T} \\
    \vdots \\
    q^{(0)}\left(x_r\right)^{T}
    \end{array}\right]=\left[\begin{array}{ccccc}
    1 & \cos \left(\Omega_1 x_0\right) & \cos \left(\Omega_2 x_0\right) & \cdots & \cos \left(\Omega_r x_0\right) \\
    1 & \cos \left(\Omega_1 x_1\right) & \cos \left(\Omega_2 x_1\right) & \cdots & \cos \left(\Omega_r x_1\right) \\
    \vdots & \vdots & \vdots & \ddots&\vdots \\
    1 & \cos \left(\Omega_1 x_r\right) & \cos \left(\Omega_2 x_r\right) & \cdots & \cos \left(\Omega_r x_r\right)
    \end{array}\right]
\end{equation}
is nonsingular. Let data vector
$
    y_e \equiv y_e (\mathbf{x}): =\left[f_e\left (x_0\right), f_e\left (x_1\right), \cdots, f_e\left (x_r\right)\right]^{T} \in \mathbb{R}^{r+1}.
$
Then, by definitions of even part, we have $A_e z_e =y_e$ and thus $z_e =A_e ^{-1} y_e$. Once $z_e$ is obtained, from \cref{eq-reform-even}, we have
\begin{align}
    f_e ^{(d)}(x)
    =\langle q^{(d)}(x), A_e ^{-1} y_e \rangle
    =\langle(A_e ^{T})^{-1} q^{(d)}(x), y_e \rangle.
\end{align}
In particular, we focus on case that $d$ is even and $x=0$. By \cref{eq-odd-even-d-0}, we have
\begin{equation}
    f^{(d)}(0)
    =f_e ^{(d)}(0)
    =\left\langle b(\mathbf{x}), y_e(\mathbf{x})\right\rangle = \sum_{i=0}^{r} b_i(\mathbf{x}) f_e\left(x_i\right)    = \frac{1}{2} \sum_{i=0}^r b_i(\mathbf{x})\left(f\left(x_i\right)+f\left(-x_i\right)\right) ,
\end{equation}
with $b(\mathbf{x}):=[A_e(\mathbf{x})^T]^{-1} q^{(d)} \in \mathbb{R}^{r+1}$. The remaining steps are identical to those in the case of the odd pattern, ultimately leading the even argument in \cref{thm:main-psr}.

\begin{remark}
Compared to calculating odd-order derivatives, which requires $r$ evaluations of $f_o$, calculating even-order derivatives necessitates $r+1$ evaluations of $f_e$, where in general both $f_o$ and $f_e$ require two calls to $f$. Notably, we always have $f_e(0)=f(0)$. If we always set first node $x_0 \equiv 0$, we can reduce the number of evaluations of $f$ by one. Furthermore, if the frequencies $\Omega_k = k$ (\cref{assm-equi}), then $f$ is $2\pi$-periodic, and since $f(\pi) = f(-\pi)$, it follows that $f_e(\pi) = f(\pi)$. By always setting last node $x_r = \pi$, we can further reduce the number of evaluations of $f$ by one more. Ultimately, this results in $2r$ calls to $f$. 
\end{remark}

\section{Optimal shifts with minimal derivative variance}\label{sec:optimal}

In the previous discussions, we have always assumed that the cost function $f(x)$ can be evaluated accurately. 
However, notice that the cost function $f (x)$ in \cref{eq-cost-fun} is an expectation value, and in practice, due to the limited number of measurements (say, $N_x$ shots) on quantum devices, we can only estimate $f (x)$ within a certain range of statistical uncertainty. 
Specifically, we can only obtain the following random variable as an unbiased estimate of $f(x)$: 
\begin{equation}\label{eq_noise}
    \hat{f} (x) = f (x) + \hat{\epsilon},
\end{equation}
where $\hat{\epsilon}$ is a zero-mean noise term. According to the central limit theorem, the variance of $\hat{\epsilon}$ is given by
\begin{equation}
    \frac{\sigma^2(x)}{N_x},
\end{equation}
where $\sigma^2(x)$ represents the variance of measuring the value of $f (x)$ in a single shot, which depends on the specific quantum circuit and the observable in the problem\footnote{Indeed, $\sigma^2(x)=\langle\psi| U(x)^{\dagger} C^2 U(x)|\psi\rangle-[f(x)]^2$; however, the computational cost is prohibitive.}. The subscript $x$ in $N_x$ highlights the number of shots assigned specifically for estimating $f (x)$, indicating that different numbers of shots may be allocated to different values $f (x)$ for $x \in \mathbb{R}$ if needed.

Since most optimization algorithms (such as gradient descent, ADAM) for training PQCs only use the gradient information, we focus here on computing the first-order derivative of $f$. With the statistical estimate in \cref{eq_noise}, after applying \cref{thm:main-psr}, we obtain
\begin{equation}\label{eq_hatf_xbar}
    \hat{f}^{\prime} (\bar{x}) = \frac{1}{2} \sum_{i=1}^r b_i (\mathbf{x}) \left (\hat{f} (\bar{x} + x_i) - \hat{f} (\bar{x} - x_i) \right),
\end{equation}
where $b (\mathbf{x}) = \left[A_o (\mathbf{x})^T \right]^{-1} p^{ (1)}$ and $\mathbf{x} \in \mathbb{R}^r$ is arbitrary shift. We observe that $\hat{f}^{\prime} (\bar{x})$ serves as an unbiased estimator of the true derivative $f^{\prime} (\bar{x})$. This follows from the linearity of expectation, which gives
\begin{equation}
    \operatorname{E}[\hat{f}^{\prime} (\bar{x})] = f^{\prime} (\bar{x}).
\end{equation}

On the other hand, \cref{thm:main-psr} indicates that there is considerable flexibility in selecting the shifts $\mathbf{x} \in \mathbb{R}^r$. As long as the interpolation matrix $A_o (\mathbf{x})$ is invertible, $\hat{f}^{\prime} (\bar{x})$ will always be an unbiased estimate of the true derivative. A natural question arises: does the derivative variance $\operatorname{Var}[\hat{f}^{\prime} (\bar{x})]$ depend on the specific shifts $\mathbf{x}$? If so, what is the optimal shifts $\mathbf{x}^*$ to minimize the derivative variance? (A smaller variance increases the likelihood of obtaining the true derivative, thereby enhancing the effectiveness of any gradient-based optimization algorithm.) These questions are not straightforward to answer because we also need to consider the variance of a single shot and the allocation scheme for the total number of shots. We proceed to discuss these two points in detail. 

\subsection{Two shot allocation schemes}\label{sec-two-shots}

Now, suppose we are interested in computing the derivative at some point $\bar{x} \in \mathbb{R}$. Following the convention in existing studies \cite{mari2021estimating,wierichs2022general,markovich2024parameter}, we consider the following assumption, which is usually a good approximation in practice. 

\begin{assumption}[Constant variance]\label{ass-var}
The variance of measuring $f(\bar{x})$ is invariant under any shift $s$ in the parameter. In other words, for any $s \in \mathbb{R}$, the one-shot variance of observing $\hat{f}(\bar{x} + s)$ is identical to that of observing $\hat{f}(\bar{x})$, i.e., $\sigma^2(\bar{x}) = \sigma^2(\bar{x} + s) \approx \sigma^2$ for some unknown constant $\sigma^2$.
\end{assumption}

\begin{remark}
In \cref{app-error-bounds}, we characterize the discrepancy between the variance minimized under the constant-variance assumption and the true minimal variance in its absence. The resulting bounds (\cref{thm-appr-var} in \cref{app-error-bounds}) furnish qualified theoretical justification for employing the constant-variance assumption under our framework. However, throughout the main text, all variance-related discussions are conducted under \cref{ass-var}.
\end{remark}

For convenience, we rewrite $\hat{f}^{\prime} (\bar{x})$ in \cref{eq_hatf_xbar} as
\begin{equation}
    \hat{f}^{\prime} (\bar{x}) = \sum_{\mu=1}^{2r} \gamma_\mu \, \hat{f} \left (\bar{x} + \phi_\mu \right),
\end{equation}
with coefficients $\boldsymbol{\gamma}: = \frac{1}{2} \left[ b (\mathbf{x})^T, -b (\mathbf{x})^T \right]^T$ and shifts $\boldsymbol{\phi}: = \left[\mathbf{x}^T, -\mathbf{x}^T \right]^T$. Using \cref{ass-var} and the independence of the random variables $\hat{f} (x)$ corresponding to different $x \in \mathbb{R}$, we obtain
\begin{equation}\label{eq_final_var_0}
    \operatorname{Var}[\hat{f}^{\prime} (\bar{x})] = \sum_{\mu=1}^{2r} \left| \gamma_\mu \right|^2 \operatorname{Var} \left[ \hat{f} \left (\bar{x} + \phi_\mu \right) \right] = \sum_{\mu=1}^{2r} \left| \gamma_\mu \right|^2 \frac{\sigma^2}{N_\mu},
\end{equation}
where $N_\mu$ is the number of shots used to evaluate $\hat{f} \left (\bar{x} + \phi_\mu \right)$ for $\mu =1 , \ldots, 2r.$ To estimate the derivative $f^{\prime} (\bar{x})$, we are limited by the total number of shots allowed on a quantum computer, denoted as $N_{\mathrm{total}}$. (It is impossible to perform an infinite number of measurements.) We now consider two different shot allocation schemes.

\begin{description}[leftmargin=0pt,labelindent=0pt]
    \item[Uniform-shot scheme]  
     Let 
    \begin{equation}
        N_\mu = \frac{N_{\mathrm{total}}}{2r}
    \end{equation}
     for $\mu = 1, \dots, 2r$, meaning that the same number of shots is used for each $\hat{f} (\bar{x} + \phi_\mu)$. This is the simplest and most common approach. The derivative variance \cref{eq_final_var_0} in this case becomes
    \begin{equation}\label{eq-var-unif}
    \operatorname{Var}_{\mathrm{unif}}[\hat{f}^{\prime} (\bar{x})]
    =\sum_{\mu=1}^{2 r}\left|\gamma_\mu\right|^2 {\sigma}^2 \frac{2 r} {N_{\mathrm{total}}}
    =\frac{2 r{\sigma}^2 } {N_{\mathrm{total}}}\sum_{\mu=1}^{2 r}\left|\gamma_\mu\right|^2 
    =\frac{{\sigma}^2 } {N_{\mathrm{total}}} \, r \| b (\mathbf{x}) \|_2^2.
    \end{equation}

    \item[Weighted-shot scheme] 
    Let
    \begin{equation}
        N_\mu = N_{\mathrm{total}} \frac{\left| \gamma_\mu \right|}{\| \gamma \|_1}
    \end{equation}
    for $\mu = 1, \dots, 2r$, where $\| \gamma \|_1 = \sum_{\mu=1}^{2r} \left| \gamma_\mu \right|$. This scheme allocates more shots to evaluation terms $\hat{f} (\bar{x} + \phi_\mu)$ with larger coefficients $\left| \gamma_\mu \right|$, because the variance will scale with the square of $\left| \gamma_\mu \right|$. The derivative variance \cref{eq_final_var_0} in this case becomes
    \begin{equation}\label{eq-var-wgt}
        \operatorname{Var}_{\mathrm{wgt}}[\hat{f}^{\prime} (\bar{x})]
        = \sum_{\mu=1}^{2r} \left| \gamma_\mu \right|^2 \frac{\sigma^2}{N_{\mathrm{total}} \frac{\left| \gamma_\mu \right|}{\| \gamma \|_1}}
        = \frac{\sigma^2 \| \gamma \|_1^2}{N_{\mathrm{total}}}
        = \frac{\sigma^2}{N_{\mathrm{total}}} \, \| b (\mathbf{x}) \|_1^2.
    \end{equation}
\end{description}

Using the norm inequality $\|\mathbf{a}\|_1 \leq \sqrt{n} \cdot \|\mathbf{a}\|_2$ for any vector $\mathbf{a} \in \mathbb{R}^n$, we have $\| b (\mathbf{x}) \|_1^2 \leq r \| b (\mathbf{x}) \|_2^2.$
Thus, regardless of the choice of shifts $\mathbf{x} \in \mathbb{R}^r$, the derivative variance of weighted scheme is always smaller than that of uniform scheme. 
In fact, as shown in the following lemma, the weighted-shot scheme is optimal among all possible allocations.

\begin{lemma}[Optimal shot allocation scheme]\label{lem-opt-shot}
Fix $N_{\mathrm{total}}>0$ and coefficients $\gamma_\mu$ for $\mu=1, \ldots, 2 r$ (namely, $ b (\mathbf{x})$). Then, the weighted scheme $N_\mu = N_{\mathrm{total}} \frac{\left| \gamma_\mu \right|}{\| \gamma \|_1}$ exactly solves the optimization problem 
\begin{equation}
    \min _{N_\mu > 0} \operatorname{Var}[\hat{f}^{\prime} (\bar{x})]=\sum_{\mu=1}^{2 r}\left|\gamma_\mu\right|^2 \frac{\sigma^2}{N_\mu} \quad \text { s.t. } \quad \sum_{\mu=1}^{2 r} N_\mu=N_{\mathrm{total }} .
\end{equation}
\end{lemma}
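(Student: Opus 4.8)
The plan is to treat this as a textbook constrained convex minimization and to give a direct, inequality-based proof so that global optimality is immediate rather than merely local. First I would note that each summand $|\gamma_\mu|^2\sigma^2/N_\mu$ is a convex function of $N_\mu$ on $(0,\infty)$, so the objective is convex; the feasible set $\{N_\mu>0:\sum_\mu N_\mu=N_{\mathrm{total}}\}$ is convex as well. Consequently any point satisfying the first-order (KKT) stationarity conditions is automatically a global minimizer, and it suffices to exhibit one such point and to identify it with the weighted scheme.

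To locate the stationary point I would introduce a Lagrange multiplier $\lambda$ for the equality constraint and set the partial derivative of $\sum_\nu |\gamma_\nu|^2\sigma^2/N_\nu + \lambda\bigl(\sum_\nu N_\nu - N_{\mathrm{total}}\bigr)$ with respect to $N_\mu$ equal to zero. This gives $-|\gamma_\mu|^2\sigma^2/N_\mu^2 + \lambda = 0$, hence $N_\mu = \sigma|\gamma_\mu|/\sqrt{\lambda}\propto |\gamma_\mu|$. Substituting this proportionality into the constraint $\sum_\mu N_\mu = N_{\mathrm{total}}$ fixes the constant and yields exactly $N_\mu = N_{\mathrm{total}}\,|\gamma_\mu|/\|\gamma\|_1$, which is the weighted scheme.

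For a self-contained optimality certificate that sidesteps convexity theory entirely, I would instead (or in addition) apply the Cauchy--Schwarz inequality to the vectors $(|\gamma_\mu|/\sqrt{N_\mu})_\mu$ and $(\sqrt{N_\mu})_\mu$:
\[
\Bigl(\sum_{\mu=1}^{2r}|\gamma_\mu|\Bigr)^2 \le \Bigl(\sum_{\mu=1}^{2r}\frac{|\gamma_\mu|^2}{N_\mu}\Bigr)\Bigl(\sum_{\mu=1}^{2r}N_\mu\Bigr).
\]
Using $\sum_\mu N_\mu = N_{\mathrm{total}}$ and multiplying through by $\sigma^2$ rearranges this into the universal lower bound $\operatorname{Var}[\hat f'(\bar x)]\ge \sigma^2\|\gamma\|_1^2/N_{\mathrm{total}}$, valid for \emph{every} admissible allocation. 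The Cauchy--Schwarz step attains equality precisely when $|\gamma_\mu|/\sqrt{N_\mu}$ is proportional to $\sqrt{N_\mu}$, i.e. when $N_\mu\propto|\gamma_\mu|$; together with the constraint this is once more the weighted scheme, which therefore meets the lower bound and is optimal. Comparing with \cref{eq-var-wgt} confirms that the attained minimum equals $\operatorname{Var}_{\mathrm{wgt}}[\hat f'(\bar x)]$.

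I do not anticipate a genuine obstacle; the only point requiring a word of care is the degenerate case in which some $\gamma_\mu=0$, where the strict constraint $N_\mu>0$ turns the infimum into an unattained limit (one would let those $N_\mu\to0$). Restricting both sum and optimization to the index set $\{\mu:\gamma_\mu\neq0\}$—on which $A_o(\mathbf{x})$ being nonsingular guarantees a nontrivial coefficient vector—removes this edge case, and the Cauchy--Schwarz argument above then delivers the minimizer in the open feasible region, completing the proof.
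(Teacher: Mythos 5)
Your Cauchy--Schwarz certificate is precisely the paper's own proof --- the same inequality applied to the vectors $\bigl(|\gamma_\mu|/\sqrt{N_\mu}\bigr)_\mu$ and $\bigl(\sqrt{N_\mu}\bigr)_\mu$, the same lower bound $\sigma^2\|\gamma\|_1^2/N_{\mathrm{total}}$, and the same equality condition $N_\mu\propto|\gamma_\mu|$ --- so the core argument matches, and your Lagrangian/KKT derivation is a sound supplementary route to the same stationary point. Your closing remark is a genuine refinement the paper omits: when some $\gamma_\mu=0$ the stated weighted allocation sets $N_\mu=0$, which violates the strict constraint $N_\mu>0$, so the infimum is then approached but not attained; restricting to the indices with $\gamma_\mu\neq 0$ (nonempty since $b(\mathbf{x})=[A_o(\mathbf{x})^T]^{-1}p^{(1)}\neq 0$) repairs this degenerate case cleanly.
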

\begin{proof}
Let $a_\mu:=\left|\gamma_\mu\right| \geq 0$. For any $N_\mu>0$ satisfying $\sum_{\mu=1}^{2 r} N_\mu=N_{\mathrm{total}}$, the Cauchy-Schwarz inequality implies
\begin{equation}
    \left(\sum_{\mu=1}^{2 r} \frac{a_\mu^2}{N_\mu}\right)\left(\sum_{\mu=1}^{2 r} N_\mu\right) \geq\left(\sum_{\mu=1}^{2 r} a_\mu\right)^2,
\end{equation}
and hence
\begin{equation}
    \sum_{\mu=1}^{2 r} \frac{a_\mu^2}{N_\mu} \geq \frac{\left(\sum_\mu a_\mu\right)^2}{\sum_\mu N_\mu}=\frac{\|\gamma\|_1^2}{N_{\mathrm{total}}} .
\end{equation}
Multiplying by the constant $\sigma^2$ yields $\operatorname{Var}[\hat{f}^{\prime}(\bar{x})]=\sigma^2 \sum_\mu \frac{a_\mu^2}{N_\mu} \geq \frac{\sigma^2}{N_{\mathrm{total}}}\|\gamma\|_1^2 .$ Note that Cauchy-Schwarz inequality is tight iff $\frac{a_\mu}{\sqrt{N_\mu}} \propto \sqrt{N_\mu}\Leftrightarrow  N_\mu \propto a_\mu=\left|\gamma_\mu\right| .$ Imposing $\sum_\mu N_\mu=N_{\mathrm{total}}$ yields $N_\mu=N_{\mathrm{total}} \frac{\left|\gamma_\mu\right|}{\|\gamma\|_1},$ exactly the weighted-shot scheme, achieving $\operatorname{Var}[\hat{f}^{\prime}(\bar{x})]=\frac{\sigma^2}{N_{\mathrm{total}}}\|\gamma\|_1^2.$
\end{proof}

In practice, a constant number of shots, say $N_{\mathrm{unif}}$, is typically chosen for evaluating any $\hat{f} (x)$, leading to $N_{\mathrm{total}} = 2rN_{\mathrm{unif}}$. 
This is especially common in code implementations, such as in Qiskit and similar frameworks, where a fixed \verb|shot| parameter is often specified.
In this case, by applying the weighted scheme to reallocate these $2rN_{\mathrm{unif}}$ shots, we can always further reduce the derivative variance. 

Finally, the above discussion naturally applies to the derivative variance of derivatives of any order $d \geq 2$. However, when $d$ is even, $b (\mathbf{x})$ is $(r + 1)$-dimensional, so $\operatorname{Var}_{\mathrm{unif}}[\hat{f}^{(d)} (\bar{x})] = \sigma^2 (r +1)\| b (\mathbf{x}) \|_2^2/N_{\mathrm{total}}$ when considering the uniform-shot scheme. 

\subsection{Minimization of the derivative variance}

After selecting the specific shot allocation scheme, the derivative variance $\operatorname{Var}[\hat{f}^{\prime} (\bar{x})]$ depends only on the nodes $\mathbf{x}$. Based on previous discussion, we have two different criteria. If we use the uniform scheme, to minimize the derivative variance in \cref{eq-var-unif}, we will solve
\begin{equation}\label{pro_P1}
    \tag{P-unif}
    \begin{aligned}
    \min_{\mathbf{x} \in \mathbb{R}^r} \quad F_{\mathrm{unif}} (\mathbf{x}) & \triangleq \frac{1}{2} \|b (\mathbf{x})\|_2^2
    =\frac{1}{2}\left\|\left[A_o(\mathrm{x})^T\right]^{-1} p^{(1)}\right\|_2^2
    = \frac{1}{2} \left\langle \left[ A_o (\mathbf{x})^T A_o (\mathbf{x}) \right]^{-1}, p^{ (1)} p^{ (1)^T} \right\rangle.
    \end{aligned}
\end{equation}
If we use the weighted scheme, the derivative variance is given in~\cref{eq-var-wgt}, and the problem becomes
\begin{equation}\label{pro_P2}
    \tag{P-wgt}
    \begin{aligned}
    \min_{\mathbf{x} \in \mathbb{R}^r} \quad F_{\mathrm{wgt}} (\mathbf{x}) & \triangleq \|b (\mathbf{x})\|_1.
    \end{aligned}
\end{equation}

To solve the above two problems, we need to know their gradients. Given the non-smoothness of $\ell_1$ norm, we will derive its subgradient for $F_{\mathrm{wgt}}$. The (sub)gradients of $F_{\mathrm{unif}}$ and $F_{\mathrm{wgt}}$ are provided by the following lemma, where we consider general odd $d$-order derivatives. Let sign function $\operatorname{sgn} (\cdot)$ defined as $\operatorname{sgn} (z) = 1$ if $z > 0$, $0$ if $z = 0$, and $-1$ if $z < 0$, applied component wise when input is a vector. Let $\operatorname{diag} (\cdot)$ denote the vector formed by the diagonal entries of the input matrix. 

\begin{lemma}[(Sub)gradients of $F_{\mathrm{unif}}$ and $F_{\mathrm{wgt}}$ for odd $d$-order derivatives]\label{lem-grad-odd}
Let $d$ be an arbitrary odd positive integer. Let the matrix $A_o^{(1)}(\mathbf{x})$ be defined as the component-wise derivative of $A_o(\mathbf{x})$ with respect to $x_i$:
\begin{equation}\label{eq-Ao1}
    A_o^{(1)}(\mathbf{x})
    :=\left[\begin{array}{c}
    p^{(1)}\left(x_1\right)^T \\
    p^{(1)}\left(x_2\right)^T \\
    \vdots \\
    p^{(1)}\left(x_r\right)^T
    \end{array}\right]
    =\left[\begin{array}{cccc}
    \Omega_1 \cos \left(\Omega_1 x_1\right) & \Omega_2 \cos \left(\Omega_2 x_1\right) & \cdots & \Omega_r \cos \left(\Omega_r x_1\right) \\
    \Omega_1 \cos \left(\Omega_1 x_2\right) & \Omega_2 \cos \left(\Omega_2 x_2\right) & \cdots & \Omega_r \cos \left(\Omega_r x_2\right) \\
    \vdots & \vdots & \ddots & \vdots \\
    \Omega_1 \cos \left(\Omega_1 x_r\right) & \Omega_2 \cos \left(\Omega_2 x_r\right) & \cdots & \Omega_r \cos \left(\Omega_r x_r\right)
    \end{array}\right].
\end{equation}
We have the following results:

\begin{enumerate}
    \item The gradient of
$F_{\mathrm{unif}} (\mathbf{x}) = \frac{1}{2}\left\|b (\mathbf{x})\right\|_2^2 = \frac{1}{2}\left\|\left[A_o (\mathbf{x})^T\right]^{-1} p^{ (d)}\right\|_2^2 
$
at $\mathbf{x} \in \mathbb{R}^r$ is given by
\begin{equation}
    \nabla F_{\mathrm{unif}} (\mathbf{x})
    = -\operatorname{diag}\left (A_o^{ (1)} (\mathbf{x}) A_o (\mathbf{x})^{-1} b (\mathbf{x}) b (\mathbf{x})^T\right).
\end{equation}
\item The subgradient of
$F_{\mathrm{wgt}} (\mathbf{x}) = \|b (\mathbf{x})\|_1 = \left\|\left[A_o (\mathbf{x})^T\right]^{-1} p^{ (d)}\right\|_1$
at $\mathbf{x} \in \mathbb{R}^r$, i.e., $\partial F_{\mathrm{wgt}} (\mathbf{x}) \subseteq \mathbb{R}^r$, satisfies
\begin{equation}
    \partial F_{\mathrm{wgt}} (\mathbf{x}) \ni-\operatorname{diag}\left (A_o^{ (1)} (\mathbf{x}) A_o (\mathbf{x})^{-1} \operatorname{sgn} (b (\mathbf{x})) b (\mathbf{x})^T\right).
\end{equation}
\end{enumerate}
\end{lemma}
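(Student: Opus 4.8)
The plan is to reduce both formulas to a single implicit-differentiation identity for the coefficient map $\mathbf{x} \mapsto b(\mathbf{x})$, and then apply the (sub)gradient chain rule to the two outer functions $\tfrac12\|\cdot\|_2^2$ and $\|\cdot\|_1$. The starting point is a structural observation: in $A_o(\mathbf{x})$ the variable $x_i$ appears only in the $i$-th row, which equals $p^{(0)}(x_i)^T$; hence $\partial A_o/\partial x_i = e_i\, p^{(1)}(x_i)^T$ is rank one, and $p^{(1)}(x_i)^T$ is precisely the $i$-th row of $A_o^{(1)}(\mathbf{x})$ in \cref{eq-Ao1}. Note that $d$ enters only through the constant vector $p^{(d)}$ defining $b$, so it plays no special role beyond fixing $b(\mathbf{x})$.

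First I would differentiate the defining relation $A_o(\mathbf{x})^T b(\mathbf{x}) = p^{(d)}$ (constant right-hand side) with respect to $x_i$. Using $\partial A_o^T/\partial x_i = p^{(1)}(x_i)\, e_i^T$ gives $p^{(1)}(x_i)\, b_i + A_o^T\, \partial_{x_i} b = 0$, hence the key formula
\[
    \partial_{x_i} b(\mathbf{x}) = -\,b_i(\mathbf{x})\,[A_o(\mathbf{x})^T]^{-1} p^{(1)}(x_i),
\]
which packages the entire $\mathbf{x}$-dependence and drives both parts. For $F_{\mathrm{unif}} = \tfrac12\|b\|_2^2$ the chain rule gives $\partial_{x_i} F_{\mathrm{unif}} = b^T \partial_{x_i} b = -b_i\, b^T [A_o^T]^{-1} p^{(1)}(x_i)$. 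Writing $b^T[A_o^T]^{-1} = (A_o^{-1}b)^T$ and recognizing $p^{(1)}(x_i)^T$ as row $i$ of $A_o^{(1)}$, the scalar $b^T[A_o^T]^{-1}p^{(1)}(x_i)$ is exactly the $i$-th entry of $A_o^{(1)} A_o^{-1} b$; assembling the $r$ partials and folding the factor $b_i$ into an outer product yields $\nabla F_{\mathrm{unif}} = -\operatorname{diag}(A_o^{(1)} A_o^{-1} b\, b^T)$, as claimed. The care needed here is purely bookkeeping: tracking which coordinate is being differentiated and checking that the coordinatewise partials reassemble into $\operatorname{diag}(\cdot)$ of the stated outer product.

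For $F_{\mathrm{wgt}} = \|b\|_1$ I would first treat the generic case where $b(\mathbf{x})$ has no vanishing component; there $F_{\mathrm{wgt}}$ is differentiable and the same computation with $\operatorname{sgn}(b)$ replacing $b$ in the outer derivative gives $\partial_{x_i}F_{\mathrm{wgt}} = -b_i\,(A_o^{(1)} A_o^{-1}\operatorname{sgn}(b))_i$, i.e. $-\operatorname{diag}(A_o^{(1)}A_o^{-1}\operatorname{sgn}(b)\,b^T)$. To cover arbitrary $\mathbf{x}$ and justify the membership ``$\ni$'', I would invoke the generalized-gradient chain rule for the composition of the convex, locally Lipschitz map $\|\cdot\|_1$ with the $C^1$ map $b(\mathbf{x})$: any selection $g \in \partial\|\cdot\|_1(b(\mathbf{x}))$ produces an element $J_b(\mathbf{x})^T g$ of the subdifferential, where $J_b$ is the Jacobian whose columns are the $\partial_{x_k} b$ above. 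Choosing $\operatorname{sgn}(b(\mathbf{x}))$ — admissible since $\operatorname{sgn}(0)=0\in[-1,1]$ — reproduces the stated vector.

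I expect the main obstacle to be exactly this last step: making the subgradient chain rule rigorous for a composition that is itself \emph{not} convex, because $b$ is nonlinear in $\mathbf{x}$. I would resolve it by working with the Clarke generalized gradient and its chain-rule inclusion $\partial F_{\mathrm{wgt}}(\mathbf{x}) \subseteq J_b(\mathbf{x})^T \partial\|\cdot\|_1(b(\mathbf{x}))$, which guarantees that the displayed vector lies in $\partial F_{\mathrm{wgt}}(\mathbf{x})$ even at points where some $b_i$ vanishes, while the generic-case computation confirms it is the genuine gradient wherever $b$ has full support.
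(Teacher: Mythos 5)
Your proof is correct, but it takes a genuinely different route from the paper's, so a comparison is worthwhile. You differentiate the defining relation $A_o(\mathbf{x})^T b(\mathbf{x}) = p^{(d)}$ implicitly, exploiting that $\partial A_o/\partial x_i = e_i\, p^{(1)}(x_i)^T$ is rank one, and extract the single identity $\partial_{x_i} b = -b_i\,[A_o(\mathbf{x})^T]^{-1} p^{(1)}(x_i)$ that then drives both parts by the outer chain rule. The paper instead uses separate machinery for the two parts: for $F_{\mathrm{unif}}$ it never differentiates $b$ directly, but writes $F_{\mathrm{unif}}(\mathbf{x}) = \frac{1}{2}\left\langle \left[A_o(\mathbf{x})^T A_o(\mathbf{x})\right]^{-1}, p^{(d)} \bigl(p^{(d)}\bigr)^T \right\rangle$ and composes four differentials (the map $\mathbf{x}\mapsto A_o(\mathbf{x})$, the Gram map $X\mapsto X^T X$, matrix inversion, and the Frobenius pairing with the constant matrix), then rearranges inner products to read off the gradient; for $F_{\mathrm{wgt}}$ it computes the differential of $g(\mathbf{x}) = [A_o(\mathbf{x})^T]^{-1} p^{(d)}$ via the derivative-of-inverse formula, which is essentially your implicit-differentiation step in different clothing, and pairs it with $\operatorname{sgn}(b)$. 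Your version is more elementary and unified (one rank-one perturbation identity, coordinatewise bookkeeping, no calculus on the inverse Gram matrix), and it is also more careful than the paper at the nonsmooth points, where the paper merely ``reads off'' a subgradient from a formal composition. One small correction on that last point: the inclusion $\partial F_{\mathrm{wgt}}(\mathbf{x}) \subseteq J_b(\mathbf{x})^T \partial\|\cdot\|_1\bigl(b(\mathbf{x})\bigr)$ you cite in your final paragraph does not by itself place the displayed vector inside $\partial F_{\mathrm{wgt}}(\mathbf{x})$; the membership requires the equality form of the Clarke chain rule, which does hold here because $\|\cdot\|_1$ is convex (hence Clarke-regular) and $b$ is $C^1$ wherever $A_o(\mathbf{x})$ is invertible --- and that equality statement is exactly what you correctly invoked one paragraph earlier, so the fix is purely a matter of citing the right direction.
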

The proof of the above lemma can be found in \cref{app-proof-grad}. 
For even $d$-order derivatives, similar results hold (see \cref{lem-grad-even} in \cref{app-proof-grad}), and the proofs follow analogously. 
Finally, if we wish to minimize the variance of the derivative estimates $\operatorname{Var}[\hat{f}^{\prime}(\bar{x})]$, for any given $\bar{x}$, the optimal shifts $\mathbf{x}^* \in \mathbb{R}^r$ need to satisfy $\nabla F_{\mathrm{unif}} (\mathbf{x}^*) = 0 ,$ or $\partial F_{\mathrm{wgt}} (\mathbf{x}^*) \ni 0.$ In general, obtaining an analytical solution is difficult. In practice, it is sufficient to solve it numerically using any (sub)gradient-based solver. Moreover, the expression of gradient of $F_{\mathrm{unif}} (\mathbf{x})$ can be used to verify whether a given $\mathbf{x}$ is the optimal solution to problem \cref{pro_P1}. 
In the following, we focus on analyzing the optimality of equidistant PSR nodes under the two shot schemes discussed above.

\subsection{Are equidistant PSR optimal?}\label{subsec:not-optimal}

In this subsection, we always suppose that \cref{assm-equi} holds.
In \cref{app-equ-are-special}, we have proved that the equidistant PSRs in \cref{eq:equidistant-shifts-1st,eq:equidistant-shifts-2rd} are special cases of EPSR.
We are curious whether the equidistant PSRs achieve the minimal variance for the criteria of \cref{pro_P1,pro_P2} under two different shot schemes, respectively.
The conclusion is: \textit{when using a uniform-shot scheme, it can be shown that equidistant PSRs are not optimal; however, when using a weighted-shot scheme, it can be proven that equidistant PSRs are globally optimal nodes!}
We now present the arguments for each case in turn. 

\subsubsection{Equidistant PSR are not optimal under uniform-shot scheme}

\begin{proposition}\label{prop-equ-is-not-optimal}
Let \cref{assm-equi} hold, i.e., $\Omega_k=k$. Consider order $d=1$. The equidistant PSR nodes, $x_i = \frac{\pi}{2r} + (i-1) \frac{\pi}{r}$ for $i = 1, 2, \ldots, r$, do not yield a zero gradient $\nabla F_{\mathrm{unif}} (\mathbf{x})$ and hence are not optimal solution for \cref{pro_P1}. In this case, $F_{\mathrm{unif}} (\mathbf{x})=\frac{2r^2+1}{6}$.
\end{proposition}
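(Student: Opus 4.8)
The plan is to reduce both assertions to a single explicit computation: the Gram matrix $A_o(\mathbf{x})^T A_o(\mathbf{x})$ evaluated at the equidistant nodes $x_i = \frac{(2i-1)\pi}{2r}$. First I would establish a discrete orthogonality relation. Writing $(A_o^T A_o)_{kl} = \sum_{i=1}^r \sin(k x_i)\sin(l x_i)$ and applying the product-to-sum identity reduces everything to $S(m) := \sum_{i=1}^r \cos\big(\tfrac{(2i-1)m\pi}{2r}\big)$. A geometric-sum evaluation gives $S(0)=r$, $S(2r)=-r$, and $S(m)=0$ for every integer $0<|m|<2r$ (of either parity). Since $k-l$ and $k+l$ stay strictly inside this range except that $k+l=2r$ occurs only at $k=l=r$, this collapses to the diagonal form $A_o^T A_o = \operatorname{diag}(\tfrac r2,\dots,\tfrac r2,\,r)$, hence $(A_o^T A_o)^{-1} = \operatorname{diag}(\tfrac 2r,\dots,\tfrac2r,\,\tfrac1r)$.

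For the value claim I would use the closed form already recorded in \cref{pro_P1}, namely $F_{\mathrm{unif}}(\mathbf{x}) = \tfrac12 (p^{(1)})^T (A_o^T A_o)^{-1} p^{(1)}$ with $p^{(1)} = [1,2,\dots,r]^T$. The diagonal inverse turns this into $\tfrac12\big(\tfrac2r\sum_{k=1}^{r-1}k^2 + \tfrac1r r^2\big)$, and substituting $\sum_{k=1}^{r-1}k^2 = \tfrac{(r-1)r(2r-1)}{6}$ collapses the expression to $\tfrac{2r^2+1}{6}$, as claimed.

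For the gradient claim I would invoke \cref{lem-grad-odd}, which yields $\nabla F_{\mathrm{unif}} = -\big(A_o^{(1)} A_o^{-1} b\big)\odot b$. Since $A_o^{-1}b = (A_o^T A_o)^{-1}p^{(1)} =: v$ with $v_k = \tfrac{2k}{r}$ for $k<r$ and $v_r=1$, the $i$-th entry of $g := A_o^{(1)}v$ is $g_i = \tfrac2r\sum_{k=1}^{r-1}k^2\cos(kx_i) + r\cos(r x_i)$. The decisive simplification is $\cos(r x_i) = \cos\big(\tfrac{(2i-1)\pi}{2}\big)=0$, so $g_i = \tfrac2r P(\cos x_i)$ where $P(t) := \sum_{k=1}^{r-1}k^2\, T_k(t)$ is a Chebyshev expansion of exact degree $r-1$ for $r\ge 2$. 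Because $T_r(\cos x_i)=\cos(r x_i)=0$, the $r$ numbers $\cos x_i$ are precisely the distinct roots of $T_r$, and a nonzero polynomial of degree $r-1$ cannot vanish at all $r$ of them; hence $g_i\neq 0$ for some $i$. As $b_i = \tfrac{(-1)^{i-1}}{2r\sin^2(x_i/2)}\neq 0$ for every $i$ (by the coefficient formula of \cref{eq-general-eq-psr-1st}), the corresponding coordinate of $\nabla F_{\mathrm{unif}} = -g\odot b$ is nonzero, so the node vector is not a critical point of \cref{pro_P1}.

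The main obstacle is this nonvanishing step. A naive root count is inconclusive: a degree-$(r-1)$ trigonometric polynomial admits up to $2r-2$ zeros in $[0,2\pi)$, which exceeds the $r$ nodes, so one must pass to the algebraic Chebyshev representation to obtain the sharp bound $\deg P = r-1 < r$. I would also flag that the argument genuinely requires $r\ge 2$: when $r=1$ the sum defining $P$ is empty, $g\equiv 0$, and the single node $x_1=\pi/2$ is in fact the minimizer, so the ``not optimal'' conclusion is a statement about $r\ge 2$ (while the value $\tfrac{2r^2+1}{6}=\tfrac12$ still holds at $r=1$).
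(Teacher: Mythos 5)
Your proposal is correct, and it shares the paper's overall skeleton --- the gradient formula of \cref{lem-grad-odd}, the Gram-matrix diagonalization $A_o(\mathbf{x})^T A_o(\mathbf{x}) = \operatorname{diag}(\tfrac r2,\dots,\tfrac r2, r)$ via discrete orthogonality, and an identical computation of the value $\tfrac{2r^2+1}{6}$ --- but it diverges at the one step that carries all the weight, namely showing the vector $g$ with $g_i = \tfrac2r\sum_{k=1}^{r-1}k^2\cos(k x_i)$ is not identically zero. The paper does this in one line: it looks only at the \emph{first} component, observes that $k x_1 = \tfrac{k\pi}{2r} \in (0,\tfrac\pi2)$ for $k=1,\dots,r-1$, so every summand $k^2\cos(kx_1)$ is strictly positive, and concludes $(\nabla F_{\mathrm{unif}}(\mathbf{x}))_1 < 0$. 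Your route instead writes $g_i = \tfrac2r P(\cos x_i)$ with $P=\sum_{k=1}^{r-1}k^2 T_k$ of exact degree $r-1$, notes that the $\cos x_i$ are the $r$ distinct roots of $T_r$, and invokes the degree-versus-root-count bound to get $g_i\neq 0$ for \emph{some} $i$. Both arguments are rigorous; the paper's is shorter, pinpoints a concrete nonzero component, and even gives its sign, while yours is more structural and would survive in situations where no single node lands in a region of uniform sign (here that robustness is not needed). Your proposal also has two merits the paper lacks: it re-derives the orthogonality relation rather than citing \cref{lem-ATA}, making it self-contained, and --- more substantively --- it flags that the entire claim requires $r\geq 2$: for $r=1$ the sum defining $g$ is empty, the gradient vanishes, and $x_1=\tfrac\pi2$ \emph{is} the global minimizer of \cref{pro_P1}, so the proposition as stated is false in that degenerate case. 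The paper's proof silently assumes $r\geq 2$ as well (its strict inequality needs a nonempty sum), so this is a genuine caveat you caught. One trivial slip: the formula $b_i = \tfrac{(-1)^{i-1}}{2r\sin^2(x_i/2)}$ you quote is \cref{eq-1051}, not the coefficient $c_i$ of \cref{eq-general-eq-psr-1st}, which carries an extra factor $\tfrac12$; since only nonvanishing of $b_i$ is used, nothing breaks.
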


The proof of above proposition can be found in \cref{app-equ-optimal}.
\cref{prop-equ-is-not-optimal} raises a cautionary issue. Given the equidistant frequency, the equidistant PSR does not allow us to achieve the minimal derivative variance $\operatorname{Var}_{\mathrm{unif}}[\hat{f}^{\prime} (\bar{x})]$ when using a uniform-shot scheme. 
The equidistant PSR \cite{wierichs2022general} has been widely used, and it is common for users to subconsciously set a constant number of shots for each function evaluation. 
While this may not ultimately affect the convergence of various gradient-based algorithms for parameter training of PQCs, it still wastes quantum resources, as fewer shots could have yielded equally stable gradient estimates.
In conclusion, under the uniform-shot scheme, there exist shifts that outperform the equidistant PSR; however, these optimal shifts require solving problem \cref{pro_P1} numerically. 
By a similar line of reasoning, when order $d=2$, we also conclude that the equidistant PSR in \cref{eq:equidistant-shifts-2rd} cannot produce a zero gradient of $F_{\mathrm{unif}} (\mathbf{x})$ (see \cref{lem-grad-even} in \cref{app-proof-grad}). Indeed, this result holds for any $d$.

\subsubsection{Equidistant PSR are globally optimal under weighted-shot scheme}

The next theorem shows that when using a weighted-shot scheme, the equidistant PSRs are the globally optimal solutions of \cref{pro_P2}. The proof of next theorem can be found in \cref{app-global}.

\begin{theorem}[Global optimality of equidistant PSR]\label{thm-global}
Let \cref{assm-equi} hold, i.e., $\Omega_k=k$. Then,
\begin{enumerate}
    \item  For any odd integer $d\geq1$, the equidistant PSR nodes $x_i=\frac{\pi}{2 r}+(i-1) \frac{\pi}{r}, i=1, \ldots, r$, globally solve
\begin{equation}
\min _{\mathbf{x} \in \mathbb{R}^r} F_{\mathrm{wgt}}(\mathrm{x})=\|b(\mathrm{x})\|_1=\left\|\left[A_o(\mathrm{x})^T\right]^{-1} p^{(d)}\right\|_1
\end{equation}
and minimum value is $r^d$.
\item For any even integer $d\geq 2$, the equidistant PSR nodes $x_i=\frac{i \pi}{r}, i=0,1, \ldots, r$, globally solve
\begin{equation}
\min _{\mathbf{x} \in \mathbb{R}^r} F_{\mathrm{wgt}}(\mathrm{x})=\|b(\mathrm{x})\|_1=\left\|\left[A_e(\mathrm{x})^T\right]^{-1} q^{(d)}\right\|_1
\end{equation}
and minimum value is $r^d$.
\end{enumerate}
Notice that the feasible regions of above two optimization problems are given in \cref{thm-cond-nonsingular}.
\end{theorem}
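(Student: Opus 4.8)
The plan is to establish a matching lower bound and an achievement result: show that $\|b(\mathbf{x})\|_1 \ge r^d$ holds for \emph{every} feasible node vector $\mathbf{x}$, and that the equidistant nodes attain exactly this value. The lower bound is the conceptual core, and it follows from the exactness of the EPSR quadrature derived in \cref{sec:Derivations} together with a single well-chosen test function, so that no delicate optimization over $\mathbf{x}$ is needed.

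For the odd case, recall that $b(\mathbf{x}) = [A_o(\mathbf{x})^T]^{-1}p^{(d)}$ is precisely the coefficient vector making the identity $f_o^{(d)}(0) = \sum_{i=1}^r b_i(\mathbf{x}) f_o(x_i)$ exact for every odd trigonometric polynomial $f_o(x) = \sum_{k=1}^r c_k \sin(kx)$ of degree at most $r$. I would instantiate this identity at the admissible test function $f_o(x) = \sin(rx)$ (admissible since $r \in \{\Omega_k\}$). Because $\sin^{(d)}(rx)\big|_{x=0} = (-1)^{(d-1)/2}r^d$ for odd $d$, the identity reads $(-1)^{(d-1)/2}r^d = \sum_{i=1}^r b_i(\mathbf{x})\sin(r x_i)$. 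Taking absolute values and applying H\"older's inequality with $|\sin(r x_i)| \le 1$ yields $r^d \le \|b(\mathbf{x})\|_1$, uniformly over all feasible $\mathbf{x}$; this is a discrete Bernstein-type bound. The even case is identical, using $A_e$, the test function $\cos(rx)$, and $\cos^{(d)}(rx)\big|_{x=0} = (-1)^{d/2}r^d$.

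It remains to verify that the equidistant nodes saturate the bound, i.e. that both inequalities become equalities there. At $x_i = \tfrac{\pi}{2r} + (i-1)\tfrac{\pi}{r}$ one has $r x_i = \tfrac{\pi}{2} + (i-1)\pi$, hence $\sin(r x_i) = (-1)^{i-1}$ and $|\sin(r x_i)| = 1$ for every $i$, making the H\"older step tight automatically. The outstanding requirement is that all summands $b_i \sin(r x_i) = b_i(-1)^{i-1}$ share a common sign, so that $\bigl|\sum_i b_i\sin(r x_i)\bigr| = \sum_i |b_i| = \|b\|_1$, which then equals $r^d$. Equivalently, I must show the alternating sign pattern $\operatorname{sgn}(b_i) = (-1)^{(d-1)/2}(-1)^{i-1}$. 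To obtain $b$ in closed form I would exploit the discrete orthogonality of the sine system at these nodes, which gives $A_o(\mathbf{x})^T A_o(\mathbf{x}) = \operatorname{diag}(\tfrac r2,\dots,\tfrac r2, r)$ and therefore $b_i = \tfrac{(-1)^{(d-1)/2}}{r}\bigl[2\sum_{k=1}^{r-1}k^d\sin(k x_i) + r^d(-1)^{i-1}\bigr]$; the sign claim then reduces to proving $(-1)^{i-1}\bigl[2\sum_{k=1}^{r-1}k^d\sin(k x_i) + r^d(-1)^{i-1}\bigr] > 0$. The even case runs in parallel via the cosine (DCT-type) orthogonality of $A_e$ at $x_i = i\pi/r$, where $\cos(r x_i) = (-1)^i$.

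The main obstacle is exactly this sign analysis for general $d$: a naive magnitude comparison fails, since $\sum_{k=1}^{r-1}k^d$ can dominate $r^d$, so the correct sign must come from oscillatory cancellation in $\sin(k x_i)$ rather than from term sizes. I would handle it by writing the trigonometric sum $\sum_{k=1}^r k^d\sin(k\theta)$ in closed form as a $d$-th order derivative of the Dirichlet kernel (equivalently, applying $\theta \mapsto \tfrac{d}{d\theta}$ repeatedly to the geometric sum $\sum_k e^{ik\theta}$), whose denominator is a positive power of $\sin^2(\theta/2)$ and whose numerator's sign at the nodes $\theta_i$ can be tracked directly; the $d=1$ instance already recovers the known coefficient $b_i = (-1)^{i-1}/(2r\sin^2(x_i/2))$, confirming the pattern, and an induction in steps of two in $d$ (using $p^{(d+2)} = -\operatorname{diag}(\Omega_k^2)\,p^{(d)}$) or a direct kernel estimate extends it to all orders. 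Feasibility of the equidistant nodes is immediate from \cref{thm-cond-nonsingular}, and the attained value $r^d$ matches the lower bound, which completes the proof of global optimality.
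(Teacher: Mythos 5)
Your proposal is correct and is essentially the paper's own proof: your quadrature-exactness-plus-H\"older lower bound is the paper's weak LP-duality argument in disguise (your test function $(-1)^{(d-1)/2}\sin(rx)$ is exactly the paper's dual certificate $y_{\mathrm{lb}}=(-1)^{(d-1)/2}[0,\dots,0,1]^T$, since $A_o(\mathbf{x})\,y_{\mathrm{lb}}$ is the vector of that function's node values), and your saturation step rests on the same discrete orthogonality $A_o(\mathbf{x})^T A_o(\mathbf{x})=\operatorname{Diag}(\tfrac{r}{2},\dots,\tfrac{r}{2},r)$ plus the same alternating-sign claim $\operatorname{sgn}(b_i)=(-1)^{(d-1)/2}(-1)^{i-1}$. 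That sign claim is equally the crux of (and is equally only sketched in) the paper's proof, which invokes an Abel-summation decomposition $\sin(rt)\,P_{r,d}(\cos t)$ with $P_{r,d}>0$ on $(0,\pi)$, so your Dirichlet-kernel plan matches it in substance and level of detail; the one caveat is that in the even case $A_e(\mathbf{x})^T A_e(\mathbf{x})$ is \emph{not} diagonal, so "runs in parallel" there requires the explicit non-diagonal inverse that the paper computes in its appendix.
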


We next present two experiments demonstrating that the equidistant PSR is indeed a global minimizer numerically. 

\paragraph{Test I: landscapes of $F_{\mathrm{wgt}}(\mathbf{x})$.}

We consider the case of the equidistant frequency set $\{1,2\}$ with size $r = 2$, and examine odd derivatives $d = 1, 3, 5$.
In these cases, the adjustable nodes are $\mathbf{x} = (x_0,x_1) \in \mathbb{R}^2$. 
We plot the landscapes of the objective function $F_{\mathrm{wgt}}(\mathbf{x})$ in \cref{pro_P2}, as shown in panels (a), (c), and (e) of \cref{fig:min-l1norm}.
It is evident that the equidistant PSR shifts $\mathbf{x} =\left (\frac{\pi}{4}, \frac{3\pi}{4} \right)$ correspond to the global minimizers of $F_{\mathrm{wgt}}(\mathbf{x})$.
For even orders $d = 2, 4, 6$, the adjustable shifts are $\mathbf{x} = (x_0, x_1, x_2) \in \mathbb{R}^3$. To facilitate visualization, we fix $x_0 \equiv 0$ and plot $F_{\mathrm{wgt}}(\mathbf{x})$ as a function of $x_1$ and $x_2$ in panels (b), (d), and (f) of \cref{fig:min-l1norm}. As in the odd order case, the equidistant PSR shifts $\left(0, \frac{\pi}{2}, \pi \right)$ are observed to be the global minimizers.
These results indicate that the optimality of equidistant PSR nodes is independent of the derivative order $d$, when classified by parity.

\paragraph{Test II: error between numerical global solution and equidistant PSR nodes.}

We now consider the combinations with sizes $r = 1$ to $8$ (frequency set $\{1,\ldots,r\}$) and orders $d = 1$ to $8$, categorized by parity.
We use the differential evolution (DE) \cite{storn1997differential} solver to obtain numerical solutions to \cref{pro_P2}. 
Since DE is a global optimization solver, we assume it yields an approximate global minimizer $\mathbf{x}_{\mathrm{DE}}$. As shown in \cref{fig:h}, each color block represents the maximum error defined as $\max_i |(\mathbf{x}_{\mathrm{DE}})_i- (\mathbf{x}_{\mathrm{equi}})_i|$, where $\mathbf{x}_{\mathrm{equi}}$ denotes equidistant PSR nodes. The results demonstrate that the equidistant PSRs are globally optimal across all tested configurations.

\begin{figure}[htbp]
  \centering
  \begin{subfigure}[b]{0.41\textwidth}
    \includegraphics[width=\linewidth]{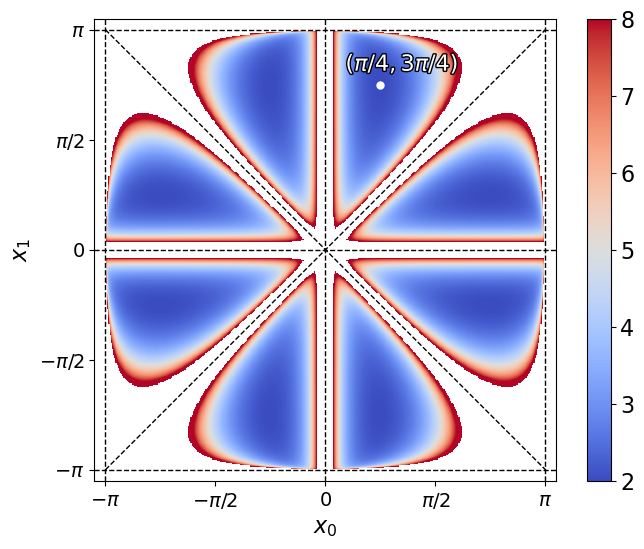}
    \caption{$d = 1$}
  \end{subfigure}
  \hspace{0.05\linewidth}
  \begin{subfigure}[b]{0.41\textwidth}
    \includegraphics[width=\linewidth]{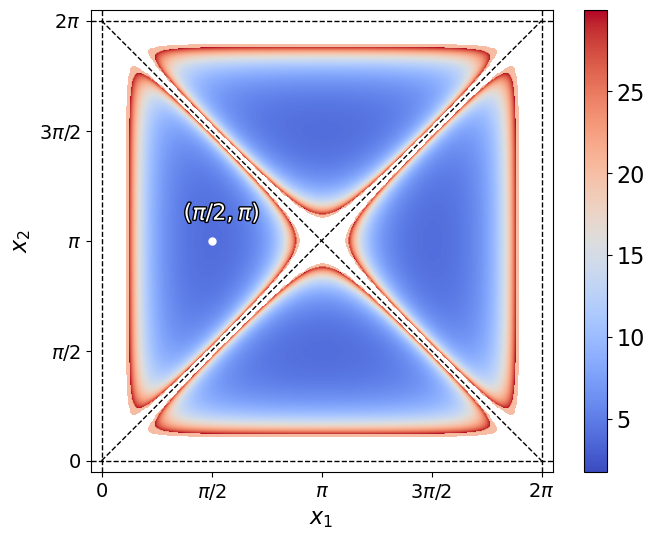}
    \caption{$d = 2$}
  \end{subfigure}

\vspace{0.5cm}

\begin{subfigure}[b]{0.41\textwidth}
    \includegraphics[width=\linewidth]{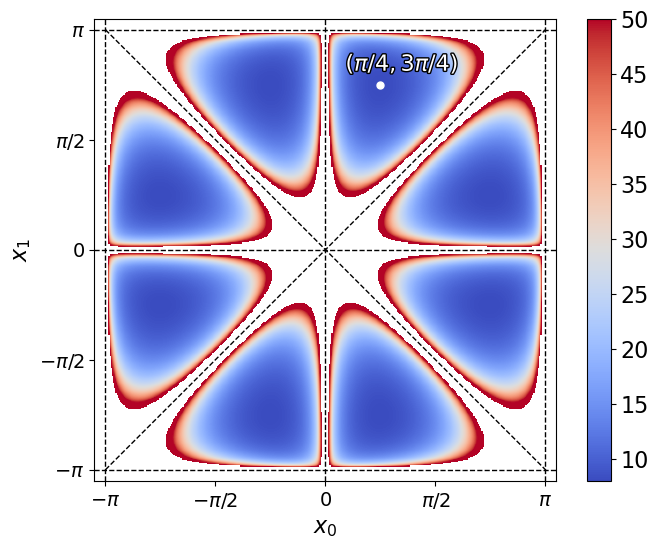}
    \caption{$d = 3$}
  \end{subfigure}
  \hspace{0.05\linewidth}
  \begin{subfigure}[b]{0.41\textwidth}
    \includegraphics[width=\linewidth]{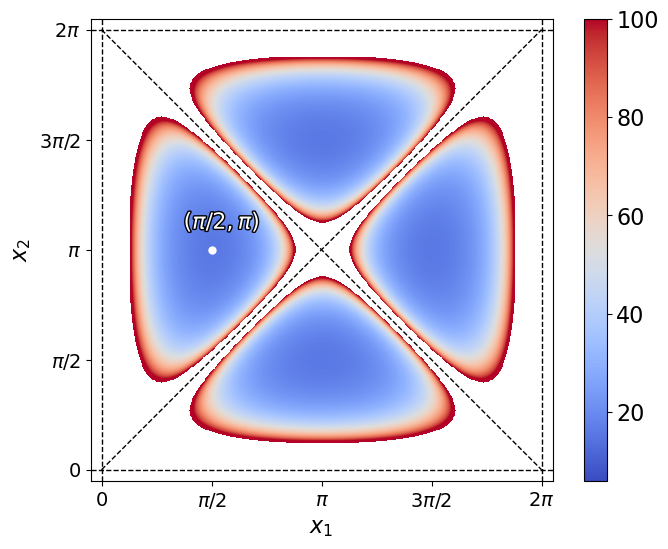}
    \caption{$d = 4$}
  \end{subfigure}

\vspace{0.5cm}

\begin{subfigure}[b]{0.41\textwidth}
    \includegraphics[width=\linewidth]{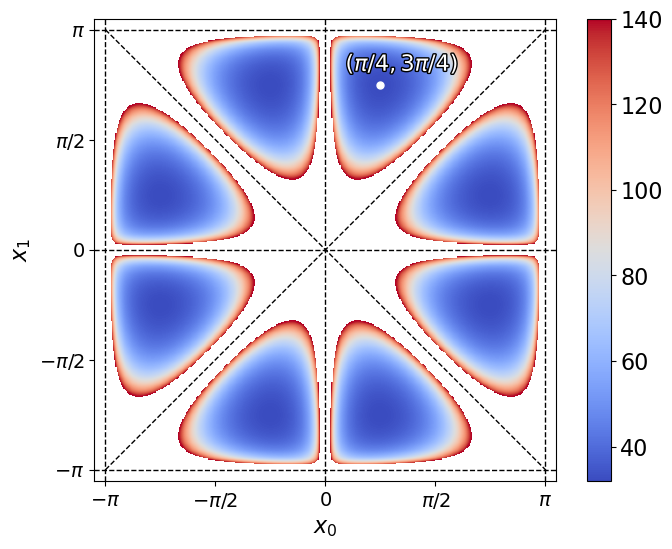}
    \caption{$d = 5$}
  \end{subfigure}
  \hspace{0.05\linewidth}
  \begin{subfigure}[b]{0.41\textwidth}
    \includegraphics[width=\linewidth]{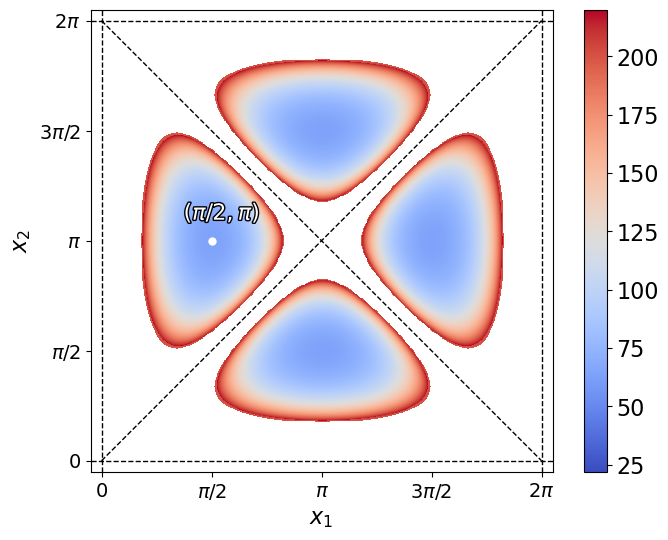}
    \caption{$d=6$}
  \end{subfigure}

\caption{Landscapes of $F_{\mathrm{wgt}}(\mathbf{x})$ for equidistant frequency set $\{1,2\}$ with $r=2$, across derivative orders $d=1$ to $6$. Panels (a), (c), and (e) show odd orders $d=1,3,5$ with $\mathbf{x} \in \mathbb{R}^2$, where the global minimizers are the equidistant PSR shifts $\left(\frac{\pi}{4}, \frac{3\pi}{4}\right)$. Panels (b), (d), and (f) show even orders $d=2,4,6$ with $\mathbf{x} \in \mathbb{R}^3$ (fixing $x_0 \equiv 0$), where the global minimizers are the equidistant PSR shifts $\left(0, \frac{\pi}{2}, \pi\right)$.}
\label{fig:min-l1norm}
\end{figure}

\begin{figure}[htbp]
    \centering
    \begin{subfigure}{0.3\linewidth}
        \includegraphics[width=\linewidth]{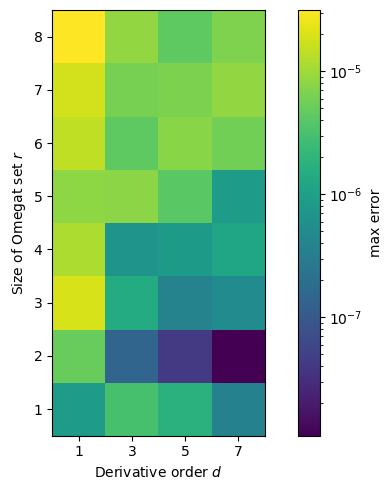}
    \end{subfigure}
    \hspace{0.05\linewidth}
    \begin{subfigure}{0.3\linewidth}
        \includegraphics[width=\linewidth]{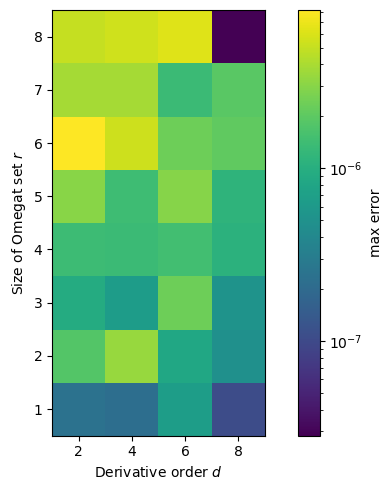}
    \end{subfigure}
    \caption{Maximum error between the numerical global minimizer obtained by differential evolution (DE), $\mathbf{x}_{\mathrm{DE}}$, and the equidistant PSR nodes, $\mathbf{x}_{\mathrm{equi}}$, for combinations of sizes $r = 1$ to $8$ and orders $d = 1$ to $8$, categorized by parity. Each color block indicates $\max_i |(\mathbf{x}_{\mathrm{DE}})_i - (\mathbf{x}_{\mathrm{equi}})_i|$.}
    \label{fig:h}
\end{figure}

\subsubsection{Conclusion of optimal shifts}

Finally, we conclude all results for the optimal shifts, where the criterion is the minimization of the derivative variance, making the estimated derivatives more robust to statistical noise.
\cref{tab:psr_comparison} summarizes the optimal shifts under different shot allocation schemes and frequency types.
It should be noted that the equidistant PSRs are only valid under the equidistant frequency set (\cref{assm-equi}), whereas the EPSR can handle non-equidistant frequencies, e.g., in the numerical experiments later, we will consider the frequency set $\{1,2,4\}$.
Importantly, for any chosen shifts $\mathbf{x}$, the derivative variance obtained by the weighted-shot scheme is always smaller than that of the uniform-shot scheme.
Therefore, regardless of the type of frequency set, one should always choose the weighted-shot. Further, if the frequencies are equidistant, the equidistant PSRs node should be selected; otherwise, one needs to numerically solve $\arg \min_{\mathbf{x}} F_{\mathrm{wgt}}(\mathrm{x}) =\|b(\mathrm{x})\|_1$ to get optimal node.

\begin{table}[htbp]
\centering
\renewcommand{\arraystretch}{1.2}
\setlength{\tabcolsep}{8pt}
\begin{tabular}{l p{6cm} p{6cm}}
\toprule
& \centering\textbf{Equidistant frequency $\Omega_k = k$} & \centering\textbf{Non-equidistant frequency $\Omega_k$} \tabularnewline
\midrule
\textbf{Weighted-shot scheme} &
\centering $\text{Equidistant PSR} = \displaystyle\arg\min_{\mathbf{x}} F_{\mathrm{wgt}}(\mathbf{x})$ \newline (\cref{thm-global}) &
\centering $\displaystyle\arg\min_{\mathbf{x}} F_{\mathrm{wgt}}(\mathbf{x})$ \tabularnewline
\addlinespace
\textbf{Uniform-shot scheme} &
\centering$\text{Equidistant PSR} \neq \displaystyle\arg\min_{\mathbf{x}} F_{\mathrm{unif}}(\mathbf{x})$ 
\newline (\cref{prop-equ-is-not-optimal}) &
\centering$\displaystyle\arg\min_{\mathbf{x}} F_{\mathrm{unif}}(\mathbf{x})$ \tabularnewline
\bottomrule
\end{tabular}
\caption{Optimal shifts $\mathbf{x}$ with minimal derivative variance under different shot allocation schemes and frequency types.}
\label{tab:psr_comparison}
\end{table}

\section{Numerical experiments}\label{sec-experiments}

In this section, we use the numerical simulations to validate the various conclusions drawn earlier. 
To implement the quantum circuits, we used the IBM Qiskit Aer Simulator \cite{qiskit}, which simulates the sampling process and the noisy environment of a real machine.
All tests were executed on a computer equipped with an AMD Ryzen 7 8845H CPU and 32 GB of RAM. 
The code is publicly available \cite{EPSRcode}.

\subsection{Problem setting}

In what follows, we consider a cost function derived from XXZ model with Hamiltonian variational ansatz (HVA) from \cite{wiersema2020exploring}.
Let $q$ denote the number of qubits. 
The symbol $\sigma_i^\alpha$ $(\alpha=x, y, z)$ corresponds to the Pauli operator acting on an $i$-th qubit. 
The Hamiltonian for XXZ model is given by
\begin{equation}
 H_{\mathrm{XXZ}} =H_{x x}+H_{y y}+\Delta H_{z z},
\end{equation}
where $H_{\alpha\alpha}=\sum_{i=1}^q \sigma_i^\alpha \sigma_{i+1}^\alpha$ $(\alpha=x, y, z)$ and $\Delta = 0.5$. Here, we use periodic boundary conditions, i.e., $\sigma_{q+1}^\alpha \equiv \sigma_1^\alpha$ $(\alpha=x, y, z)$. 
A depth-$p$ HVA circuit for the XXZ model corresponds to
\begin{equation}
\begin{aligned}
U(\boldsymbol{\theta}, \boldsymbol{\phi},\boldsymbol{\beta}, \boldsymbol{\gamma}) 
=  \prod_{l=1}^p G\left(\theta_l, H_{z z}^{\text {odd }}\right) G\left(\phi_l, H_{y y }^{\text {odd }}\right) G\left(\phi_l, H_{x x }^{\text {odd }}\right)
 G\left(\beta_l, H_{z z}^{\text {even }}\right) G\left(\gamma_l, H_{y y }^{\text {even }}\right) G\left(\gamma_l, H_{x x }^{\text {even }}\right),
\end{aligned}
\end{equation}
where $G(x, H)=\exp \left(-i \frac{x}{2} H \right)$, $H_{\alpha \alpha}^{\mathrm{even}}=\sum_{i=1}^{\lfloor q/2 \rfloor} \sigma_{2 i-1}^\alpha \sigma_{2 i}^\alpha $ and $H_{\alpha \alpha}^{\text {odd }}=\sum_{i=1}^{\lfloor q/2 \rfloor} \sigma_{2 i}^\alpha \sigma_{2 i+1}^\alpha$ for $\alpha=x, y, z$. Thus, there are $4 p$ parameters. 
\cref{fig:qc_XXZ_HVA} illustrates the HVA circuit for $q=5$ and $p=1$. 

In our experiment, we set $q=5$ and $p=2$. In this case,  we have $\boldsymbol{\theta} = (\theta_1, \phi_1, \beta_1, \gamma_1, \theta_2, \phi_2, \beta_2, \gamma_2) \in \mathbb{R}^{8}$. We observe that the parameters $\theta_1, \beta_1, \theta_2, \beta_2$ correspond to the equidistant frequency $\{1,2\}$ with $r=2$, while the parameters $\phi_1, \gamma_1, \phi_2$ correspond to the equidistant frequency $\{1,2,3,4\}$ with $r=4$. However, the last parameter $\gamma_2$ corresponds to the non-equidistant frequency $\{1,2,4\}$ with $r=3$. Note that for the last parameter $\gamma_2$, we cannot use equidistant PSRs.

\begin{figure}[htbp]
    \centering
\scalebox{0.8}{
\Qcircuit @C=1em @R=0.8em @!R { \\
\nghost{{q}_{0} :  } & \lstick{\ket{0}} &\gate{X} &\gate{H} &\ctrl{1} &\qw &\qw &\qw  &\qw &\qw &\qw  &\qw &\qw &\multigate{1}{R_{zz}(\beta)} &\qw &\multigate{1}{R_{YY}(\gamma)} &\multigate{1}{R_{XX}(\gamma)} &\qw &\qw \\
\nghost{{q}_{1} :  } & \lstick{\ket{0}} &\gate{X} &\qw &\targ &\qw &\qw &\multigate{1}{R_{zz}(\theta)}   &\qw &\multigate{1}{R_{YY}(\phi)}  &\multigate{1}{R_{XX}(\phi)} &\qw &\qw &\ghost{R_{zz}(\beta)} &\qw &\ghost{R_{YY}(\gamma)} &\ghost{R_{XX}(\gamma)} &\qw &\qw \\
\nghost{{q}_{2} :  } & \lstick{\ket{0}} &\gate{X} &\gate{H} &\ctrl{1} &\qw &\qw &\ghost{R_{zz}(\theta)} &\qw &\ghost{R_{YY}(\phi)}&\ghost{R_{XX}(\phi)}&\qw &\qw &\multigate{1}{R_{zz}(\beta)} &\qw &\multigate{1}{R_{YY}(\gamma)} &\multigate{1}{R_{XX}(\gamma)} &\qw &\qw \\
\nghost{{q}_{3} :  } & \lstick{\ket{0}} &\gate{X} &\qw &\targ &\qw &\qw &\multigate{1}{R_{zz}(\theta)}  &\qw &\multigate{1}{R_{YY}(\phi)} &\multigate{1}{R_{XX}(\phi)}  &\qw &\qw &\ghost{R_{zz}(\beta)} &\qw &\ghost{R_{YY}(\gamma)} &\ghost{R_{XX}(\gamma)} &\qw &\qw \\
\nghost{{q}_{4} :  } & \lstick{\ket{0}} &\gate{X} &\qw &\qw &\qw &\qw &\ghost{R_{zz}(\theta)}  &\qw &\ghost{R_{YY}(\phi)} &\ghost{R_{XX}(\phi)} &\qw &\qw &\qw &\qw &\qw &\qw &\qw &\qw  \gategroup{3}{8}{6}{8}{1.2em}{--}\gategroup{3}{10}{6}{11}{1.2em}{--}\gategroup{2}{14}{5}{14}{1.2em}{--}\gategroup{2}{16}{5}{17}{1.2em}{--}\\
\nghost{{q}_{5} :  } & & & & & &  &\mbox{$\theta$} & & \mbox{$\quad \quad \quad \quad \quad \quad \phi$}  & & & &\mbox{$\beta$} & & \mbox{$\quad\quad\quad\quad\quad \quad \gamma$} & & & &\\
\\ }
}
\caption{The HVA quantum circuit for the XXZ model with $q = 5$ qubits and $p = 1$ layer.}
\label{fig:qc_XXZ_HVA}
\end{figure}
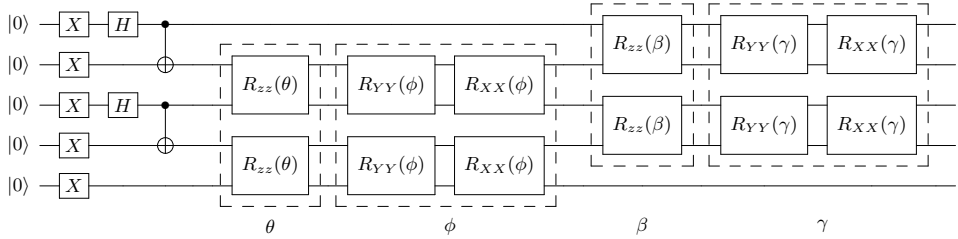

\subsection{Result I: validating the correctness of EPSR in a noise-free ideal setting}

To verify the correctness of EPSR in \cref{thm:main-psr}, we first consider a noise-free, ideal setting, where the exact value of the cost function in the XXZ model with HVA can be evaluated.
The Python package \verb|numdifftools| is used to compute derivatives of various orders, serving as the reference ground truth.
We randomly generate the parameter vector
$\boldsymbol{\theta} = (\theta_1, \phi_1, \beta_1, \gamma_1, \theta_2, \phi_2, \beta_2, \gamma_2) \in \mathbb{R}^8$.
For each parameter, we fix all others and compute partial derivatives of orders $d = 1$ to $6$ at the current value.
In each case, we apply EPSR with 10 randomly generated sets of shifts $\mathbf{x}$ and calculate the mean error relative to the \verb|numdifftools| reference.

The results are presented in \cref{fig:exp1}.
In the absence of function estimation noise, the derivatives obtained using any set of nodes are identical.
As expected, the error increases with larger $d$, primarily due to the accumulation of numerical errors in \verb|numdifftools| itself as the derivative order grows.
For the most commonly used cases, $d = 1$ and $d = 2$, the errors remain well within an acceptable range ($<10^{-9}$), confirming the practical reliability of EPSR.

\begin{figure}[htbp]
    \centering
    \begin{subfigure}[b]{0.45\linewidth}
        \includegraphics[width=\linewidth]{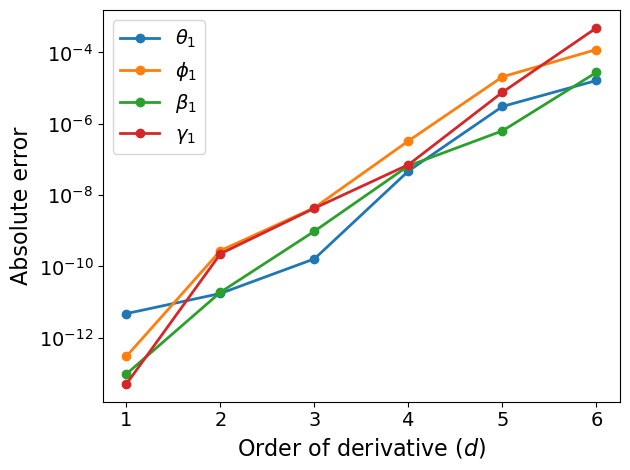}
    \end{subfigure}
    \hfill
    \begin{subfigure}[b]{0.45\linewidth}
        \includegraphics[width=\linewidth]{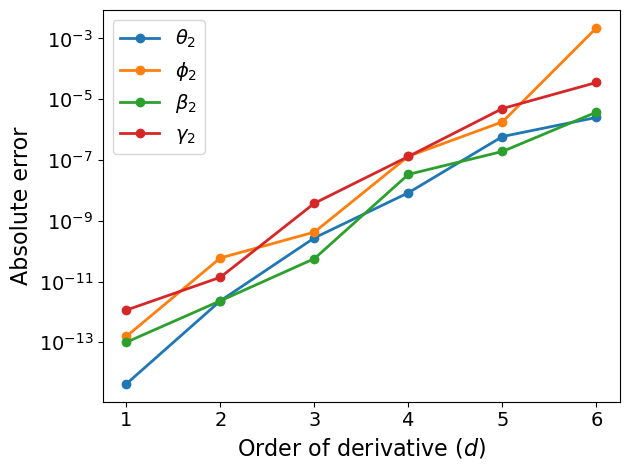}
    \end{subfigure}
    \caption{Error of EPSR derivatives compared to numdifftools derivatives for different orders in a noise-free ideal setting.}
    \label{fig:exp1}
\end{figure}

\subsection{Result II: weighted-shot scheme outperforms uniform-shot scheme}

In \cref{lem-opt-shot} of \cref{sec-two-shots}, we have theoretically proved that the weighted-shot scheme outperforms the uniform-shot scheme for any given shifts $\mathbf{x}$. In this experiment, we will show that weighted-shot indeed has smaller derivative variance in practice. 
We only consider the case where order $d=1$. 
Given any fixed total shooting counts of $N_{\mathrm{total}}>0$ and unknown constant $\sigma^2$, and based on \cref{eq-var-unif,eq-var-wgt}, we use the following scaled quantities as criteria for the derivative variance:
\begin{align}
\operatorname{\overline{Var}}_{\mathrm{unif}} & :=\frac{N_{\mathrm{total}}}{\sigma^2}    \operatorname{Var}_{\mathrm{unif}}[\hat{f}^{\prime} ]
    =  r\| b (\mathbf{x}) \|_2^2
    =  2 r F_{\mathrm{unif}}(\mathbf{x}),\\
\operatorname{\overline{Var}}_{\mathrm{wgt}}& :=\frac{N_{\mathrm{total}}}{\sigma^2}    \operatorname{Var}_{\mathrm{wgt}}[\hat{f}^{\prime} ]
    = \| b (\mathbf{x}) \|_1^2
    =[F_{\mathrm{wgt}}(\mathbf{x})]^2.   
\end{align}
Note that, in the following, the term \textit{derivative variance} refers to the quantities defined above.
Then, the variance ratio is defined by
\begin{equation}
    \text{Var}_{\text{ratio}} = \frac{\operatorname{\overline{Var}}_{\mathrm{unif}}}{\operatorname{\overline{Var}}_{\mathrm{wgt}}} = \frac{\frac{(2r^2 + 1)r}{3}}{r^2}= \frac{2r^2 + 1}{3r}.
\end{equation}
This ratio increases with larger values of $r$. 
In our XXZ model problem, we randomly choose a set of variables $\boldsymbol{\theta} = (\theta_1, \phi_1, \beta_1, \gamma_1, \theta_2, \phi_2, \beta_2, \gamma_2) \in \mathbb{R}^{8}$, and compute the partial derivatives with respect to the first two parameters, $\theta_1$ and $\phi_1$. 
Using equidistant PSR nodes and applying different shot allocation schemes (uniform vs. weighted), we set $N_{\mathrm{total}} = 1000$ and perform 500 sampling rounds. 
For each shot scheme, we estimate the probability density function (PDF) using Kernel density estimation and generate subplots.

For $\theta_1$ with $r = 2$, as shown in \cref{fig:exp2-1}, there is a slight difference between the weighted and uniform schemes, but the gap is not significant, with $\text{Var}_{\text{ratio}} = 1.5$. However, for $\phi_1$ with $r = 4$, where $\text{Var}_{\text{ratio}} = 2.75$, as seen in \cref{fig:exp2-2}, the difference becomes much more pronounced, clearly illustrating the advantage of the weighted scheme as $r$ increases. As can be seen from the figures, for a given total number of shots $N_{\mathrm{total}}$, the variance of the weighted scheme is clearly smaller than that of the uniform scheme.

\begin{figure}[htbp]
    \centering
    \begin{subfigure}[b]{0.49\linewidth}
        \includegraphics[width=\linewidth]{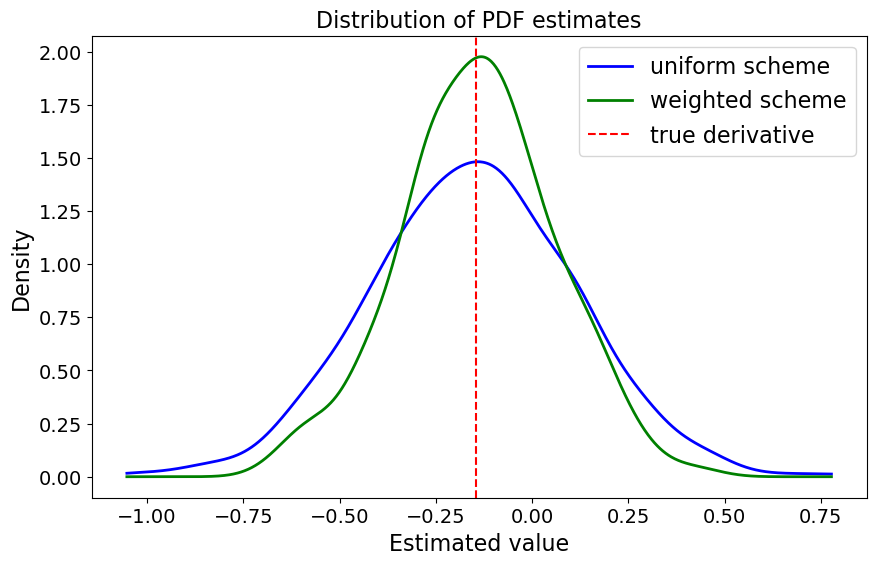}
        \caption{The variable $\theta_1$ with $r = 2$ has a derivative variance of 6 for the uniform-shot scheme, and a minimum value of 4 for the weighted-shot scheme.}
        \label{fig:exp2-1}
    \end{subfigure}
    \hfill
    \begin{subfigure}[b]{0.49\linewidth}
        \includegraphics[width=\linewidth]{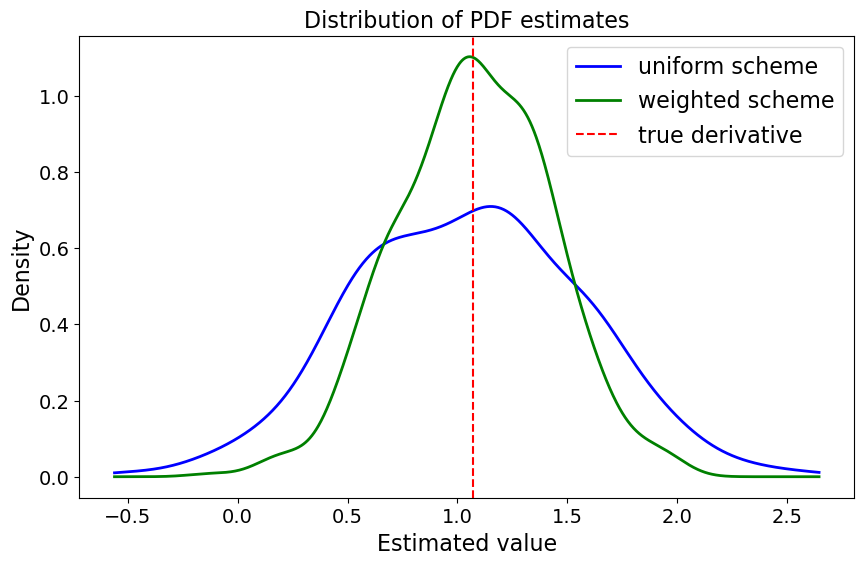}
        \caption{The variable $\phi_1$ with $r = 4$ has a derivative variance of 44 for the uniform-shot scheme, and a minimum value of 16 for the weighted-shot scheme.}
        \label{fig:exp2-2}
    \end{subfigure}
    \caption{Comparison of PDF distributions under different $r$ values and different shot allocation schemes (uniform vs. weighted) using equidistant PSR nodes.}
    \label{fig:exp2}
\end{figure}

\subsection{Result III: equidistant PSR nodes outperform random nodes under the weighted-shot scheme}

In this experiment, we focus on the weighted-shot scheme, and compare the performance of equidistant PSR nodes with that of various randomly chosen nodes. 
\cref{thm-global} has shown that equidistant PSR nodes are the optimal choice when using the weighted-shot scheme.
However, it is important to note that all our variance-related results are derived under the constant variance \cref{ass-var}, which does not hold in reality.
Therefore, we need to examine the practical performance to assess whether the conclusions drawn from this assumption remain reasonable in practice. 
Fortunately, the results are affirmative. 
(A more rigorous theoretical analysis is provided in \cref{app-error-bounds}.)
 
In our XXZ model problem, we randomly select a parameter vector $\boldsymbol{\theta} \in \mathbb{R}^8$ and compute the partial derivatives with respect to the first two parameters, $\theta_1$ and $\phi_1$. We set $N_{\mathrm{total}} = 1000$ and perform 500 sampling rounds. For each nodes $\mathbf{x}$, we estimate the probability density function (PDF) using Kernel density estimation and generate subplots.
The results are shown in the \cref{fig:exp3}. The results clearly show that equidistant PSR nodes consistently outperform random configurations by a large margin. This highlights the importance of node selection in EPSR. 

\begin{figure}[htbp]
    \centering
    \begin{subfigure}[b]{0.49\linewidth}
        \includegraphics[width=\linewidth]{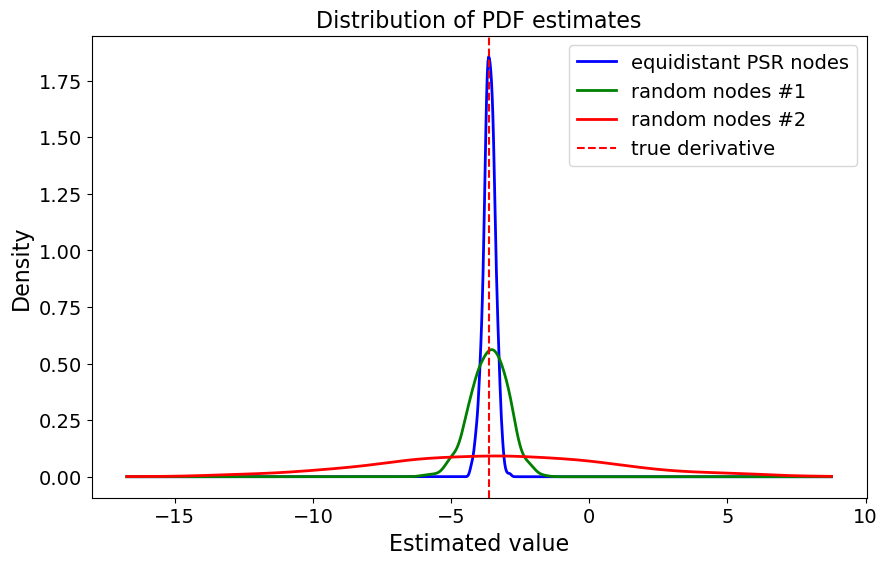}
        \caption{The variable $\theta_1$ with $r = 2$ has a derivative variance of approximately 31 for random nodes \#1, and 2,237 for random nodes \#2. The minimal derivative variance is 4 with equidistant PSR nodes.}
        \label{fig:exp3-1}
    \end{subfigure}
    \hfill
    \begin{subfigure}[b]{0.49\linewidth}
        \includegraphics[width=\linewidth]{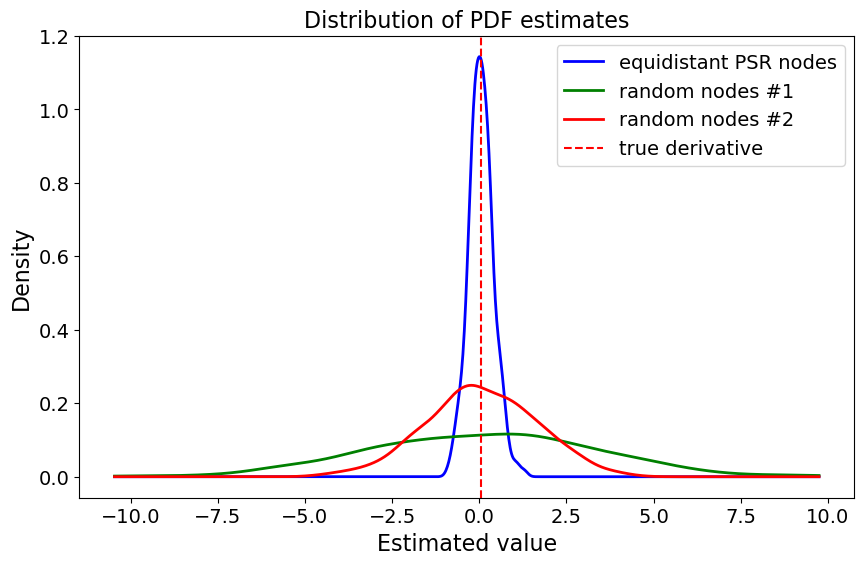}
        \caption{The variable $\phi_1$ with $r = 4$ has a derivative variance of approximately 1,674 for random nodes \#1, and 271 for random nodes \#2. The minimal derivative variance is 16 with equidistant PSR nodes.}
        \label{fig:exp3-2}
    \end{subfigure}
    \caption{Comparison of PDF distributions under different $r$ values and different nodes (equidistant PSR vs. random) using weighted-shot scheme.}
    \label{fig:exp3}
\end{figure}

\section{Discussion}\label{sec-discussion}

In this work, we introduce the extended parameter shift rule (EPSR), a unified framework for computing arbitrary-order derivatives in PQCs. Unlike traditional PSRs, which typically rely on fixed shift values, EPSR generalizes the approach by allowing infinitely many parameter shift choices and determining their corresponding coefficients through a tailored linear system. This flexibility enables the selection of shifts that minimize the variance of derivative estimates under constrained quantum resources.
Moreover, EPSR broadens the scope of existing PSRs by handling cases with non-equidistant generator spectra, which frequently arise beyond the standard Pauli operator setting. Despite this generalization, EPSR maintains a simple and transparent mathematical formulation without invoking complicated techniques. 
We also show that the commonly used equidistant PSRs can be recovered as a special case of EPSR and, under the weighted-shot allocation scheme, achieves globally optimal variance in theory.
Our numerical experiments using the Qiskit Aer simulator confirm the practical advantages of EPSR. 
In particular, choosing optimal parameter shifts significantly improves the accuracy of derivative estimation, highlighting the potential of EPSR as a powerful tool for enhancing gradient-based optimization. 
Below, we discuss several promising future works to conclude this paper.

\begin{enumerate}
    \item \textbf{Exploiting sparsity in trigonometric representation}: 
EPSR requires $2r$ function evaluations to compute a first-order derivative, where $r$ denotes the cardinality of the frequency set.
Specifically, consider a $q$-qubit system with a generator $H$ having $n$ unique eigenvalues $\{\lambda_l\}$ ($n \le 2^q$). 
Then, the cardinality $r$ of frequency set 
\begin{equation}\label{eq-4064}
    \{\Omega_k\} := \{ |\lambda_j - \lambda_l| > 0 \}
\end{equation}
is $\frac{n(n-1)}{2}$. Therefore, in the worst case, the number of function evaluations required by EPSR scales as $O(4^q)$.
In many practical scenarios, however, $r$ is significantly smaller.
For instance, in the XXZ model with HVA discussed in this work, $r = O(q)$; while in the TFIM model with HVA \cite{wiersema2020exploring,lai2025optimal}, $r = O(1)$.
This is because their $r$ depends jointly on the ansatz and the observable: in HVA, these two components are often highly correlated, and both play a critical role in determining the scale of $r$. 

In summary, the frequencies defined by \cref{eq-4064} only form a superset of the effective frequencies.
This definition is adopted merely for convenience, as it does not depend on the observable or the input state.
However, in many practical problems, the actual frequencies are jointly influenced by the generators of circuit, the observable, and the input state. For further details, see Appendices A in \cite{lai2025optimal}.
Future work could focus on identifying and exploiting the structural interplay between them to reduce the number of function evaluations required for EPSR.

    \item \textbf{Exploration of all possible linear combinations in PSR}: A potential concern is that PSR expresses the derivative of the PQC's cost function as a linear combination of function values. It would be valuable to investigate whether \cref{thm:main-psr} covers all possible such linear combinations, or if there are additional conditions or forms of linear combinations that need to be considered for a more general treatment of the problem. This exploration could help in broadening the scope and applicability of the EPSR framework.

\end{enumerate}


\begin{acknowledgments}
This work was supported in part by National Key Research and Development Program of China under the grant number 2024YFA1012903, the National Natural Science Foundation of China under the grant numbers 12501419, 12331010 and
12288101, and the Natural Science Foundation of Beijing, China under the grant number Z230002.
DA acknowledges the support by the Quantum Science and Technology-National Science and Technology Major Project via Project 2024ZD0301900, and the Fundamental Research Funds for the Central Universities, Peking University. 
Part of this work was completed while JH was affiliated with University of California, Berkeley.
\end{acknowledgments}

\section*{Data availability}
The data and codes used in this study are available from public repositories, see \cite{EPSRcode}.

\bibliography{mybib}

\appendix

\section{Equidistant PSRs are special cases of \cref{thm:main-psr}}\label{app-equ-are-special}

In this section, we will prove that the equidistant PSRs defined in \cref{eq-general-eq-psr-1st,eq-general-eq-psr-2rd}, as proposed in \cite{wierichs2022general}, are special cases of \cref{thm:main-psr}.

\subsection{First order derivatives}

To prove the equidistant PSR formula \cref{eq-general-eq-psr-1st} of first order derivative, we begin with two lemmas.

\begin{lemma}\label{lem-sum}
Given a positive integer $r$, define equidistant PSR nodes
\begin{equation}
    x_k=\frac{\pi}{2 r}+(k-1) \frac{\pi}{r}, \quad \text { for } k=1,2, \ldots, r,
\end{equation}
as in \cref{eq:equidistant-shifts-1st}. Let $m,n \in \mathbb{N}$. Then, the sum satisfies
\begin{equation}
    \sum_{k=1}^r \cos(m x_k ) =
    \begin{cases}
    r, & \text{if } m=2 n r \text{ for some even } n \in \mathbb{N}, \\
    -r, & \text{if } m=2 n r \text{ for some odd } n \in \mathbb{N}, \\
    0, & \text{if } m \neq 2 n r \text{ for any } n \in \mathbb{N}.
    \end{cases}
\end{equation}
\end{lemma}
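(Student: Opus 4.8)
The plan is to collapse the sum into a closed form via the standard arithmetic-progression (equivalently, geometric-series) identity, and then exploit an exact trigonometric cancellation. First I would rewrite the nodes compactly as $x_k = \frac{(2k-1)\pi}{2r}$, so that $m x_k = \frac{m\pi}{2r} + (k-1)\frac{m\pi}{r}$ forms an arithmetic progression in $k$ with first term $\frac{m\pi}{2r}$ and common difference $d := \frac{m\pi}{r}$. Writing $\cos(m x_k) = \operatorname{Re}(e^{i m x_k})$ and summing the resulting geometric series (equivalently, invoking the closed form $\sum_{j=0}^{r-1}\cos(a+jd) = \frac{\sin(rd/2)}{\sin(d/2)}\cos\!\big(a+\tfrac{(r-1)d}{2}\big)$), I obtain, whenever $\sin(d/2)\neq 0$,
\[
\sum_{k=1}^{r}\cos(m x_k) = \frac{\sin(m\pi/2)}{\sin(m\pi/(2r))}\,\cos(m\pi/2).
\]

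The key observation driving the proof is that the numerator satisfies $\sin(m\pi/2)\cos(m\pi/2) = \tfrac12\sin(m\pi) = 0$ for every integer $m$. Hence, as long as the denominator $\sin(m\pi/(2r))$ is nonzero, the entire sum vanishes. This disposes of the generic case in a single line and in particular covers every odd $m$, since an odd integer can never be written as $2nr$.

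It then remains to treat the degenerate branch in which the denominator vanishes, i.e. $\sin(m\pi/(2r)) = 0$. This occurs exactly when $\frac{m}{2r}\in\mathbb{Z}$, that is $m = 2nr$ for some $n$. Here the closed form is indeterminate, so I would instead substitute directly: with $m = 2nr$ one has $m x_k = n(2k-1)\pi$, and therefore each summand equals $\cos\!\big(n(2k-1)\pi\big) = (-1)^{n(2k-1)} = (-1)^n$, using that $2k-1$ is odd. Adding the $r$ identical terms yields $(-1)^n r$, which is $+r$ for even $n$ and $-r$ for odd $n$, matching the three cases in the statement.

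The only care required is bookkeeping around the division by $\sin(m\pi/(2r))$: I must verify that its vanishing is characterized precisely by $m = 2nr$ and handle that branch by direct evaluation rather than the closed form. Everything else reduces to the routine cancellation $\sin(m\pi) = 0$, so I do not anticipate any substantive obstacle beyond this case distinction.
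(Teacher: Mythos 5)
Your proof is correct, and it takes a cleaner route than the paper's, though both ultimately rest on summing a geometric progression. The paper works with the complex exponentials explicitly: it writes the sum as $\tfrac12\bigl[p^{1/2}\sum_k q^k + q^{1/2}\sum_k p^k\bigr]$ with $q=e^{im\pi/r}$, $p=e^{-im\pi/r}$, splits on whether $q=1$, and then \emph{within} the non-degenerate case must split again on the parity of $m$ (using $q^r=p^r=(-1)^m$ and the algebraic cancellation $p^{1/2}q-q^{1/2}=0$) to get zero. You instead invoke the real closed form $\sum_{j=0}^{r-1}\cos(a+jd)=\frac{\sin(rd/2)}{\sin(d/2)}\cos\bigl(a+\tfrac{(r-1)d}{2}\bigr)$, which for these nodes collapses — since $rd/2 = a+\tfrac{(r-1)d}{2} = m\pi/2$ — to $\frac{\sin(m\pi/2)\cos(m\pi/2)}{\sin(m\pi/(2r))}=\frac{\sin(m\pi)}{2\sin(m\pi/(2r))}$, whose numerator vanishes for \emph{every} integer $m$; this dispatches the entire non-degenerate case at once with no parity sub-cases. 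Your degenerate branch ($\sin(m\pi/(2r))=0 \Leftrightarrow m=2nr$, then direct substitution giving $(-1)^n$ per term, hence $(-1)^n r$) coincides with the paper's Case 1. What your approach buys is the observation that the two trigonometric factors always multiply to $\tfrac12\sin(m\pi)=0$, replacing the paper's case-by-case algebra with a single identity; what the paper's buys is that it never needs the AP closed form as a quotable lemma, only raw geometric series. Both are complete; there is no gap in your argument.
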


\begin{proof}
Using the exponential representation of cosine, we rewrite the sum as  
\begin{align}
S 
:=\sum_{k=1}^r \cos \left(m x_k \right)
&=\frac{1}{2}\left[\sum_{k=1}^r\left(e^{i m}\right)^{x_k}+\sum_{k=1}^r\left(e^{-i m}\right)^{x_k}\right] \\
& = \frac{1}{2}\left[e^{-\frac{i m \pi}{2 r }} \sum_{k =1}^r\left(e^{i m \frac{\pi}{r }}\right)^k + e^{\frac{i m \pi}{2 r }} \sum_{k =1}^r\left(e^{-i m \frac{\pi}{r }}\right)^k\right]\\
& =\frac{1}{2}\left[p^{\frac{1}{2}} \cdot \sum_{k=1}^r q^k+q ^{\frac{1}{2}}\cdot \sum_{k=1}^r p^k\right], \label{eq-sum-S}
\end{align}
where in second equality, we substitute $x_k=\frac{\pi}{r}k -\frac{\pi}{2 r}$, and in last equality, we introduce the notations
\begin{equation}
    q := e^{i m \frac{\pi}{r} }, \quad p := e^{-i m \frac{\pi}{r} }.
\end{equation}
We now analyze $S$ by considering different cases for the integer $m$.

\textbf{Case 1}: $m=2 n r$ for some integer $n \in \mathbb{N}$. 
In this case, $q = e^{i m \frac{\pi}{r}}=e^{i 2 n \pi}=1$, and similarly, $p =1.$ 
Moreover, $q ^{\frac{1}{2}} = e^{\frac{i m \pi}{2 r}}=e^{i n \pi}=(-1)^{n}$, and similarly, $p^{\frac{1}{2}} =(-1)^{n}.$
Substituting these into  \cref{eq-sum-S}, we obtain $S=\frac{1}{2}\left[(-1)^n \cdot r+(-1)^n \cdot r\right]=(-1)^n \cdot r.$ Thus,
\begin{equation}
    S =
    \begin{cases}
    r, & \text{if } m = 2n r \text{ with } n \text{ even}, \\
    -r, & \text{if } m = 2n r \text{ with } n \text{ odd}. \\
    \end{cases}
\end{equation}

\textbf{Case 2}: $m \neq 2 n r$ for any integer $n \in \mathbb{N}$. 
In this case, $q \neq 1$ and $p \neq 1$ since $m \frac{\pi}{r} \neq 2 n \pi$ for any $n \in \mathbb{N}$.
Using \cref{eq-sum-S}, we derive
\begin{equation}
    S
    =\frac{1}{2}\left[p^{\frac{1}{2}} \cdot \frac{q\left(1-q^r \right)}{1-q}+q^{\frac{1}{2}} \cdot \frac{p\left(1-p^r \right)}{1-p}\right].
\end{equation}
Notice that $p q = 1$, and $q^r = e^{i m \pi}=(-1)^m,$ $p^r = e^{-i m \pi}=(-1)^m.$ If $m$ is even, then $q^r=p^r=1$, leading to $S=0$. If $m$ is odd, then $q^r=p^r=-1$, and we compute
\begin{align}
S
=\frac{1}{2}\left[p^{\frac{1}{2}} \cdot \frac{2q}{1-q}+q^{\frac{1}{2}} \cdot \frac{2p}{1-p}\right]
=\left[\frac{p^{\frac{1}{2}} q}{1-q}+ \frac{q^{\frac{1}{2}}pq}{(1-p)q}\right]
=\frac{1}{1-q}\left(p^{\frac{1}{2}} q-q^{\frac{1}{2}}\right).
\end{align}
Since $p q=1$, it follows that $p^{\frac{1}{2}} q-q^{\frac{1}{2}}=0$, thus $S=0.$ This completes the proof.
\end{proof}

The preceding lemma is used to establish the following result.

\begin{lemma}\label{lem-ATA}
We consider equidistant set $\Omega_k = k$ and choose equidistant shift nodes $x_k=\frac{\pi}{2 r}+(k-1) \frac{\pi}{r}$ as in \cref{eq:equidistant-shifts-1st} for $k=1,2, \ldots, r$, then the interpolation matrix $A_o(\mathbf{x})$ defined in \cref{eq-Ao} satisfies
\begin{equation}\label{eq-diag-D}
    A_o(\mathbf{x})^{T}A_o(\mathbf{x})
    =
    D:=
    r \left[\begin{array}{cccc}
    1 / 2 & & & \\
    & \ddots & & \\
    & & 1 / 2 & \\
    & & & 1
    \end{array}\right].
\end{equation}
\end{lemma}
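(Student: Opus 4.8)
The plan is to compute the $(m,n)$ entry of the Gram matrix $A_o(\mathbf{x})^T A_o(\mathbf{x})$ directly and verify that it agrees entrywise with the claimed diagonal matrix $D$. Since \cref{assm-equi} gives $\Omega_j = j$, the $(k,j)$ entry of $A_o(\mathbf{x})$ in \cref{eq-Ao} is $\sin(j x_k)$, so for $m,n \in \{1,\dots,r\}$ the relevant entry is $\bigl(A_o^T A_o\bigr)_{mn} = \sum_{k=1}^r \sin(m x_k)\sin(n x_k)$. The goal is then to show this sum equals $r/2$ when $m=n<r$, equals $r$ when $m=n=r$, and vanishes when $m\neq n$.

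First I would apply the product-to-sum identity $\sin(m x_k)\sin(n x_k) = \tfrac{1}{2}\bigl[\cos((m-n)x_k) - \cos((m+n)x_k)\bigr]$, which rewrites each entry as $\tfrac{1}{2}\sum_{k=1}^r \cos((m-n)x_k) - \tfrac{1}{2}\sum_{k=1}^r \cos((m+n)x_k)$. Using that cosine is even, both sums are exactly of the form $\sum_{k=1}^r \cos(M x_k)$ with $M=|m-n|$ and $M=m+n$ respectively, i.e.\ precisely the object evaluated in \cref{lem-sum}. This reduces the whole computation to reading off the correct branch of \cref{lem-sum} for each index pair.

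Next I would invoke \cref{lem-sum} together with a range argument. For $m,n\in\{1,\dots,r\}$ one has $|m-n|\in\{0,\dots,r-1\}$, so the only multiple of $2r$ it can equal is $0$: this yields $r$ when $m=n$ (the case $M=0=2\cdot0\cdot r$ with $0$ even) and $0$ when $m\neq n$. Meanwhile $m+n\in\{2,\dots,2r\}$, and the only multiple of $2r$ in this range is $2r$ itself, attained iff $m=n=r$, in which case $M=2r=2\cdot1\cdot r$ with $1$ odd gives $-r$; otherwise the sum is $0$. Assembling the three cases then gives $\tfrac{1}{2}(r)-\tfrac{1}{2}(0)=r/2$ for $m=n<r$, $\tfrac{1}{2}(r)-\tfrac{1}{2}(-r)=r$ for $m=n=r$, and $\tfrac{1}{2}(0)-\tfrac{1}{2}(0)=0$ for $m\neq n$, which is exactly $D$.

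The main obstacle, such as it is, is purely the bookkeeping of these range constraints: one must confirm that $m+n=2r$ forces $m=n=r$ (so the single ``$-r$'' contribution lands only in the last diagonal slot) and that $|m-n|$ can never reach $2r$, so that \cref{lem-sum} is always applied in its correct branch. Beyond this case analysis no genuine difficulty arises; the equidistant nodes $x_k=\tfrac{(2k-1)\pi}{2r}$ are precisely the discrete-sine-transform points, and the identity is a manifestation of the associated orthogonality relation.
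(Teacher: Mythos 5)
Your proposal is correct and follows essentially the same route as the paper's own proof: both reduce the Gram-matrix entries via the product-to-sum identity $\sin a \sin b = \tfrac{1}{2}\bigl(\cos(a-b)-\cos(a+b)\bigr)$ to sums of the form $\sum_{k=1}^r \cos(M x_k)$, and then invoke \cref{lem-sum} with the same range bookkeeping ($|m-n|\le r-1$, $m+n\le 2r$ with equality only when $m=n=r$) to obtain the three cases $r/2$, $r$, and $0$. The only cosmetic difference is that the paper organizes the computation through column vectors $t_i$ of $A_o(\mathbf{x})$, which changes nothing of substance.
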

\begin{proof}
Define the vectors $t_i=\left[\sin \left(i x_1\right), \sin \left(i x_2\right), \ldots, \sin \left(i x_r\right)\right]^T$ for $i=1,\ldots,r.$ Then, the matrix $A_o(\mathbf{x})=\left[t_1, t_2, \ldots, t_r\right]$. The $(i,j)$-th entry of the matrix $A_o(\mathbf{x})^{T} A_o(\mathbf{x})$ is given by: for all $i,j=1,\ldots,r,$
\begin{equation}\label{eq-sum-cos}
    t_i^{T} t_j
    =\sum_{k=1}^r \sin \left(i x_k\right) \sin \left(j x_k\right) \\
    =\frac{1}{2} \left(\sum_{k=1}^r \cos \left[ (i-j)x_k\right]-\sum_{k=1}^r \cos \left[(i+j)x_k\right]\right),
\end{equation}
where we use trigonometric identity $\sin a\sin b=\frac{1}{2}(\cos (a-b)-\cos (a+b))$.
For $i = j = r$, by \cref{lem-sum}, we have
\begin{equation}
    t_r^T t_r = \frac{1}{2} \left( r - \sum_{k=1}^r \cos(2r \cdot x_k) \right) = r.
\end{equation}
Similarly, for $i = j = 1, \ldots, r-1$, \cref{lem-sum} gives
\begin{equation}
    t_i^T t_i = \frac{1}{2} \left( r - \sum_{k=1}^r \cos(2i \cdot x_k) \right) = \frac{1}{2} r.
\end{equation}
Finally, in the case where $i \neq j$, the inner product $t_i^T t_j$ equals zero. This result follows because all possible values of $i + j$ are from the set $\{3, 4, 5, \ldots, 2r - 1\}$, and all possible values of $i - j$ belong to $\pm \{1, 2, \ldots, r - 1\}$. By \cref{lem-sum}, every summation term in \cref{eq-sum-cos} vanishes, leading to the desired conclusion.
\end{proof}

For $k=1,2, \ldots, r$, we consider the equidistant set $\Omega_k = k$ and select the equidistant PSR as $x_k=\frac{\pi}{2 r}+(k-1) \frac{\pi}{r}$. 
Thus, by \cref{lem-ATA}, we have $A_o(\mathbf{x})^{T} A_o(\mathbf{x}) = D$, where $D$ is a diagonal matrix given by \cref{eq-diag-D}. It follows that
\begin{equation}\label{eq-1050}
    [A_o(\mathbf{x})^{T}]^{-1} = A_o(\mathbf{x}) D^{-1},
\end{equation}
where
\begin{equation}
    A_o(\mathbf{x})
    =\left[\begin{array}{c}
    p^{(0)}\left(x_1\right)^T \\
    p^{(0)}\left(x_2\right)^T \\
    \vdots \\
    p^{(0)}\left(x_r\right)^T
    \end{array}\right]
    =
    \left[\begin{array}{cccc}
    \sin \left(x_1\right) & \sin \left(2 x_1\right) & \cdots & \sin \left(r x_1\right) \\
    \sin \left(x_2\right) & \sin \left(2 x_2\right) & \cdots & \sin \left(r x_2\right) \\
    \vdots & \vdots & \ddots& \vdots \\
    \sin \left(x_r\right) & \sin \left(2 x_r\right) & \cdots & \sin \left(r x_r\right)
    \end{array}\right].
\end{equation}
By \cref{thm:main-psr}, we obtain the coefficient vector $b =[A_o(\mathbf{x})^{T}]^{-1} p=A_o(\mathbf{x}) D^{-1} p$ with $p =\left[1, 2, \cdots, r\right]^T$. Then, the $i$-th entry of $b$ for $i=1, \ldots, r$ is
\begin{equation}\label{eq-1051}
    \begin{aligned}
    b_i
    & = p^{(0)}\left(x_i\right)^{T} (D^{-1} p)
    =\frac{1}{r} \left[\sum_{k=1}^{r-1} 2 k \sin \left(k x_i\right) + r \sin \left(r x_i\right) \right] \\
    & =\sin \left(r x_i\right)+\frac{2}{r} \sum_{k=1}^{r-1} k \sin \left(k x_i\right)
    = \frac{\sin \left(r x_i\right)}{2 r \sin ^2(\frac{1}{2} x_i)}.
    \end{aligned}
\end{equation}
Note that $\sin \left(r x_i\right) = \sin \left(\frac{\pi}{2 }+(i-1) \pi\right) = (-1)^{i-1}$ and we use $b^{\prime}=\frac{1}{2}[b(\mathbf{x})^T ;-b(\mathbf{x})^T]^T$ as final coefficients. So, this directly leads to the results in \cref{eq-general-eq-psr-1st}. Therefore, we showed the equidistant PSR \cite{wierichs2022general} for first derivative is a special case of our  \cref{thm:main-psr}.
Additionally, in this case, the $\ell_1$ norm of $b$ is
\begin{equation}\label{eq-d1-l1norm}
    \|b\|_1=\sum_{i=1}^r \frac{1}{2 r \sin ^2\left(\frac{(2 i-1) \pi}{4 r}\right)}=\frac{1}{2 r} \sum_{i=1}^r \csc ^2\left(\frac{(2 i-1) \pi}{4 r}\right) = r.
\end{equation}
\subsection{Second order derivatives}

To prove the equidistant PSR formula \cref{eq-general-eq-psr-2rd} of second derivatives, we begin with two lemmas. The overall approach in this subsection is essentially the same as in the previous one.

\begin{lemma}\label{lem-sum-2rd}
Given a positive integer $r$, define equidistant PSR nodes
\begin{equation}
    x_k = \frac{k\pi}{r}, \quad \text{for } k=0,1,\dots, r.
\end{equation}
as in \cref{eq:equidistant-shifts-2rd}. Let $m, n \in \mathbb{N}$. Then, the sum satisfies
\begin{equation}\label{eq-sum-cos-2rd}
    \sum_{k=0}^r \cos \left(m x_k\right)
    =
    \begin{cases}
    r+1, & \text {if } m=2 n r \text{ for some } n \in \mathbb{N}, \\
    1, & \text {if } m \neq 2 n r \text{ for any } n \in \mathbb{N}, \text{ and } m \text{ is even}, \\
    0, & \text {if } m \neq 2 n r \text{ for any } n \in \mathbb{N}, \text{ and } m \text{ is odd}.
    \end{cases}
\end{equation}
\end{lemma}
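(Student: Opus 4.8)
The plan is to mirror the argument already used for \cref{lem-sum}, namely to rewrite the cosine sum through complex exponentials and then split into cases according to whether the relevant root of unity equals one. First I would set $\omega := e^{im\pi/r}$ and observe that, since $x_k = k\pi/r$, each summand satisfies $\cos(m x_k) = \operatorname{Re}(\omega^k)$. Consequently
\[
    \sum_{k=0}^r \cos(m x_k) = \operatorname{Re}\!\left(\sum_{k=0}^r \omega^k\right),
\]
which reduces the problem to evaluating a finite geometric series and extracting its real part.

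The case distinction then turns on whether $\omega = 1$. Note that $\omega = 1$ holds exactly when $m\pi/r \in 2\pi\mathbb{Z}$, i.e., $m = 2nr$ for some $n \in \mathbb{N}$. In that situation every summand equals $\cos(2nk\pi) = 1$, and since there are $r+1$ indices $k = 0,1,\dots,r$, the sum equals $r+1$, which gives the first branch of \cref{eq-sum-cos-2rd}.

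When $\omega \neq 1$ (equivalently $m \neq 2nr$), I would apply the geometric-sum formula $\sum_{k=0}^r \omega^k = (\omega^{r+1}-1)/(\omega-1)$ together with the key simplification $\omega^r = e^{im\pi} = (-1)^m$, so that $\omega^{r+1} = (-1)^m\,\omega$. For even $m$ this yields $(\omega-1)/(\omega-1) = 1$, whose real part is $1$, matching the second branch. For odd $m$ the sum becomes $-(\omega+1)/(\omega-1)$; writing $\theta := m\pi/r$ and factoring $e^{i\theta/2}$ from both numerator and denominator via $\omega \pm 1 = e^{i\theta/2}(e^{i\theta/2} \pm e^{-i\theta/2})$ gives $-(\omega+1)/(\omega-1) = i\cot(\theta/2)$, which is purely imaginary, so its real part is $0$, matching the third branch.

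The main obstacle—indeed the only nontrivial step—is the odd-$m$ case: one must recognize that $-(\omega+1)/(\omega-1)$ is purely imaginary rather than computing its real part by brute-force rationalization. The half-angle factorization makes this transparent, and I would also verify that $\sin(\theta/2) \neq 0$ in this case, which holds precisely because $m \neq 2nr$ forces $\theta/2 \notin \pi\mathbb{Z}$, so the expression $\cot(\theta/2)$ is well defined. This completes all three branches.
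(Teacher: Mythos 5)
Your proposal is correct and follows essentially the same route as the paper: the paper's proof of this lemma simply defers to the proof of \cref{lem-sum}, which likewise rewrites the cosine sum via complex exponentials, evaluates a geometric series, and splits into cases according to whether $e^{im\pi/r}$ equals one and whether $m$ is even or odd. Your only deviations are cosmetic simplifications enabled by the absence of the phase offset $\pi/(2r)$ in these nodes — taking the real part of a single geometric series rather than averaging two conjugate ones, and using the half-angle factorization to see that $-(\omega+1)/(\omega-1)$ is purely imaginary.
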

\begin{proof}
The proof process is essentially the same as that of \cref{lem-sum}.
\end{proof}

\begin{lemma}\label{lem-ATA-2rd}
We consider equidistant set $\Omega_k = k$ and choose equidistant shift nodes $x_k= \frac{k\pi}{r}$ as in \cref{eq:equidistant-shifts-2rd} for $k=0,1, \ldots, r$, then the interpolation matrix $A_e(\mathbf{x})$ defined in \cref{eq-Ae} satisfies
\begin{equation}\label{eq-nondiag-E}
    \forall i,j =0,1,\ldots, r, \quad E_{i j}
    :=\left[ A_o(\mathbf{x})^{T}A_o(\mathbf{x})\right]_{i j} = \begin{cases}
    r+1, & \text { if } i=j=0 \text { or } i=j=r, \\
    \frac{r+2}{2}, & \text { if } 0<i=j<r, \\
    1, & \text { if } i \neq j \text { and } i+j \text { is even,} \\
    0, & \text { if } i+j \text { is odd,}
    \end{cases}
\end{equation}
and the inverse of $E$ can be expressed as
\begin{equation}\label{eq-nondiag-invE}
    \forall i,j =0,1,\ldots, r, \quad [E^{-1}]_{i j}= \begin{cases}
    \frac{1}{d_i} \delta_{i j}-\frac{1}{2 d_i d_j}, & \text { if } i+ j \text { is even},
    \\ 0, & \text { if } i+j \text { is odd},
    \end{cases}
\end{equation}
where
\begin{equation}
    \forall i =0,1,\ldots, r, \quad d_i
    =
    \begin{cases}
    r, & \text { if } i=0 \text { or } i=r ,\\
    \frac{r}{2}, & \text { if } 1 \leq i \leq r-1.
    \end{cases}
\end{equation}
\end{lemma}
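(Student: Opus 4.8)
The plan is to establish the two displayed formulas in sequence: first the explicit entries of the Gram matrix $E = A_e(\mathbf{x})^T A_e(\mathbf{x})$ in \cref{eq-nondiag-E}, and then its inverse in \cref{eq-nondiag-invE}. (Note that the matrix in the statement is the Gram matrix of $A_e$ from \cref{eq-Ae}, not $A_o$.) For the first part I would write the $(i,j)$-th entry as the inner product of the $i$-th and $j$-th columns of $A_e(\mathbf{x})$, namely $E_{ij} = \sum_{k=0}^r \cos(i x_k)\cos(j x_k)$, and apply the product-to-sum identity $\cos a \cos b = \tfrac{1}{2}[\cos(a-b)+\cos(a+b)]$ to obtain
\[
    E_{ij} = \frac{1}{2}\left[\sum_{k=0}^r \cos\bigl((i-j)x_k\bigr) + \sum_{k=0}^r \cos\bigl((i+j)x_k\bigr)\right].
\]
Each summand is then governed directly by \cref{lem-sum-2rd}. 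Since $i,j \in \{0,\ldots,r\}$, the argument $i-j$ lies in $\{-r,\ldots,r\}$ and $i+j$ in $\{0,\ldots,2r\}$, so the special form $m = 2nr$ is reached only at the boundary values ($i-j=0$, or $i+j\in\{0,2r\}$). Enumerating the cases $i=j=0$, $i=j=r$, $0<i=j<r$, $i\neq j$ with $i+j$ even, and $i+j$ odd, and using that $i-j$ and $i+j$ always share the same parity, each of the two sums collapses to one of $r+1$, $1$, or $0$, which reproduces \cref{eq-nondiag-E} line by line.

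The key structural observation for the inverse is that $E_{ij}=0$ whenever $i+j$ is odd, so $E$ couples only indices of equal parity. Reordering the indices so that the even indices precede the odd ones makes $E$ block diagonal with one block per parity class. Within a single block I would recognize the remaining entries as a diagonal-plus-rank-one matrix,
\[
    E = \operatorname{diag}(d_i) + \mathbf{1}\mathbf{1}^T,
\]
since the off-diagonal value is $1 = (\mathbf{1}\mathbf{1}^T)_{ij}$ and the diagonal value $d_i+1$ equals $r+1$ at the endpoints $i\in\{0,r\}$ (forcing $d_i=r$) and $\tfrac{r+2}{2}$ in the interior (forcing $d_i=\tfrac{r}{2}$), exactly the $d_i$ in the statement. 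The Sherman--Morrison formula then yields
\[
    [E^{-1}]_{ij} = \frac{1}{d_i}\delta_{ij} - \frac{(d_i d_j)^{-1}}{\,1 + \sum_{k} d_k^{-1}\,},
\]
where the sum runs over the parity class of $i$ and $j$.

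Everything now hinges on the single identity $\sum_{k\in\mathrm{class}} d_k^{-1} = 1$ for each parity class, which forces the Sherman--Morrison denominator to equal $2$ and produces the clean factor $\tfrac{1}{2 d_i d_j}$ in \cref{eq-nondiag-invE}. I expect this to be the main obstacle, as it requires a short case analysis on the parity of $r$ to determine which endpoints ($0$ and $r$, each contributing $d_k^{-1}=r^{-1}$) fall into which class, with every interior index contributing $2/r$; in each case the endpoint and interior contributions sum to exactly $1$. Combining the block-diagonal reduction, the Sherman--Morrison formula, and this identity gives \cref{eq-nondiag-invE}, while the vanishing of cross-parity entries accounts for the stated $0$ when $i+j$ is odd. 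Alternatively, one may bypass Sherman--Morrison and verify $E E^{-1}=I$ directly by substituting the claimed inverse, but this still reduces to the same summation identity.
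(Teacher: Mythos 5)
Your proposal is correct. For the Gram-matrix formula \cref{eq-nondiag-E} you follow exactly the paper's route: the paper's proof says it proceeds ``essentially the same as that of \cref{lem-ATA}'', i.e., expanding $E_{ij}=\sum_{k=0}^r\cos(i x_k)\cos(j x_k)$ via the product-to-sum identity and invoking \cref{lem-sum-2rd}, which is precisely your case analysis (and you correctly flag that the statement's $A_o$ should read $A_e$). Where you genuinely diverge is the inverse: the paper merely asserts that \cref{eq-nondiag-invE} ``can be verified by inspection'', i.e., by checking $EE^{-1}=I$, whereas you derive it constructively --- reorder indices by parity so that $E$ splits into two blocks of the form $\operatorname{diag}(d_i)+\mathbf{1}\mathbf{1}^T$, apply Sherman--Morrison, and reduce everything to the identity $\sum_{k\in\mathrm{class}} d_k^{-1}=1$ for each parity class. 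That identity does hold in all configurations: whichever endpoints ($0$ and/or $r$, each contributing $1/r$) land in a given class, the interior indices (each contributing $2/r$) make the total exactly $1$, for $r$ even or odd. Your route buys an explanation of where the formula $\frac{1}{d_i}\delta_{ij}-\frac{1}{2d_id_j}$ comes from --- the $2$ is the Sherman--Morrison denominator $1+\sum_k d_k^{-1}$ --- rather than a bare verification; note also that the paper's ``inspection'' would in any case hinge on the same summation identity when computing the diagonal entries of $EE^{-1}$, so your argument makes explicit the ingredient the paper leaves implicit.
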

\begin{proof}
The proof process of \cref{eq-nondiag-E} is essentially the same as that of \cref{lem-ATA}, except that we use the result from \cref{lem-sum-2rd} here.  
The correctness of the inverse matrix $E^{-1}$ can be verified by inspection.
\end{proof}

For $k=0,1, \ldots, r$, we consider the equidistant set $\Omega_k = k$ and select the equidistant PSR as $x_k= \frac{k\pi}{r}$. 
Thus, by \cref{lem-ATA-2rd}, we have $A_e(\mathbf{x})^{T} A_e(\mathbf{x}) = E$. It follows that
\begin{equation}
    [A_e(\mathbf{x})^{T}]^{-1} = A_e(\mathbf{x}) E^{-1},
\end{equation}
where
\begin{equation}
    A_e(\mathbf{x})
    =\left[\begin{array}{c}
    q^{(0)}\left(x_0\right)^T \\
    q^{(0)}\left(x_1\right)^T \\
    \vdots \\
    q^{(0)}\left(x_r\right)^T
    \end{array}\right]
    =
    \begin{bmatrix}
    1 & \cos (x_0) & \cos (2 x_0) & \cdots & \cos (r x_0) \\
    1 & \cos (x_1) & \cos (2 x_1) & \cdots & \cos (r x_1) \\
    \vdots & \vdots & \vdots & \ddots & \vdots \\
    1 & \cos (x_r) & \cos (2 x_r) & \cdots & \cos (r x_r)
    \end{bmatrix}.
\end{equation}
By \cref{thm:main-psr}, we obtain the coefficient $b =A_e(\mathbf{x}) E^{-1} q$ with $q=-\left[0, 1^2,2^2 \cdots, r^2\right]^T$. Let $y:=E^{-1} q$, then
\begin{equation}
    \forall i =0,1,\ldots, r, \quad y_i=-\frac{i^2}{d_i}+\frac{1}{2 d_i} \sum_{\substack{j=0 \\ j \equiv i(\bmod 2)}}^r \frac{j^2}{d_j}, \quad \text{ with } d_i= \begin{cases}
    r, & i \in\{0, r\}, \\
    \frac{r}{2}, & 1 \leq i \leq r-1.
    \end{cases}
\end{equation}
For the $0$-th entry of $b$, since $x_0=0$, we have
\begin{equation}
    b_0
    = q^{(0)}\left(0\right)^{T} y
    =\sum_{i=0}^r y_i = -\frac{2 r^2+1}{6}.
\end{equation}
For the $i$-th entry of $b$ with $i=1,\ldots,r-1$, using $x_i= \frac{i\pi}{r}$, we have
\begin{equation}
    b_i
    = q^{(0)}\left(\frac{i\pi}{r}\right)^{T} y
    =\sum_{k=0}^r \cos \left(\frac{k i \pi}{r}\right) y_k=-\frac{2}{r} \sum_{k=1}^{r-1} k^2 \cos \left(\frac{k i \pi}{r}\right)-r \cos (i \pi)= \frac{(-1)^{i-1}}{\sin ^2\left(\frac{i \pi}{2 r}\right)}.
\end{equation}
For $r$-th entry of $b$, since $x_r=\pi$, we have
\begin{equation}
    b_r
    = q^{(0)}\left(\pi\right)^{T} y
    =\sum_{i=0}^r(-1)^i y_i= \frac{(-1)^{r+1}}{2}.
\end{equation}
Additionally, in this case, the $\ell_1$ norm of $b$ is
\begin{equation}\label{eq-d2-l1norm}
    \|b\|_1=\left|b_0\right|+\sum_{i=1}^{r-1}\left|b_i\right|+\left|b_r\right|=\frac{2 r^2+1}{6}+\frac{2\left(r^2-1\right)}{3}+\frac{1}{2}=r^2,
\end{equation}
where we use the fact that $\sum_{i=1}^{r-1} \frac{1}{\sin ^2\left(\frac{i \pi}{2 r}\right)}=\frac{2\left(r^2-1\right)}{3}$.

By rearranging these coefficients, we obtain \cref{eq-general-eq-psr-2rd}.
Hence, we showed the equidistant PSR \cite{wierichs2022general} for second derivative is a special case of our  \cref{thm:main-psr}.

\section{Proof of \cref{thm-cond-nonsingular}}\label{app-thm-cond}

For $\Omega_k=k$ with $k=1, \ldots, r$, consider the matrix
\begin{equation}
A_o(\mathbf{x})=\left[\begin{array}{cccc}
\sin \left(x_1\right) & \sin \left(2 x_1\right) & \cdots & \sin \left(r x_1\right) \\
\sin \left(x_2\right) & \sin \left(2 x_2\right) & \cdots & \sin \left(r x_2\right) \\
\vdots & \vdots & \ddots & \vdots \\
\sin \left(x_r\right) & \sin \left(2 x_r\right) & \cdots & \sin \left(r x_r\right)
\end{array}\right] .
\end{equation}
Our goal is to determine the conditions on $\mathbf{x}$ under which its determinant is nonzero.
To this end, we make use of a decomposition of $\sin (k x)$ in terms of Chebyshev polynomials. The key identity is
\begin{equation}\label{eq-9491}
    \sin (kx)=\sin x\; U_{k-1} (\cos x), \quad 1 \leq k\leq r, 
\end{equation}
where $U_n$ denotes the Chebyshev polynomial of the second kind \cite{mason2002chebyshev}, uniquely determined by the recurrence
\begin{equation}
    U_0 (t)=1, \quad U_1 (t)=2t, \quad U_{n+1} (t)=2t\, U_n (t)-U_{n-1} (t),
\end{equation}
with $\deg U_n=n$ and leading coefficient $2^n$. Substituting \cref{eq-9491} into $A_o (\mathbf x)$ and introducing the notation $z_i=\cos x_i$, we obtain
\begin{equation}
A_o(\mathbf{x})=\underbrace{\left[\begin{array}{cccc}
\sin x_1 & 0 & \cdots & 0 \\
0 & \sin x_2 & \cdots & 0 \\
\vdots & \vdots & \ddots & \vdots \\
0 & 0 & \cdots & \sin x_r
\end{array}\right]}_{=: D} \underbrace{\left[\begin{array}{cccc}
U_0\left(z_1\right) & U_1\left(z_1\right) & \cdots & U_{r-1}\left(z_1\right) \\
U_0\left(z_2\right) & U_1\left(z_2\right) & \cdots & U_{r-1}\left(z_2\right) \\
\vdots & \vdots & \ddots & \vdots \\
U_0\left(z_r\right) & U_1\left(z_r\right) & \cdots & U_{r-1}\left(z_r\right)
\end{array}\right]}_{=: B} .
\end{equation}
Next we factor $B$ as ``Vandermonde $\times$ upper triangular'' by expanding each Chebyshev polynomial $U_{k-1} (t)$ ($1 \leq k\leq r$) in the monomial basis:
\begin{equation}
    U_{k-1} (t)=\sum_{m=0}^{k-1} a_{m, k-1}\, t^m,
\end{equation}
with leading coefficient $a_{k-1, k-1}=2^{k-1}$. This yields
\begin{equation}\label{eq-mat-b}
B=\underbrace{\left[\begin{array}{ccccc}
1 & z_1 & z_1^2 & \cdots & z_1^{r-1} \\
1 & z_2 & z_2^2 & \cdots & z_2^{r-1} \\
\vdots & \vdots & \vdots & \ddots & \vdots \\
1 & z_r & z_r^2 & \cdots & z_r^{r-1}
\end{array}\right]}_{=: V} \underbrace{\left[\begin{array}{ccccc}
a_{0,0} & a_{0,1} & a_{0,2} & \cdots & a_{0, r-1} \\
0 & a_{1,1} & a_{1,2} & \cdots & a_{1, r-1} \\
0 & 0 & a_{2,2} & \cdots & a_{2, r-1} \\
\vdots & \vdots & \vdots & \ddots & \vdots \\
0 & 0 & 0 & \cdots & a_{r-1, r-1}
\end{array}\right]}_{=: T},
\end{equation}
where $V$ is the standard Vandermonde matrix and $T$ is upper triangular with diagonal entries $a_{k-1, k-1}=2^{ k-1}$ for $1 \leq k \leq r$. Finally, we have
\begin{equation}
    A_o (\mathbf x) = DVT.
\end{equation}
It follows immediately that
\begin{align}
    \det A_o (\mathbf x) &=\Big (\prod_{i=1}^r \sin x_i\Big) \Big (\prod_{1\le i<j\le r} (z_j-z_i)\Big) \Big (\prod_{k=1}^r 2^{k-1}\Big) \\
&=\Big (\prod_{i=1}^r \sin x_i\Big)
    \Big (\prod_{1\le i<j\le r}\big (\cos x_j-\cos x_i\big)\Big)2^{\frac{r (r-1)}{2}}.
\end{align}
We can directly read off the necessary and sufficient conditions from this factorization. Here is a quick check for
\begin{equation}
    A_o (\mathbf x)=\begin{bmatrix}\sin x_1&\sin 2x_1\\ \sin x_2&\sin 2x_2\end{bmatrix}.
\end{equation}
We compute $\det A_o (\mathbf x)= \sin x_1\sin 2x_2-\sin x_2\sin 2x_1=2\sin x_1\sin x_2\, (\cos x_2-\cos x_1),$ which matches the general formula for $r=2$.

Next, we consider the matrix
\begin{equation}
    A_e (\mathbf{x})=\left[\begin{array}{ccccc}
    1 & \cos \left (x_0\right) & \cos \left (2 x_0\right) & \cdots & \cos \left (r x_0\right) \\
    1 & \cos \left (x_1\right) & \cos \left (2 x_1\right) & \cdots & \cos \left (r x_1\right) \\
    \vdots & \vdots & \vdots & \ddots & \vdots \\
    1 & \cos \left (x_r\right) & \cos \left (2 x_r\right) & \cdots & \cos \left (r x_r\right)
    \end{array}\right].
\end{equation}
Here, the key identity is
\begin{equation}
    \cos (k x)=T_k (\cos x), \quad 0 \leq k\leq r,
\end{equation}
where $T_n$ denotes the Chebyshev polynomial of the first kind \cite{mason2002chebyshev}, uniquely determined by the recurrence
\begin{equation}
    T_0 (t)=1, \quad T_1 (t)=t, \quad T_{n+1} (x)=2 t \, T_n (t)-T_{n-1} (t),
\end{equation}
with $\operatorname{deg} T_n=n$ and leading coefficient $2^{n-1}$ for $n \geq 1$. Then, again setting $z_i=\cos x_i$, we have
\begin{equation}
    A_e (\mathbf{x})=\left[\begin{array}{ccccc}
    T_0\left (z_0\right) & T_1\left (z_0\right) & T_2\left (z_0\right) & \cdots & T_{r}\left (z_0\right) \\
    T_0\left (z_1\right) & T_1\left (z_1\right) & T_2\left (z_1\right)& \cdots & T_{r}\left (z_1\right) \\
    \vdots & \vdots & \vdots & \ddots & \vdots \\
    T_0\left (z_r\right) & T_1\left (z_r\right) & T_2\left (z_r\right)& \cdots & T_{r}\left (z_r\right)
    \end{array}\right].
\end{equation}
By analogy with the decomposition of $B$ in \cref{eq-mat-b}, we have
\begin{equation}
    \det A_e (\mathbf{x})=\Big (\prod_{0\le i<j\le r}\big (\cos x_j-\cos x_i\big)\Big)2^{\frac{r (r-1)}{2}}.
\end{equation}
For example, consider
\begin{equation}
    A_e (\mathbf{x})
    =\left[\begin{array}{lll}
    1 & \cos x_0 & \cos 2 x_0 \\
    1 & \cos x_1 & \cos 2 x_1\\
    1 & \cos x_2 & \cos 2 x_2
    \end{array}\right]
    =
    \left[\begin{array}{lll}
    1 & z_0 & 2 z_0^2-1 \\
    1 & z_1 & 2 z_1^2-1 \\
    1 & z_2 & 2 z_2^2-1
    \end{array}\right]
    =\left[\begin{array}{lll}
    1 & z_0 & z_0^2 \\
    1 & z_1 & z_1^2 \\
    1 & z_2 & z_2^2
    \end{array}\right]
    \left[\begin{array}{ccc}
    1 & 0 & -1 \\
    0 & 1 & 0 \\
    0 & 0 & 2
    \end{array}\right].
\end{equation}
Hence, $\det A_e (\mathbf{x})=2\left (z_1-z_0\right)\left (z_2-z_0\right)\left (z_2-z_1\right),$ which is in agreement with the general formula for $r=2$.

\section{(Sub)gradients of $F_{\mathrm{unif}}$ and $F_{\mathrm{wgt}}$}\label{app-proof-grad}

\subsection{Proof of  \cref{lem-grad-odd}}

\paragraph{Derivation of gradient of $F_{\mathrm{unif}}$.}
We now compute the gradient of
\begin{equation}
    F_{\mathrm{unif}}(\mathbf{x})=\frac{1}{2}\left\langle\left[A_o(\mathbf{x})^T A_o(\mathbf{x})\right]^{-1}, p^{(d)} p^{(d)^T}\right\rangle.
\end{equation}
We will define four sub-mappings $f_i$ ($i=1,2,3,4$) such that
$
    F_{\mathrm{unif}} = f_4 \circ f_3 \circ f_2 \circ f_1.
$
Next, we compute the differential formulas $D f_i$ at each point. By applying chain rule, we obtain
\begin{equation}
    D F_{\mathrm{unif}} : \mathbb{R}^{r} \to \mathbb{R}, \quad D F_{\mathrm{unif}} = D f_4 \circ D f_3 \circ D f_2 \circ D f_1.
\end{equation}
Finally, using the gradient identity
\begin{equation}
    D F_{\mathrm{unif}}(\mathbf{x})[u] = \langle u, \nabla F_{\mathrm{unif}}(\mathbf{x}) \rangle, \quad \forall u \in \mathbb{R}^{r},
\end{equation}
we derive an explicit expression for the gradient of $F_{\mathrm{unif}}$ at any given $\mathbf{x} \in \mathbb{R}^{r}$. We begin as follows.

\begin{enumerate}
    \item Consider the interpolation matrix map $f_1: \mathbb{R}^{r} \rightarrow \mathbb{R}^{r \times r}, \mathbf{x} \mapsto A_o(\mathbf{x})$. Its differential is given by
\begin{equation}
    D f_1(\mathbf{x})[u]=\operatorname{Diag}(u) A_o^{(1)}(\mathbf{x}), \quad \forall \, u \in \mathbb{R}^{r}.
\end{equation}
where $\operatorname{Diag}(u)$ denotes the diagonal matrix induced by vector $u$, and $A_o^{(1)}(\mathbf{x})$ is the $r \times r$ matrix whose $(i, j)$-entry is $\Omega_j \sin \left(\Omega_j x_i\right)$, see \cref{eq-Ao1}.
\item Consider the Gram matrix map $f_2: \mathbb{R}^{r \times r} \rightarrow \mathbb{R}^{r \times r}, X \mapsto X^T X$. Its differential at $X=A_o(\mathbf{x})$ is
\begin{equation}
    D f_2(X)[V_2]=V_2^T X+X^T V_2, \quad \forall \,V_2 \in \mathbb{R}^{r \times r}.
\end{equation}
\item Consider the matrix inversion map $f_3: \mathbb{R}^{r \times r} \rightarrow \mathbb{R}^{r \times r}, X \mapsto X^{-1}$. It is well known (see, e.g., \cite{horn2012matrix}) that its differential at $X=A_o(\mathbf{x})^T A_o(\mathbf{x})$ is
\begin{equation}
    D f_3(X)[V_3]=-X^{-1} V_3 X^{-1}, \quad \forall \,V_3 \in \mathbb{R}^{r \times r}.
\end{equation}
\item Consider the inner product $f_4: \mathbb{R}^{r \times r} \rightarrow \mathbb{R}, X \mapsto \frac{1}{2}\langle X, B\rangle$ with constant matrix $B:=p^{(d)} p^{(d)^T}$. Its differential at $X=\left[A_o(\mathbf{x})^T A_o(\mathbf{x})\right]^{-1}$ is
\begin{equation}
    D f_4(X)[V_4]=\frac{1}{2}\langle V_4, B\rangle, \quad \forall \,V_4 \in \mathbb{R}^{r \times r}.
\end{equation}
\end{enumerate}
Finally, the composition of all differentials $D f_i$ yields ($\forall u \in \mathbb{R}^{r}$)
\begin{equation}
    D F_{\mathrm{wgt}}(\mathbf{x})[u]=-\frac{1}{2}\left\langle\left(A^T A\right)^{-1} \left[ \left(U A^{(1)}\right)^T A + A^T\left(U A^{(1)}\right)\right]\left(A^T A\right)^{-1} ,B\right\rangle,
\end{equation}
where for simplicity, we use the notations $U=\operatorname{Diag}(u), A=A_{o}(\mathbf{x})$ and $A^{(1)}=A_{o}^{(1)}(\mathbf{x})$. We now extract $\nabla F_{\mathrm{unif}}(\mathbf{x})$ explicitly by rearrange the $D F_{\mathrm{unif}}(\mathbf{x})[u]$ as follows:
\begin{align}
D F_{\mathrm{unif}}(\mathbf{x})[u]
& = -\frac{1}{2}\left\langle A^{(1)^T} UA + A^T U A^{(1)},C\right\rangle, \\
& = -\frac{1}{2}\left\langle \operatorname{Diag}(u),A^{(1)} C A^T+A C A^{(1)^T}\right\rangle, \\
& = -\frac{1}{2}\left\langle u,\operatorname{diag}\left(A^{(1)} C A^T+A C A^{(1)^T}\right)\right\rangle,
\end{align}
where in the first equality, we introduce $C:=\left(A^T A\right)^{-1} B\left(A^T A\right)^{-1}$. From above, we read the gradient expression
\begin{equation}
    \nabla F_{\mathrm{unif}}(\mathbf{x}) = -\frac{1}{2} \operatorname{diag} \left( A^{(1)} C A^T + A C A^{(1)^T} \right)= -\operatorname{diag} \left( A^{(1)} C A^T\right).
\end{equation}
Moreover, substituting $B=p^{(d)} p^{(d)^T}$ and $C$ into the above expression, we obtain
\begin{align}
    \nabla F_{\mathrm{unif}}(\mathbf{x}) 
    & = -\operatorname{diag}\left(A^{(1)}\left(A^T A\right)^{-1} p^{(d)} p^{(d)^T}\left(A^T A\right)^{-1} A^T\right)\\
    & = -\operatorname{diag}\left(A^{(1)} A^{-1} b b^T\right),
\end{align}
where we use $b=\left(A^T\right)^{-1} p^{(d)}$. Thus, the gradient computation for $F_{\mathrm{unif}}(\mathbf{x})$ is complete.

\paragraph{Derivation of subgradient of $F_{\mathrm{wgt}}$.}
We now compute the subgradient of
\begin{equation}
    F_{\mathrm{wgt}}(\mathbf{x})=\|b(\mathbf{x})\|_1, \text{ with } b(\mathbf{x}) = \left[A_o(\mathbf{x})^T\right]^{-1} p^{(d)}.
\end{equation}
Let $g: \mathbb{R}^r \rightarrow \mathbb{R}^{r}, \mathbf{x} \mapsto b(\mathbf{x}) = \left[A_o(\mathbf{x})^T\right]^{-1} p^{(d)}$. Then, its differential is
\begin{equation}
D g(\mathbf{x})[u]=-\left[A_o(\mathbf{x})^T\right]^{-1} \left(\operatorname{Diag}(u) A_o^{(1)}(\mathbf{x})\right)^{T} \left[A_o(\mathbf{x})^T\right]^{-1}  p^{(d)}, \quad \forall\, u \in \mathbb{R}^r.
\end{equation}
Again, for simplicity, we use $U=\operatorname{Diag}(u), A=A_{o}(\mathbf{x})$, $A^{(1)}=A_{o}^{(1)}(\mathbf{x})$ and $b=b(\mathbf{x}) $, we have
\begin{align}
D g(\mathbf{x})[u] =-A_o^{-T} \left(UA_o^{(1)}\right)^{T} A_o^{-T}  p^{(d)}
 =-A_o^{-T} A_o^{(1)T}U b.
\end{align}
Finally, one subgradient of $F_{\mathrm{wgt}}(\mathbf{x})$ can be obtain as follows:
\begin{align}
\partial F_{\mathrm{wgt}}(\mathbf{x})[u] & =\left\langle \operatorname{sgn}(b) , D g(\mathbf{x})[u] \right\rangle  =\left\langle \operatorname{sgn}(b) , -A_o^{-T} A_o^{(1)T}U b\right\rangle \\
& =\left\langle -A_o^{(1)}A_o^{-1} \operatorname{sgn}(b) b^{T}, U \right\rangle =\left\langle -\operatorname{diag}\left(A_o^{(1)}A_o^{-1} \operatorname{sgn}(b) b^{T}\right), u \right\rangle.
\end{align}
Here, we use the fact that the \( \ell_1 \)-norm's subdifferential is the sign vector $\operatorname{sgn}(b)$ and the relationship between the differential of the composed functions to write the above expression. From above, we read one subgradient.

\subsection{(Sub)gradients of $F_{\mathrm{unif}}$ and $F_{\mathrm{wgt}}$ for even $d$-order derivatives}

\begin{lemma}[(Sub)gradients of $F_{\mathrm{unif}}$ and $F_{\mathrm{wgt}}$ for even $d$-order derivatives] \label{lem-grad-even}
Let $d$ be an arbitrary even positive integer. Let the matrix $A_e^{(1)}(\mathbf{x})$ be defined as the component-wise derivative of $A_e(\mathbf{x})$ with respect to $x_i$:
\begin{equation}\label{eq-Ae1}
    A_e^{(1)}(\mathbf{x}):=\left[\begin{array}{c}
    q^{(1)}\left(x_0\right)^T \\
    q^{(1)}\left(x_1\right)^T \\
    \vdots \\
    q^{(1)}\left(x_r\right)^T
    \end{array}\right]=-\left[\begin{array}{ccccc}
    0 & \Omega_1 \sin \left(\Omega_1 x_0\right) & \Omega_2 \sin \left(\Omega_2 x_0\right) & \cdots & \Omega_r \sin \left(\Omega_r x_0\right) \\
    0 & \Omega_1 \sin \left(\Omega_1 x_1\right) & \Omega_2 \sin \left(\Omega_2 x_1\right) & \cdots & \Omega_r \sin \left(\Omega_r x_1\right) \\
    \vdots & \vdots & \vdots & \ddots & \vdots \\
    0 & \Omega_1 \sin \left(\Omega_1 x_r\right) & \Omega_2 \sin \left(\Omega_2 x_r\right) & \cdots & \Omega_r \sin \left(\Omega_r x_r\right)
    \end{array}\right]
\end{equation}
We have the following results:

\begin{enumerate}
    \item The gradient of
$F_{\mathrm{unif}}(\mathbf{x}) = \frac{1}{2}\|b(\mathbf{x})\|_2^2 =\frac{1}{2}\left\|\left[A_e(\mathbf{x})^T\right]^{-1} q^{(d)}\right\|_2^2 
$ at $\mathbf{x} \in \mathbb{R}^{r+1}$ is given by
\begin{equation}
    \nabla F_{\mathrm{unif}} (\mathbf{x})
    = -\operatorname{diag}\left (A_e^{ (1)} (\mathbf{x}) A_e (\mathbf{x})^{-1} b (\mathbf{x}) b (\mathbf{x})^T\right).
\end{equation}
\item The subgradient of
$F_{\mathrm{wgt}} (\mathbf{x}) = \|b (\mathbf{x})\|_1 = \left\|\left[A_e (\mathbf{x})^T\right]^{-1} q^{ (d)}\right\|_1$
at $\mathbf{x} \in \mathbb{R}^{r+1}$, i.e., $\partial F_{\mathrm{wgt}} (\mathbf{x})\subseteq \mathbb{R}^{r+1}$, satisfies
\begin{equation}
   \partial F_{\mathrm{wgt}} (\mathbf{x}) \ni  -\operatorname{diag}\left (A_e^{ (1)} (\mathbf{x}) A_e (\mathbf{x})^{-1} \operatorname{sgn} (b (\mathbf{x})) b (\mathbf{x})^T\right).
\end{equation}
\end{enumerate}
\end{lemma}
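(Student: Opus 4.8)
The plan is to follow verbatim the strategy already used to establish \cref{lem-grad-odd} in \cref{app-proof-grad}, since the even-order case is structurally identical to the odd-order one after the replacements $A_o \mapsto A_e$, $A_o^{(1)} \mapsto A_e^{(1)}$, $p^{(d)} \mapsto q^{(d)}$, and the dimension change $r \mapsto r+1$. The only genuinely new point requiring verification is that the differential of the interpolation-matrix map retains the same row-scaling form; this hinges on the fact that each row of $A_e(\mathbf{x})$ in \cref{eq-Ae-full} depends on exactly one node.

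For $F_{\mathrm{unif}}$, I would write $F_{\mathrm{unif}} = f_4 \circ f_3 \circ f_2 \circ f_1$ with $f_1: \mathbf{x} \mapsto A_e(\mathbf{x})$, $f_2: X \mapsto X^T X$, $f_3: X \mapsto X^{-1}$, and $f_4: X \mapsto \tfrac{1}{2}\langle X, q^{(d)} q^{(d)^T}\rangle$. The differentials of $f_2, f_3, f_4$ are exactly the standard formulas already used in the odd case and need no modification. The crucial step is to check that
\[
    Df_1(\mathbf{x})[u] = \operatorname{Diag}(u)\, A_e^{(1)}(\mathbf{x}), \qquad \forall\, u \in \mathbb{R}^{r+1}.
\]
This holds because the $(i,j)$ entry of $A_e(\mathbf{x})$, with rows indexed $i=0,\ldots,r$, depends only on the node $x_i$: for $j \geq 1$ it equals $\cos(\Omega_j x_i)$, whose $x_i$-derivative is $-\Omega_j \sin(\Omega_j x_i)$, which is precisely the $(i,j)$ entry of $A_e^{(1)}(\mathbf{x})$ in \cref{eq-Ae1}; and the constant first column (all ones) differentiates to zero, matching the zero first column of $A_e^{(1)}(\mathbf{x})$. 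Hence perturbing $x_i$ scales only the $i$-th row, which is exactly the action of $\operatorname{Diag}(u)$ on $A_e^{(1)}(\mathbf{x})$.

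With this differential in hand, composing $Df_4 \circ Df_3 \circ Df_2 \circ Df_1$ and collecting terms via the identity $\langle \operatorname{Diag}(u), N\rangle = \langle u, \operatorname{diag}(N)\rangle$ together with the transpose and cyclic properties of the Frobenius inner product proceeds identically to the odd case, yielding $\nabla F_{\mathrm{unif}}(\mathbf{x}) = -\operatorname{diag}(A_e^{(1)} A_e^{-1} b\, b^T)$ after substituting $b = (A_e^T)^{-1} q^{(d)}$. For $F_{\mathrm{wgt}} = \|b(\mathbf{x})\|_1$, I would set $g(\mathbf{x}) = b(\mathbf{x}) = [A_e(\mathbf{x})^T]^{-1} q^{(d)}$, apply the same differential of $f_1$ together with the matrix-inversion rule to obtain $Dg(\mathbf{x})[u] = -A_e^{-T} (A_e^{(1)})^T \operatorname{Diag}(u)\, b$, and then pair this with the subgradient $\operatorname{sgn}(b)$ of the $\ell_1$ norm to extract the claimed element of $\partial F_{\mathrm{wgt}}(\mathbf{x})$.

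The main obstacle is bookkeeping rather than conceptual: one must confirm that the single-node-per-row structure of $A_e(\mathbf{x})$ — in particular the vanishing derivative of the constant first column — produces precisely the $\operatorname{Diag}(u)\, A_e^{(1)}(\mathbf{x})$ form, after which every subsequent manipulation is a faithful transcription of the odd-order argument carried out in the enlarged dimension $r+1$.
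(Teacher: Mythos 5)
Your proposal is correct and takes essentially the same approach as the paper: the paper's proof of \cref{lem-grad-even} consists precisely of the remark that it follows the same reasoning as \cref{lem-grad-odd}, which is exactly your transcription with $A_o \mapsto A_e$, $A_o^{(1)} \mapsto A_e^{(1)}$, $p^{(d)} \mapsto q^{(d)}$, and dimension $r+1$. Your explicit verification that the constant first column of $A_e(\mathbf{x})$ differentiates to the zero first column of $A_e^{(1)}(\mathbf{x})$, so that $Df_1(\mathbf{x})[u] = \operatorname{Diag}(u)\,A_e^{(1)}(\mathbf{x})$ still holds, is the one detail the paper leaves implicit, and it is exactly the right point to check.
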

\begin{proof}
The proof of \cref{lem-grad-even} follows the same reasoning as the proof of \cref{lem-grad-odd}.
\end{proof}

\section{Proof of \cref{prop-equ-is-not-optimal}}\label{app-equ-optimal}

\paragraph{To show that $ \nabla F_{\mathrm{unif}}(\mathbf{x}) $ is nonzero.}
Consider $\Omega_k = k$ and $d=1$. Let us compute the gradient of the function $F_{\mathrm{unif}}(\mathbf{x})$ at equidistant PSR $x_k = \frac{\pi}{2r} + (k-1) \frac{\pi}{r}$ for $k = 1, 2, \ldots, r$. By \cref{lem-grad-odd}, the gradient is given by
\begin{equation}
    \nabla F_{\mathrm{unif}}(\mathbf{x}) =-\operatorname{diag}\left(A_o^{(1)}(\mathbf{x}) A_o(\mathbf{x})^{-1} b(\mathbf{x}) b(\mathbf{x})^T\right)= -\operatorname{diag}\left(A_o^{(1)}(\mathbf{x}) D^{-1} p b(\mathbf{x})^T\right),
\end{equation}
where we use $b(\mathbf{x}) =[A_o(\mathbf{x})^T]^{-1} p=A_o(\mathbf{x}) D^{-1} p$ that proved in \cref{app-equ-are-special}. The matrix $D$ is given in \cref{eq-diag-D}.
Let $E := A_o^{(1)}(\mathbf{x}) D^{-1} p b(\mathbf{x})^T$, and consider the unit vector $e_i$, where $e_i \in \mathbb{R}^r$ is the standard basis vector with 1 in the $i$-th position and 0 elsewhere, for $i = 1, \ldots, r$. The diagonal element $E_{ii}$ is then given by
\begin{align}
E_{ii}
 = \langle e_i, E e_i \rangle = \langle e_i, A_o^{(1)}(\mathbf{x}) D^{-1} p b(\mathbf{x})^T e_i \rangle
 = b_i \langle A_o^{(1)}(\mathbf{x})^T e_i, D^{-1} p \rangle = b_i \langle p^{(1)}(x_i), D^{-1} p \rangle.
\end{align}
Expanding $E_{ii}$ further and substituting the expression for $b_i$, we have
\begin{align}
E_{i i} 
= b_i \left( \sum_{k=1}^{r-1} k \cos(k x_i) \cdot \frac{2k}{r} + r \cos(r x_i) \right) 
= \frac{(-1)^{i-1}}{r^2 \sin^2\left(\frac{1}{2} x_i\right)} \sum_{k=1}^{r-1} k^2 \cos \left(k x_i\right).
\end{align}
Thus, the $i$-th component of the gradient $\nabla F_{\mathrm{unif}}(\mathbf{x})$ at the equidistant PSR is
\begin{equation}
    \left (\nabla F_{\mathrm{unif}} (\mathbf{x})\right)_i
    = \frac{ (-1)^{i}}{r^2 \sin^2\left (\frac{1}{2} x_i\right)} \sum_{k=1}^{r-1} k^2 \cos \left (k x_i\right).
\end{equation}
This gradient is nonzero vector. To demonstrate this, we can consider the first element of the gradient:
\begin{equation}
    \frac{-1}{r^2 \sin^2\left(\frac{1}{2}x_1\right)} \sum_{k=1}^{r-1} k^2 \cos(k x_1) < 0.
\end{equation}
Since $k x_1 = \frac{\pi}{2} \cdot \frac{k}{r}$ for $k = 1, \ldots, r-1$, it follows that $0 < k x_1 < \frac{\pi}{2}$. Therefore, each term $\cos(k x_1) > 0$.

\paragraph{To show that $F_{\mathrm{unif}}(\mathbf{x})=\frac{2r^2+1}{6}$.}
Note that, in this case, $A_o(\mathbf{x})^T A_o(\mathbf{x})=D$ given in \cref{eq-diag-D}, and $p=[1,2, \cdots, r]^T$, thus,
\begin{align}
F_{\text {unif }}(\mathbf{x}) &=\frac{1}{2}\left\langle\left[A_o(\mathbf{x})^T A_o(\mathbf{x})\right]^{-1}, pp^{T}\right\rangle =\frac{1}{2}\left\langle D^{-1}, pp^{T}\right\rangle \\
&=\frac{1}{r} \left( \sum_{k=1}^{r-1} k ^2 + \frac{1}{2}r^2\right)=\frac{1}{r} \left( \frac{(r-1)r(2r-1)}{6} + \frac{1}{2}r^2\right) =\frac{2r^2+1}{6}.
\end{align}

\section{Proof of \cref{thm-global}}\label{app-global}

We only present the proof for the case where order $d$ is odd, the even case is analogous. Consider $\Omega_k = k$ and equidistant PSR nodes $x_i=\frac{\pi}{2r}+ (i-1)\frac{\pi}{r}$ $(i=1, \ldots, r).$ 

\paragraph{To show that $\|b\|_1 = r^{d}$.}

We first show that for any odd order $d$, it holds that $\|b\|_1 = r^{d}$. From $A_o (\mathbf{x})^{T} A_o (\mathbf{x}) = D = \mathrm{Diag}\left (\tfrac{r}{2}, \ldots, \tfrac{r}{2}, r\right)$ in \cref{lem-ATA}, it follows that
\begin{equation}
    b =[A_o (\mathbf{x})^{T}]^{-1} p^{ (d)}  = A_o (\mathbf{x}) D^{-1} p^{ (d)},
\end{equation}
where $p^{ (d)} = (-1)^{\frac{d-1}{2}}[1^{d}, 2^{d}, \ldots, r^{d}]^{T}$. Then, the $i$-th entry of $b$ is
\begin{equation}
    b_i = (-1)^{\frac{d-1}{2}}\! \left[\frac{2}{r}\sum_{k=1}^{r-1} k^{d} \sin (k x_i) + r^{d-1}\sin (r x_i)\right].
\end{equation}
Let $s: =[\sin (r x_1), \ldots, \sin (r x_r)]^{T}$. For the equidistant PSR nodes $x_i$, we have $\sin (r x_i) = (-1)^{i-1}$, and $s$ is precisely the $r$-th column of $A_o (\mathbf{x})$, i.e., $s = A_o (\mathbf{x}) e_r$ with $e_r=[0,\ldots,0,1]^T$.

For odd $d$, employing the standard decomposition of trigonometric sums (repeated applications of Abel summation), one can write
\begin{equation}
    \frac{2}{r}\sum_{k=1}^{r-1} k^{d}\sin (kt) + r^{d-1}\sin (rt)
     = \sin (rt) P_{r, d} (\cos t),
\end{equation}
where $P_{r, d} (\cdot)$ is some rational function that is positive for $t \in (0, \pi)$. Substituting $t = x_i$ (so that $\sin (rt)=\pm 1$) gives $\operatorname{sgn} (b_i) = (-1)^{\frac{d-1}{2}} \operatorname{sgn} (\sin (r x_i)).$ Hence, $\operatorname{sgn} (b) = (-1)^{\frac{d-1}{2}} \operatorname{sgn} (s)= (-1)^{\frac{d-1}{2}} s$, and therefore
\begin{equation}
    \|b\|_{1} = \operatorname{sgn} (b)^{T} b
    = (-1)^{\frac{d-1}{2}}\, s^{T} b.
\end{equation}
Then,
\begin{equation}
    \begin{aligned}
    s^{T} b
    &= (A_o e_r)^{T}\bigl (A_o D^{-1} p^{ (d)}\bigr)
    = e_r^{T} (A_o^{T} A_o) D^{-1} p^{ (d)}
    = e_r^{T} D D^{-1} p^{ (d)}
    = e_r^{T} p^{ (d)}
    = (-1)^{\frac{d-1}{2}} r^{d}.
    \end{aligned}
\end{equation}
Substituting back yields $\|b\|_{1} = r^{ d}.$ This result covers the case $d=1$ in \cref{eq-d1-l1norm}.

\paragraph{To show the global optimality.} 

Write $p: = p^{ (d)}$ for simplicity. Our main idea is to formulate $\min_{\mathbf{x} \in \mathbb{R}^r} F_{\mathrm{wgt}} (\mathbf{x})$ as a linear programming problem, namely
\begin{equation}\label{eq-primal}
    \min_{b\in\mathbb R^r}\, \|b\|_1\quad\text{s.t.}\quad A_o (\mathbf x)^Tb=p. \tag{$\mathrm{P}_{\mathbf{x}}$}
\end{equation}
Its dual problem can be readily written as
\begin{equation}\label{eq-dual}
    \max_{y\in\mathbb R^r}\, p^Ty\quad\text{s.t.}\quad \|A_o (\mathbf x)\, y\|_\infty\le 1, \tag{$\mathrm{D}_{\mathbf{x}}$}
\end{equation}
where infinity norm $\|\mathbf{a}\|_{\infty}:=\max _i\left|a_i\right|$. Here, we always assume that the chosen $\mathbf{x}$ makes $A (\mathbf{x})$ invertible; the condition for this is given in \cref{thm-cond-nonsingular}. Otherwise, $F_{\mathrm{wgt}} (\mathbf{x})$ is undefined. Therefore, for any fixed $\mathbf{x}$, the vector
\begin{equation}
    b (\mathbf{x}) =[A_o (\mathbf{x})^{T}]^{-1} p
\end{equation}
is the unique feasible solution to \cref{eq-primal} and hence its optimal solution, and $F_{\mathrm{wgt}} (\mathbf{x}) = \min \, (\mathrm{P}_{\mathbf{x}}) $.
We now seek a universal lower bound that holds for all $\mathbf{x}$. In particular, let
\begin{equation}
    y_{\mathrm{lb}}: = (-1)^{\frac{d-1}{2}} [0, \ldots, 0, 1]^{T}.
\end{equation}
Then, for any $\mathbf{x}$, $ \|A _o(\mathbf{x})\, y_{\mathrm{lb}}\|_{\infty} = \max_{i} \,\left| \sin (r x_{i}) \right| \le 1,$ which implies that $y_{\mathrm{lb}}$ is a feasible solution to \cref{eq-dual}. By weak duality,
\begin{equation}
    F_{\mathrm{wgt}} (\mathbf{x})=\|b(\mathrm{x})\|_1\ge  p^{T} y_{\mathrm{lb}}  =  r^{d}.
\end{equation}
This provides a lower bound that is independent of the nodes $\mathbf{x}$. As shown earlier, when $d$ is odd, and the equidistant PSR nodes $\mathbf{x}^*$ is chosen, then $F_{\mathrm{wgt}} (\mathbf{x}^*) = r^{d}.$ We complete the proof.

\section{Error analysis for minimal derivative variance beyond the constant-variance assumption}\label{app-error-bounds}

In \cref{sec:optimal}, we derived the minimal derivative variance under the constant-variance assumption (\cref{ass-var}). In practice, however, this assumption is seldom satisfied. Nonetheless, from a theoretical development perspective, it is challenging to make substantive progress without invoking it.
In this section, we analyze the difference between the minimal variance obtained under the constant-variance assumption and the true minimal variance without this assumption. 
The resulting error bounds provide a certain degree of theoretical support for using the constant-variance assumption in our framework.

Let us return to the beginning of \cref{sec:optimal}. Recall that our goal is to estimate the first derivative of $f$ at a point $\bar{x}$, denoted by $g:=f^{\prime}(\bar{x})$. After applying EPSR, this derivative can be expressed as a linear combination of several function evaluations:
\begin{equation}
    g=\sum_{\mu=1}^{2r} \gamma_\mu f\left (x_\mu\right),
\end{equation}
where the evaluation points are defined by $x_\mu:=\bar{x}+\phi_\mu$, and the shift vector is given by $\boldsymbol{\phi}: =\left[\mathbf{x}^T, -\mathbf{x}^T\right]^T$. The corresponding coefficients are $\boldsymbol{\gamma}: =\frac{1}{2}\left[b (\mathbf{x})^T, -b (\mathbf{x})^T\right]^T$.
On a quantum device, however, only noisy estimates of the function values are available. Specifically,
\begin{equation}
    \hat{f}\left (x_\mu\right)=f\left (x_\mu\right)+\hat{\varepsilon}_\mu, \qquad \mathbb{E}\left[\hat{\varepsilon}_\mu\right]=0 .
\end{equation}
If $N_\mu$ shots are allocated to the evaluation point $x_\mu$, then $\operatorname{Var}[\hat{f}\left (x_\mu\right)]= \sigma_\mu^2 / N_\mu,$ where $\sigma_\mu^2\equiv \sigma^2\left (x_\mu\right)$ denotes the single-shot variance. Consequently, the derivative estimator
\begin{equation}
    \hat{g}=\sum_{\mu=1}^{2r} \gamma_\mu \hat{f}\left (x_\mu\right), \qquad \operatorname{Var}[\hat{g}]=\sum_{\mu=1}^{2r} |\gamma_\mu|^2 \frac{\sigma_\mu^2}{N_\mu} .
\end{equation}
We note that $x_\mu, \gamma_\mu$, and $\sigma_\mu$ all depend on the chosen shift vector $\mathbf{x}$, although this dependence is omitted for notational convenience.

We first consider the optimal shot allocation for a fixed shift vector, without invoking the constant-variance assumption.
Given a fixed $\mathbf{x}$, our goal is to determine the optimal allocation of shots subject to a total budget constraint:
\begin{equation}
    \min _{N_\mu>0} \, \operatorname{Var}[\hat{g}]=\sum_{\mu=1}^{2r}  \gamma_\mu^2 \frac{\sigma_\mu^2}{N_\mu} \quad \text { s.t. } \quad \sum_{\mu=1}^{2r} N_\mu=N_{\mathrm{total}} .
\end{equation}
This is almost identical to the proof of \cref{lem-opt-shot}, except that $a_\mu:=\left|\gamma_\mu\right|$ is replaced here by $a_\mu :=\left|\gamma_\mu\right| \sigma_\mu$.
Applying the Cauchy–Schwarz inequality yields the optimal allocation scheme (referred to as the ``$\left|\gamma_\mu\right|\sigma_\mu$-weighted-shot'' scheme):
\begin{equation}\label{eq-opt-allocation}
    N_\mu^{\star} :=N_{\mathrm{total}} \frac{\left|\gamma_\mu\right| \sigma_\mu}{\sum_{\nu=1}^{2r} \left|\gamma_\nu\right| \sigma_\nu}, \quad \text { for } \mu=1, \ldots, 2 r,
\end{equation}
with corresponding minimal variance
\begin{equation}\label{eq-opt-cost}
    V_{\mathrm{opt}} :=\sum_{\mu=1}^{2r}  \gamma_\mu^2 \frac{\sigma_\mu^2}{N_\mu^{\star}}=\frac{1}{N_{\mathrm{total}}}\left (\sum_{\mu=1}^{2r} \left|\gamma_\mu\right| \sigma_\mu\right)^2 .
\end{equation}
\begin{remark}
When the constant-variance assumption holds (i.e., all $\sigma_\mu$ are equal), the expression above reduces to the weighted-shot scheme
\begin{equation}
    N_\mu^{\mathrm{w}}: =N_{\mathrm{total}} \frac{\left|\gamma_\mu\right|}{\sum_{\nu=1}^{2 r}\left|\gamma_\nu\right|}, \quad \text { for } \mu=1, \ldots, 2 r,
\end{equation}
as introduced in \cref{sec:optimal}; see \cref{eq-var-wgt}. To distinguish the two schemes, we refer to $N_\mu^{\mathrm{w}}$ as the ``$\left|\gamma_\mu\right|$-weighted-shot'' scheme. 
We further note that, since the variances $\sigma_\mu$ are unknown in practice, the truly optimal allocation $N_\mu^{\star}$ in \cref{eq-opt-allocation} is not accessible.
\end{remark}

In an idealized setting where the variance function $x \mapsto \sigma^2 (x)$ is fully accessible, one could further minimize \cref{eq-opt-cost} with respect to the shift vector:
\begin{equation}\label{eq-1835}
    \mathbf{x}_{\star}: =\arg \min _{\mathbf{x}} \, V_{\mathrm{opt}} (\mathbf{x})=\frac{1}{N_{\mathrm{total}}}\left (\sum_{\mu=1}^{2r}\left|\gamma_\mu (\mathbf{x})\right| \sigma_\mu (\mathbf{x})\right)^2,
\end{equation}
thereby obtaining the truly optimal choice of shift parameters.
However, because the quantities $\sigma_\mu(\mathbf{x})$ are unknown in practice, directly minimizing $V_{\mathrm{opt}}(\mathbf{x})$ is infeasible. To proceed, we adopt the constant-variance (CV) assumption, $\sigma_\mu(\mathbf{x}) \approx \sigma$, independent of both $\mu$ and $\mathbf{x}$. Under this approximation, \cref{eq-1835} reduces to
\begin{equation}\label{eq-cv-pro}
    \mathbf{x}_{\mathrm{CV}}: =\arg \min _{\mathbf{x}} F_{\mathrm{wgt}}(\mathbf{x}) \triangleq\sum_{\mu=1}^{2r} \left|\gamma_\mu (\mathbf{x})\right| =\|b(\mathbf{x})\|_1,
\end{equation}
which is precisely the approach taken in \cref{sec:optimal}.
In summary, our practical procedure consists of the following steps:
\begin{enumerate}
    \item Shift selection: Solve the surrogate problem \cref{eq-cv-pro} to obtain $\mathbf{x}_{\mathrm{CV}}$.
    \item Shot allocation: Employ the ``$\left|\gamma_\mu\right|$-weighted-shot'' scheme $N_\mu^{\text {w}}: =N_{\mathrm{total}} \frac{\left|\gamma_\mu\left (\mathbf{x}_{\mathrm{CV}}\right)\right|}{\sum_\nu\left|\gamma_\nu\left (\mathbf{x}_{\mathrm{CV}}\right)\right|}.$
\end{enumerate}
The resulting estimator has variance
\begin{equation}
    V_{\mathrm{alg}}: =\operatorname{Var}\left[\hat{g} \right]\,\text { using } \mathbf{x}_{\mathrm{CV}}, N_\mu^{\mathrm{w}}.
\end{equation}
We now compare this algorithmic variance $V_{\text{alg}}$ with the true global optimum $V_{\mathrm{opt}}\left(\mathbf{x}_{\star}\right)$ defined in \cref{eq-1835}, as stated in the following theorem.

\begin{theorem}\label{thm-appr-var}
If the single-shot variance satisfies $\sigma_{\min} \leq \sigma (x) \leq \sigma_{\max},$ with $\kappa: =\sigma_{\max} / \sigma_{\min},$ then the variance of the practical estimator $V_{\mathrm{alg}}$ obeys 
\begin{equation}
    1 \leq \frac{V_{\mathrm{alg}}}{V_{\mathrm{opt}}\left (\mathbf{x}_{\star}\right)} \leq \kappa^2.
\end{equation}
\end{theorem}

\cref{thm-appr-var} shows that, the practical strategy obtained by solving the surrogate problem~\eqref{eq-cv-pro} and using the $\lvert\gamma_\mu\rvert$-weighted-shot scheme remains provably near-optimal. The lower bound $V_{\mathrm{alg}} \geq V_{\mathrm{opt}}(\mathbf{x}_\star)$ is immediate since $V_{\mathrm{opt}}(\mathbf{x}_\star)$ is the global minimum over all shift vectors and shot allocations. The upper bound
\begin{equation}
    \frac{V_{\mathrm{alg}}}{V_{\mathrm{opt}} (\mathbf{x}_\star)} \leq \kappa^2 = \frac{\sigma^2_{\max}}{\sigma^2_{\min} }
\end{equation}
quantifies the penalty incurred by ignoring variation in the single-shot variance. When the measurement noise does not vary significantly across evaluation points (that is, when $\kappa \approx 1$), the variance achieved by our practical estimator is guaranteed to be close to the global optimum. Notably, the bound is independent of the total shot budget $N_{\mathrm{total}}$, the number of evaluation points $2r$, and the specific structure of the EPSR rule. This establishes broad theoretical support for employing the constant-variance assumption in practice.
The proof of \cref{thm-appr-var} is given below.

\begin{proof}[Proof of \cref{thm-appr-var}]
Define the quantities
\begin{equation}
    S (\mathbf{x}): =\sum_\mu\left|\gamma_\mu (\mathbf{x})\right|, \qquad w_\mu (\mathbf{x}): =\frac{\left|\gamma_\mu (\mathbf{x})\right|}{S (\mathbf{x})},
\end{equation}
so that $\left\{w_\mu\right\}$ forms a discrete probability distribution. Expectations with respect to this distribution are denoted by $\mathbb{E}_{\mathbf{x}}[\cdot]$. Under the $\lvert\gamma_\mu\rvert$-weighted-shot scheme (i.e., $N_\mu^{\mathrm{w}} := N_{\mathrm{total}} \frac{\lvert\gamma_\mu\rvert}{\sum_\nu \lvert\gamma_\nu\rvert}$), the true variance at a given shift vector $\mathbf{x}$ is\
\begin{equation}
    V_{\mathrm{w}} (\mathbf{x})
:=\sum_{\mu=1}^{2 r} \left|\gamma_\mu\right|^2\frac{ \sigma_\mu^2}{N_\mu^{\mathrm{w}}}
=\frac{\sum_\nu\left|\gamma_\nu\right|}{N_{\mathrm{total}}} \sum_{\mu=1}^{2 r}\left|\gamma_\mu\right| \sigma_\mu^2 
=\frac{S (\mathbf{x})^2}{N_{\mathrm{total}}} \mathbb{E}_{\mathbf{x}}\left[\sigma^2\right].
\end{equation}
In contrast, the variance of the ideal (optimal) $\lvert\gamma_\mu\rvert \sigma_\mu$-weighted scheme (see \cref{eq-opt-allocation,eq-opt-cost}) at the same shift $\mathbf{x}$ is
\begin{equation}
    V_{\mathrm{opt}} (\mathbf{x})
=\frac{1}{N_{\mathrm{total}}}\left(\sum_{\mu=1}^{2r}\left|\gamma_\mu \right| \sigma_\mu \right)^2
=\frac{S (\mathbf{x})^2}{N_{\mathrm{total}}}\left (\mathbb{E}_{\mathbf{x}}[\sigma]\right)^2.
\end{equation}
Therefore,
\begin{equation}
V_{\mathrm{alg}}=V_{\mathrm{w}}\left (\mathbf{x}_{\mathrm{CV}}\right)=\frac{S\left (\mathbf{x}_{\mathrm{CV}}\right)^2}{N_{\mathrm{total}}} \mathbb{E}_{\mathbf{x}_{\mathrm{CV}}}\left[\sigma^2\right],
\qquad
V_{\mathrm{opt}}\left (\mathbf{x}_{\star}\right)=\frac{S\left (\mathbf{x}_{\star}\right)^2}{N_{\mathrm{total}}}\left (\mathbb{E}_{\mathbf{x}_{\star}}[\sigma]\right)^2 .
\end{equation}
The ratio is thus
\begin{equation}
    \frac{V_{\mathrm{alg}}}{V_{\mathrm{opt}}\left (\mathbf{x}_{\star}\right)}=\frac{S\left (\mathbf{x}_{\mathrm{CV}}\right)^2}{S\left (\mathbf{x}_{\star}\right)^2} \cdot \frac{\mathbb{E}_{\mathbf{x}_{\mathrm{CV}}}\left[\sigma^2\right]}{\left (\mathbb{E}_{\mathbf{x}_{\star}}[\sigma]\right)^2} .
\end{equation}
Since $\mathbf{x}_{\mathrm{CV}}$ minimizes $S(\mathbf{x})$ under the surrogate problem~\eqref{eq-cv-pro}, we have $\frac{S\left (\mathbf{x}_{\mathrm{CV}}\right)^2}{S\left (\mathbf{x}_{\star}\right)^2} \leq 1 $. Moreover, from $\sigma_{\min } \leq \sigma_\mu \leq \sigma_{\max }$, it follows that $\mathbb{E}_{\mathbf{x}_{\mathrm{CV}}}\left[\sigma^2\right] \leq \sigma_{\max}^2$ and $\mathbb{E}_{\mathbf{x}_{\star}}[\sigma] \geq \sigma_{\min}$. Consequently,
\begin{equation}
    \frac{V_{\mathrm{alg}}}{V_{\mathrm{opt}}\left (\mathbf{x}_{\star}\right)} \leq \frac{\mathbb{E}_{\mathbf{x}_{\mathrm{CV}}}\left[\sigma^2\right]}{\left (\mathbb{E}_{\mathbf{x}_{\star}}[\sigma]\right)^2} \leq \frac{\sigma_{\max}^2}{\sigma_{\min}^2}=\kappa^2.
\end{equation}
For the lower bound, note first that for any shift vector $\mathbf{x}$, we have $ V_{\mathrm{w}} (\mathbf{x}) \geq V_{\mathrm{opt}} (\mathbf{x})$, since $V_{\mathrm{opt}} (\mathbf{x})$ is the minimum possible variance achievable by optimal shot allocation. Furthermore, because $\mathbf{x}_{\star}$ minimizes $V_{\text {opt}}(\mathbf{x})$, we have $V_{\mathrm{alg}}=V_{\mathrm{w}}\left (\mathbf{x}_{\mathrm{CV}}\right) \geq V_{\mathrm{opt}}\left (\mathbf{x}_{\mathrm{CV}}\right) \geq V_{\mathrm{opt}}\left (\mathbf{x}_{\star}\right).$
This completes the proof.
\end{proof}

\end{document}